% short variant of references

%\begin{document}

%\author{ \textbf{Sergiu I. Vacaru} \thanks{%
%sergiu.vacaru@cern.ch; sergiu.vacaru@uaic.ro} \and \textsl{{\small Theory
%Division, CERN, CH-1211, Geneva 23, Switzerland \thanks{%
%associated visiting researcher} \ and }} \and \textsl{\small Rector's
%Office, Alexandru Ioan Cuza University, UAIC, } \\
%EndAName
%\textsl{\small Alexandru Lapu\c sneanu street, nr. 14, Corpus R, office 323;
%Ia\c si,\ 700057,\ Romania } \and \textbf{El\c sen Veli Veliev} \thanks{%
%elsen@kocaeli.edu.tr} \and \textsl{\small Department of Physics, Kocaeli
%University, Izmit, 41380, Turkey } \and \ \textbf{Enis Yaz\i c\i } \thanks{%
%enis.yazici@kocaeli.edu.tr} \and \textsl{\small Department of Physics,
%Kocaeli University, Izmit, 41380, Turkey } }
%\maketitle

%\tableofcontents

\documentclass{ws-ijgmmp}{}

\begin{document}

\markboth{Sergiu I. Vacaru, El\c sen Veli Veliev and Enis Yaz\i c\i }
{A Geometric Method of Constructing Exact Solutions in Modified f(R,T) Gravity with YMH Interactions}
%%%%%%%%%%%%%%%%%%%%% Publisher's Area please ignore %%%%%%%%%%%%%%%
%
\catchline{}{}{}{}{}

\title
{A GEOMETRIC METHOD OF CONSTRUCTING EXACT SOLUTIONS IN MODIFIED f(R,T) GRAVITY WITH YANG--MILLS AND HIGGS INTERACTIONS}

\author{SERGIU I. VACARU}

\address{Theory Division, CERN, CH--1211, Geneva 23, Switzerland \footnote{visiting researcher}\\ and\\
Rector's Office, Alexandru Ioan Cuza University, Alexandru Lapu\c neanu stree, nr. 14,\\ UAIC - Corpus R, office 323;\ Ia\c si, Romania, 700057
 \\   {\qquad}\\
\email{sergiu.vacaru@cern.ch; sergiu.vacaru@uaic.ro}}

\author{EL\c SEN VELI VELIEV}

\address{Department of Physics, Kocaeli University, Izmit, 41380, Turkey
 \\   {\qquad}\\
\email{elsen@kocaeli.edu.tr}
}

\author{ENIS YAZICI}

\address{Department of Physics, Kocaeli University, Izmit, 41380, Turkey
 \\   {\qquad}\\
 \email{enis.yazici@kocaeli.edu.tr}
}

\maketitle

\begin{history}
\received{(Day Month Year)}
\revised{(Day Month Year)}
\end{history}

\begin{abstract}
We show that a geometric techniques can be elaborated and applied for constructing generic off-diagonal exact solutions in $f(R,T)$--modified
gravity for systems of gravitational--Yang-Mills--Higgs equations. The corresponding classes of metrics and generalized connections are determined
by generating and integration functions which depend, in general, on all space and time coordinates and may possess, or not, Killing symmetries. For
nonholonomic constraints resulting in Levi--Civita configurations, we can extract solutions of the Einstein--Yang--Mills--Higgs equations. We show
that the constructions simplify substantially for metrics with at least one Killing vector. There are provided and analyzed some examples of exact
solutions describing generic off--diagonal modifications to black hole/ellipsoid and solitonic configurations.

 \vskip0.1cm

{\footnotesize MSC 2010:\ 83T13, 83C15 (primary);\ 53C07, 70S15 (secondary) }

{\footnotesize PACS 2008:\ 04.50.Kd, 04.90.+e, 11.15.Kc}

\end{abstract}

\keywords{
Modified gravity, EYMH equations, exact solutions, black holes, gravitational solitons
}

%%%%%%%%%%%%%
%\newpage %%%%
%%%%%%%%%%%%%
%%%%%%%%%%%%%%%%%%%
%\tableofcontents %%
%%%%%%%%%%%%%%%%%%%

%\section{Introduction}

%\vskip5pt

%\newpage

\section{Introduction}

The focus of this paper is the question if the fundamental gravitational
field and matter fields equations in theories of modified gravity can be
integrated in certain general forms with generic off--diagonal metrics and
(generalized) connections depending on all spacetime coordinates. \
Recently, a number of modified gravity theories have been elaborated with
the aim to explain the acceleration \cite{perl,riess} of universe and in
various attempts to formulate renormalizable quantum gravity models; for
reviews on $f(R)$--gravity and related theories, see \cite%
{odints1,odints2,odints3,vncbranes,covquant,vgrg}. The corresponding
classical field equations for the Einstein gravity and generalized/modified
gravity theories consist very sophisticate systems of nonlinear partial
differential equations (PDE).

The aim of this work is to present a treatment of modified gravitational
field and Yang--Mills--Higgs matter field equations, MGYMH, and\ developing
new geometric methods for integrating such PDE. Such solutions and methods
are not presented in well--known monographs on exact solutions in gravity
\cite{kramer,griff} and in Ref. \cite{halil}. We shall apply the so--called
anholonomic frame deformation method, AFDM, see \cite%
{ijgmmp,vexsol1,vexsol2,veymh} and references therein, for constructing
exact solutions of PDE in mathematical (modified) particle physics
describing generic off--diagonal and nonlinear gravitational, gauge and
scalar field interactions. A series of examples of off--diagonal solutions
for MGYMH black holes/ellpsoids and solitonic configurations will be
analyzed.

The main idea of the AFDM is to work with an auxiliary linear connection $%
\mathbf{D}$, which is necessary for decoupling the (modified) gravitational
field and matter field equations with respect to certain classes of adapted
frames with 2+2 splitting. This allows us to integrate physically important
PDE in very general off--diagonal forms. It is not possible to decouple such
PDE for the Levi--Civita connection, $\nabla ,$ and/or in arbitrary
coordinate frames. Nevertheless, we can extract exact and approximate
solutions for $\nabla $ by imposing nonholonomic constraints on $\mathbf{D}$
and corresponding generalized Ricci tensor, $\widehat{R}_{\alpha \beta },$
and scalar curvature, $\ ^{s}R.$ In this this approach, the $f(\ ^{s}R,T)$%
--modified gravity and related systems of gravitational--Yang-Mills--Higgs
(MGYMH) equations are determined by a functional $f(\ ^{s}R,T),$ where $T$ \
is the trace of the energy--momentum for matter fields. If $\mathbf{D}%
=\nabla $ and $f(\ ^{s}R,T)=R$ for the corresponding Levi--Civita
connection, $\nabla $, and corresponding curvature scalar, $R,$ we extract
solutions of the Einstein--Yang--Mills--Higgs (EYMH) equations. We shall
demonstrate that a number of physically important effects in modified
gravity theories can be equivalently modelled by off--diagonal interactions
(with certain nonholonomic constraints) in GR.

We shall proceed as follows. In Section \ref{s2} we provide geometric
preliminaries and prove the main Theorems on decoupling the modified
gravitational equations for generic off--diagonal metrics with one Killing
symmetry. There are considered also extensions to classes of "non--Killing"
solutions with coefficients depending on all set of four coordinates. We
show that the decoupling property also holds true for certain classes of
nonholonomic MGYMH systems. The theorems on generating off--diagonal
solutions for effectively polarized cosmological constants are considered in
Section \ref{s3}. Next two sections are devoted to examples of generic
off--diagonal exact solutions for MGYMH systems. In Section \ref{s4}, we
study the geometry of nonholonomic YMH vacuum deformations of black holes.
Certain examples of ellipsoid--solitonic non--Abbelian configurations and
related modifications are analyzed in Section \ref{s5}. Appendix \ref{sa}
contains some most important component formulas in the geometry of
nonholonomic manifolds 2+2 splitting. A proof of the decoupling property is
given in Appendix \ref{sb}.

\section{Nonholonomic Splitting in Modified Gravity}

\subsection{Geometric preliminaries}

\label{s2} We consider modified gravity theories, when the geometric models
of curved spacetime $\left( V,\mathbf{g,D}\right) $ are determined by a
generalized pseudo--Riemannian manifold $V$ endowed with a (modified)
Lorentzian metric $\mathbf{g}$ and a metric compatible linear connection $%
\mathbf{D,}$ when $\mathbf{Dg}=0.$ The geometric/physical data $(\mathbf{g,D)%
}$ are supposed to define solutions of certain systems of gravitational
field equations, see below section \ref{ssmgfe}, and $(\mathbf{g,D=\nabla )}$
as solutions of the Einstein equations.\footnote{%
We suppose that readers are familiar with basic concepts from differential
geometry and physical mathematics.}

\subsubsection{Metrics adapted to nonholonomic 2+2 splitting}

On a coordinate neighborhood $U\subset V,$ the local coordinates $%
u=\{u^{\alpha }=(x^{i},y^{a})\}$ are considered with conventional $2+2$
splitting into h-coordinates, $x=(x^{i}),$ and v-coordinates, $y=(y^{a}),$
for $j,k,...=1,2$ and $a,b,c,...=3,4.$ The local coordinate basis and
cobases are respectively $e_{\alpha }=\partial _{\alpha }=\partial /\partial
u^{\beta }$ and $e^{\beta }=du^{\beta },$ which can transformed into
arbitrary frames (tetrads/vierbeinds) via transforms of type $e_{\alpha
^{\prime }}=e_{\ \alpha ^{\prime }}^{\alpha }(u)e_{\alpha }$ and $e^{\alpha
^{\prime }}=e_{\alpha \ }^{\ \alpha ^{\prime }}(u)e^{\alpha }.$\footnote{%
The summation rule on repeating low--up indices will be applied if the
contrary will be not stated.} The coefficients of a vector $X$ and a metric $%
\mathbf{g}$ are defined, respectively, in the forms $X=X^{\alpha }e_{\alpha
} $ and
\begin{equation}
\mathbf{g}=g_{\alpha \beta }(u)e^{\alpha }\otimes e^{\beta },  \label{mst}
\end{equation}%
where $g_{\alpha \beta }:=\mathbf{g}(e_{\alpha },e_{\beta }).$ For our
purposes, we shall work with bases with non--integrable (equivalently,
nonholonomic/anholonomic) h--v--decomposition, when for the tangent bundle $%
TV$ $:=\bigcup\nolimits_{u}T_{u}V$ a Whitney sum
\begin{equation}
\mathbf{N}:\ TV=hV\oplus vV  \label{ncon}
\end{equation}%
is globally defined. In local form, this is defined by a nonholonomic
distribution with coefficients $N_{i}^{a}(u),$ when $\mathbf{N=}%
N_{i}^{a}(x,y)dx^{i}\otimes \partial /\partial y^{a}.$\footnote{%
For varios theories of gravity and different variables in corresponding
geometric models, a h--v--splitting defines a nonlinear connection,
N--connection, structures. On (pseudo) Riemannian manifolds to consider such
a N--connection is equivalent to a prescription of class of N--elongated
frames.}

It is possible to adapt the geometric constructions to a N--splitting (\ref%
{ncon}) if we work with \textquotedblright N--elongated\textquotedblright\
local bases (partial derivatives), $\mathbf{e}_{\nu }=(\mathbf{e}%
_{i},e_{a}), $ and cobases (differentials), $\mathbf{e}^{\mu }=(e^{i},%
\mathbf{e}^{a}),$ when
\begin{eqnarray}
\mathbf{e}_{i} &=&\partial /\partial x^{i}-\ N_{i}^{a}(u)\partial /\partial
y^{a},\ e_{a}=\partial _{a}=\partial /\partial y^{a},  \label{nader} \\
\mbox{ and  }e^{i} &=&dx^{i},\ \mathbf{e}^{a}=dy^{a}+\ N_{i}^{a}(u)dx^{i}.
\label{nadif}
\end{eqnarray}
For instance, the basic vectors (\ref{nader}) satisfy the nonholonomy
relations
\begin{equation}
\lbrack \mathbf{e}_{\alpha },\mathbf{e}_{\beta }]=\mathbf{e}_{\alpha }%
\mathbf{e}_{\beta }-\mathbf{e}_{\beta }\mathbf{e}_{\alpha }=W_{\alpha \beta
}^{\gamma }\mathbf{e}_{\gamma },  \label{nonholr}
\end{equation}%
with (antisymmetric) nontrivial anholonomy coefficients
\begin{equation}
W_{ia}^{b}=\partial _{a}N_{i}^{b},W_{ji}^{a}=\Omega _{ij}^{a}=\mathbf{e}%
_{j}\left( N_{i}^{a}\right) -\mathbf{e}_{i}(N_{j}^{a}).  \label{anhcoef}
\end{equation}%
The \textquotedblright boldface\textquotedblright\ letters are used in order
to emphasize that a spacetime model $\left( \mathbf{V},\mathbf{g,D}\right) $
and certain geometric objects and constructions are \textquotedblright
N--adapted\textquotedblright , i. e. adapted to a h--v--splitting. The
geometric objects are called distinguished (in brief, d--objects,
d--vectors, d--tensors etc) if they are adapted to the N--connection
structure via decompositions with respect to frames of type (\ref{nader})
and (\ref{nadif}). For instance, we write a d--vector as $\mathbf{X}=(hX,vX)$
and a d--metric as $\mathbf{g}=(hg,vg).$

\begin{proposition}
-\textbf{Definition.} Any spacetime metric $\mathbf{g}$ (\ref{mst}) can be
represented equivalently as a d--metric,
\begin{equation}
\ \mathbf{g}=\ g_{ij}(x,y)\ e^{i}\otimes e^{j}+\ g_{ab}(x,y)\ \mathbf{e}%
^{a}\otimes \mathbf{e}^{b},  \label{dm}
\end{equation}%
for $hg=\{\ g_{ij}\}$ and $\ vg=\{g_{ab}\}.$
\end{proposition}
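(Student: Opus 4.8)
The plan is to establish the equivalence by an explicit, reversible change of coframe---essentially ``completing the square'' in the fibre coordinates $y^{a}$. First I would rewrite the coordinate metric~(\ref{mst}) in its $2+2$ block form,
\[
\mathbf{g}=g_{ij}\,dx^{i}\otimes dx^{j}+g_{ia}\,\big(dx^{i}\otimes dy^{a}+dy^{a}\otimes dx^{i}\big)+g_{ab}\,dy^{a}\otimes dy^{b},
\]
where $g_{ij},g_{ia},g_{ab}$ are the components of $g_{\alpha \beta }$ restricted to the respective index ranges (here, temporarily, $e^{a}=dy^{a}$ in the holonomic cobasis). Assuming the vertical block $g_{ab}$ is nondegenerate on $U$, denote by $g^{ab}$ its inverse and set $N_{i}^{a}:=g^{ab}g_{bi}$; these will be the N--connection coefficients, and they define the N--elongated cobasis $e^{i}=dx^{i}$, $\mathbf{e}^{a}=dy^{a}+N_{i}^{a}dx^{i}$ exactly as in~(\ref{nadif}).

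The key computational step is then to substitute $dy^{a}=\mathbf{e}^{a}-N_{i}^{a}dx^{i}$ into the block form and collect terms. The mixed $dx\otimes \mathbf{e}$ contributions come out proportional to $g_{ia}-g_{ab}N_{i}^{b}$, which vanishes by the very definition of $N_{i}^{a}$; the purely vertical part retains the coefficients $g_{ab}$; and the purely horizontal part picks up the Schur complement, so that~(\ref{dm}) holds with $hg=\{\,g_{ij}-g_{ab}N_{i}^{a}N_{j}^{b}\,\}$ (which we then relabel as $g_{ij}$) and $vg=\{\,g_{ab}\,\}$. For the converse direction one starts from arbitrary data $(g_{ij},g_{ab},N_{i}^{a})$ together with a cobasis of type~(\ref{nadif}) and expands $\mathbf{e}^{a}\otimes \mathbf{e}^{b}$, recovering a metric of the form~(\ref{mst}) with holonomic components $g_{ij}+g_{ab}N_{i}^{a}N_{j}^{b}$, $\ g_{ab}N_{i}^{a}$, $\ g_{ab}$; one should also verify that the frames $\mathbf{e}_{\nu }$ of~(\ref{nader}) and $\mathbf{e}^{\mu }$ of~(\ref{nadif}) are mutually dual, $\mathbf{e}^{\mu }(\mathbf{e}_{\nu })=\delta _{\nu }^{\mu }$, so that~(\ref{dm}) genuinely re--expresses the \emph{same} tensor $\mathbf{g}$ in a new anholonomic coframe rather than a different object. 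Taken together, these two directions give the asserted equivalence, and they also exhibit the "definition" part: the data $(hg,vg,\mathbf{N})$ is precisely what a d--metric is.

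I expect the only genuine subtlety to be the nondegeneracy of the vertical block $g_{ab}$ (equivalently, after the rearrangement above, of the horizontal Schur complement): where $\det (g_{ab})=0$ the coefficients $N_{i}^{a}$ are ill defined, so the bald phrase ``any spacetime metric'' must be read locally and up to the choice of the $2+2$ split. This is not a real obstacle---given any point of $V$, a pseudo--Riemannian $\mathbf{g}$ admits on some neighbourhood $U$ a $2+2$ coordinate (or frame) splitting for which both the $h$-- and $v$--blocks are nondegenerate (take, e.g., a vertical distribution carrying a Lorentzian signature and a spacelike horizontal one, orthogonally complementary at a point, and extend), so the d--metric presentation~(\ref{dm}) exists on $U$. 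One should append the remark that $N_{i}^{a}$, and hence the pair $(hg,vg)$, is not unique, the remaining freedom being that of frame transformations compatible with the $h$--$v$--splitting; once a splitting is fixed, the representation is fixed. With these observations the Proposition--Definition, including the well--posedness of the notations $hg=\{g_{ij}\}$ and $vg=\{g_{ab}\}$, follows.
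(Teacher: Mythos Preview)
Your argument is correct and follows essentially the same route as the paper's own proof: both exhibit the $2+2$ block parameterization~(\ref{ansatz}) of the coordinate metric and then regroup in terms of the N--elongated cobasis~(\ref{nadif}) to recover the d--metric~(\ref{dm}). Your version is more explicit about the Schur--complement mechanism (deriving $N_{i}^{a}=g^{ab}g_{bi}$ rather than leaving the $N_{i}^{a}$ as ``prescribed'' and absorbing the rest into a frame transform) and about the nondegeneracy caveat, but the underlying idea is identical.
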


\begin{proof}
Via frame transforms, $e_{\alpha }=e_{\ \alpha }^{\alpha ^{\prime
}}(x,y)e_{\alpha ^{\prime }},$ $\underline{g}_{\alpha \beta }=e_{\ \alpha
}^{\alpha ^{\prime }}e_{\ \beta }^{\beta ^{\prime }}\ \underline{g}_{\alpha
^{\prime }\beta ^{\prime }},$ any metric $\mathbf{g}=\underline{g}_{\alpha
\beta }du^{\alpha }\otimes du^{\beta }$ can be parameterized in the form
\begin{equation}
\underline{g}_{\alpha \beta }=\left[
\begin{array}{cc}
g_{ij}+N_{i}^{a}N_{j}^{b}g_{ab} & N_{j}^{e}g_{ae} \\
N_{i}^{e}g_{be} & g_{ab}%
\end{array}%
\right]  \label{ansatz}
\end{equation}%
for any prescribed set of coefficients $N_{i}^{a}.$ A metric $\mathbf{g}$ is
generic off--diagonal if the matrix (\ref{ansatz}) can not be diagonalized
via coordinate transforms. For such a metric, the anholonomy coefficients (%
\ref{anhcoef}) are not zero. Regrouping the terms for co--bases (\ref{nadif}%
), we obtain the d--metric (\ref{dm}). Parameterizations of this type are
used in Kaluza--Klein gravity when $N_{i}^{a}(x,y)=\Gamma _{bi}^{a}(x)y^{a}$
and $y^{a}$ are \textquotedblright compactified\textquotedblright\
extra--dimensions coordinates, or in Finsler gravity theories, see details
in \cite{vexsol2}. In this work, we restrict our considerations only to four
dimensional (4--d) gravity theories. $\square $
\end{proof}

\vskip5pt

We shall study exact solutions with metrics which via frame/ coordinate
transforms can be related to a d--metric $\mathbf{g}$ (\ref{dm}) and/or
ansatz (\ref{ansatz}) and written in a form with separation of
v--coordinates, $y^{3}$ and $y^{4},$ and nontrivial vertical conformal
transforms,
\begin{eqnarray}
\mathbf{g} &=&g_{i}dx^{i}\otimes dx^{i}+\omega ^{2}h_{a}\underline{h}_{a}%
\mathbf{e}^{a}\otimes \mathbf{e}^{a},  \label{ans1} \\
\mathbf{e}^{3} &=&dy^{3}+\left( w_{i}+\underline{w}_{i}\right) dx^{i},\
\mathbf{e}^{4}=dy^{4}+\left( n_{i}+\underline{n}_{i}\right) dx^{i},  \nonumber
\end{eqnarray}%
where%
\begin{eqnarray}
g_{i} &=&g_{i}(x^{k}),g_{a}=\omega ^{2}(x^{i},y^{c})\ h_{a}(x^{k},y^{3})%
\underline{h}_{a}(x^{k},y^{4}),  \nonumber  \\
N_{i}^{3} &=&w_{i}(x^{k},y^{3})+\underline{w}%
_{i}(x^{k},y^{4}),N_{i}^{4}=n_{i}(x^{k},y^{3})+\underline{n}%
_{i}(x^{k},y^{4}),  \label{paramdcoef}
\end{eqnarray}%
are functions of necessary smooth class which will be defined in a form to
generate solutions of certain fundamental gravitational and matter field
equations.\footnote{%
There is not summation on repeating \textquotedblright
low\textquotedblright\ indices $a$ in formulas (\ref{paramdcoef}) but such a
summation is considered for crossing \textquotedblright
up--low\textquotedblright\ indices $i$ and $a$ in (\ref{ans1})). We shall
underline a function if it positively depends on $y^{4}$ but not on $y^{3}$
and write, for instance, $\underline{n}_{i}(x^{k},y^{4})$.}

\subsubsection{N--adapted connections}

Linear connection structures can be introduced on a generalized spacetime $%
\mathbf{V}$ in form which is N--adapted to a h--v--splitting (\ref{ncon}),
or not.

\begin{definition}
A d--connection $\mathbf{D}=(hD,vD)$ is a linear connection preserving under
parallelism the N--connection structure.
\end{definition}

Any d--connection defines a covariant N--adapted derivative $\mathbf{D}_{%
\mathbf{X}}\mathbf{Y}$ of a d--vector field $\mathbf{Y}$ in the direction of
a d--vector $\mathbf{X}.$ With respect to N--adapted frames (\ref{nader})
and (\ref{nadif}), any $\mathbf{D}_{\mathbf{X}}\mathbf{Y}$ can be computed
as in GR but with the coefficients of the Levi--Civita connection
substituted by $\mathbf{D}=\{\mathbf{\Gamma }_{\ \alpha \beta }^{\gamma
}=(L_{jk}^{i},L_{bk}^{a},C_{jc}^{i},C_{bc}^{a})\}.$

Any d--connection is characterized by three fundamental geometric objects:
the d--torsion field which is (by definition)
\begin{equation*}
\mathcal{T}(\mathbf{X,Y}):=\mathbf{D}_{\mathbf{X}}\mathbf{Y}-\mathbf{D}_{%
\mathbf{Y}}\mathbf{X}-[\mathbf{X,Y}];
\end{equation*}%
the d--curvature field,
\begin{equation*}
\mathcal{R}(\mathbf{X,Y}):=\mathbf{D}_{\mathbf{X}}\mathbf{D}_{\mathbf{Y}}-%
\mathbf{D}_{\mathbf{Y}}\mathbf{D}_{\mathbf{X}}-\mathbf{D}_{\mathbf{[X,Y]}};
\end{equation*}%
and the nonmetricity field is $\mathcal{Q}(\mathbf{X}):=\mathbf{D}_{\mathbf{X%
}}\mathbf{g.}$ We compute the N--adapted coefficients of these geometric
objects by introducing $\mathbf{X}=\mathbf{e}_{\alpha }$ and $\mathbf{Y}=%
\mathbf{e}_{\beta },$ defined by (\ref{nader}), and $\mathbf{D}=\{\mathbf{%
\Gamma }_{\ \alpha \beta }^{\gamma }\}$ into above formulas,
\begin{eqnarray*}
\mathcal{T} &=&\{\mathbf{T}_{\ \alpha \beta }^{\gamma }=\left( T_{\
jk}^{i},T_{\ ja}^{i},T_{\ ji}^{a},T_{\ bi}^{a},T_{\ bc}^{a}\right) \}; \\
\mathcal{R} &\mathbf{=}&\mathbf{\{R}_{\ \beta \gamma \delta }^{\alpha }%
\mathbf{=}\left( R_{\ hjk}^{i}\mathbf{,}R_{\ bjk}^{a}\mathbf{,}R_{\ hja}^{i}%
\mathbf{,}R_{\ bja}^{c}\mathbf{,}R_{\ hba}^{i},R_{\ bea}^{c}\right) \mathbf{%
\};}\ \mathcal{Q}=\mathbf{\{Q}_{\ \alpha \beta }^{\gamma }\}.
\end{eqnarray*}%
The formulas for $\mathbf{T}_{\ \alpha \beta }^{\gamma }$ and $\mathbf{R}_{\
\beta \gamma \delta }^{\alpha }$ are similar, respectively, to (\ref{dtors})
and (\ref{dcurv}), see Appendix, but without "hats" on geometric objects.

\begin{theorem}
-\textbf{Definition. }There is a canonical d--connection $\widehat{\mathbf{D}%
}$ uniquely determined by any given $\mathbf{g}=\{g_{\alpha \beta }\}$ for a
prescribed $\mathbf{N}=\{N_{i}^{a}\}$ which is metric compatible, $\widehat{%
\mathbf{D}}\mathbf{g=0,}$ and with zero h--torsion, $h\widehat{\mathcal{T}}%
=\{\widehat{T}_{\ jk}^{i}\}=0,$ and zero v--torsion, $v\widehat{\mathcal{T}}%
=\{\widehat{T}_{\ bc}^{a}\}=0.$
\end{theorem}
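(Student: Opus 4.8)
The plan is to mimic the classical uniqueness argument for the Levi--Civita connection, but performed in the N--adapted frame $(\mathbf{e}_\alpha,\mathbf{e}^\alpha)$ of \eqref{nader}--\eqref{nadif} rather than in a coordinate frame, and with the weaker torsion constraints $\widehat T^i_{\ jk}=0$ and $\widehat T^a_{\ bc}=0$ in place of total torsion--freeness. First I would write a general d--connection as $\widehat{\mathbf{\Gamma}}^\gamma_{\ \alpha\beta}=\mathbf{\Gamma}^\gamma_{\ \alpha\beta}+\mathbf{Z}^\gamma_{\ \alpha\beta}$, where $\mathbf{\Gamma}$ is any fixed reference d--connection (e.g. the canonical one built formally from the Christoffel-like combinations of $g_{ij},g_{ab},N^a_i$) and $\mathbf{Z}$ is a d--tensor; the h--v--block structure $(L^i_{jk},L^a_{bk},C^i_{jc},C^a_{bc})$ is forced by the requirement that $\widehat{\mathbf D}=(h\widehat D,v\widehat D)$ preserve the splitting \eqref{ncon}, so the unknowns are exactly these four blocks.

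Next I would impose the two structural conditions. Metric compatibility $\widehat{\mathbf D}\mathbf g=0$ splits, by d--tensor character, into four equations: the purely horizontal block $\mathbf e_k g_{ij}-L^l_{ik}g_{lj}-L^l_{jk}g_{il}=0$, the purely vertical block $e_c g_{ab}-C^e_{ac}g_{eb}-C^e_{bc}g_{ae}=0$, and two mixed blocks $\mathbf e_k g_{ab}-L^e_{ak}g_{eb}-L^e_{bk}g_{ae}=0$ and $e_c g_{ij}-C^l_{ic}g_{lj}-C^l_{jc}g_{il}=0$. The vanishing h-- and v--torsions, read off from the analogues of \eqref{dtors}, give $L^i_{jk}=L^i_{kj}$ (since $\widehat T^i_{\ jk}=L^i_{jk}-L^i_{kj}$, there being no $W$--contribution in the $hh$--sector) and $C^a_{bc}=C^a_{cb}$. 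Then the standard cyclic-permutation ("Koszul") trick applied separately within the h--sector and within the v--sector produces the usual Christoffel-type formulas: $L^i_{jk}=\tfrac12 g^{ir}(\mathbf e_j g_{kr}+\mathbf e_k g_{jr}-\mathbf e_r g_{jk})$ and $C^a_{bc}=\tfrac12 g^{ad}(e_b g_{cd}+e_c g_{bd}-e_d g_{bc})$, which are uniquely determined. The remaining two mixed coefficients $L^a_{bk}$ and $C^i_{jc}$ are then fixed directly from the two mixed metric-compatibility equations together with the mixed-torsion expressions involving the anholonomy coefficients $W^\gamma_{\alpha\beta}$ of \eqref{anhcoef} (concretely $L^a_{bk}=e_b N^a_k+\tfrac12 g^{ac}(\mathbf e_k g_{bc}-g_{dc}e_b N^d_k-g_{db}e_c N^d_k)$ and $C^i_{jc}=\tfrac12 g^{ik}e_c g_{jk}$), since those equations are linear and nondegenerate in these unknowns once $L^i_{jk},C^a_{bc}$ are known. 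Uniqueness follows because at each stage the relevant linear system has the invertible Gram matrices $g_{ij}$ and $g_{ab}$ as coefficients; existence follows because the explicit formulas just written define a genuine d--connection satisfying all the constraints (one checks the transformation law is that of a connection, the extra pieces being d--tensorial).

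The main obstacle is bookkeeping rather than conceptual: one must be careful that the naive cyclic-sum trick still closes inside the h-- and v--sectors even though the frame is anholonomic, i.e. that the unwanted $W^\gamma_{\alpha\beta}$ terms either cancel in the symmetric combinations or are precisely absorbed into the mixed coefficients $L^a_{bk},C^i_{jc}$ — this is where the specific choice of which torsion components to set to zero (only $\widehat T^i_{\ jk}$ and $\widehat T^a_{\ bc}$, not the mixed ones) is essential, and it is the step that should be verified with care. The concrete component formulas, which coincide with those recorded in the Appendix (cf.\ the "hatted" objects referenced after \eqref{anhcoef}), can then simply be cited rather than re-derived in the main text.
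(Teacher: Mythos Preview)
Your Koszul--type derivation is more constructive than the paper's argument, which simply records the explicit coefficients (\ref{cdc}) and asserts that a direct computation in the N--adapted frames verifies $\widehat{\mathbf D}\mathbf g=0$, $\widehat T^{\,i}_{\ jk}=0$ and $\widehat T^{\,a}_{\ bc}=0$. Your treatment does correctly and uniquely produce the pure blocks $\widehat L^{i}_{jk}$ and $\widehat C^{a}_{bc}$.

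The gap is in the mixed blocks. The two ``mixed'' metric--compatibility equations $e_{c}g_{ij}=C^{l}_{ic}g_{lj}+C^{l}_{jc}g_{il}$ and $\mathbf e_{k}g_{ab}=L^{e}_{ak}g_{eb}+L^{e}_{bk}g_{ae}$ determine only the parts of $C_{ijc}:=g_{il}C^{l}_{jc}$ and $L_{abk}:=g_{ae}L^{e}_{bk}$ that are \emph{symmetric} in $(i,j)$, respectively in $(a,b)$; the antisymmetric parts are completely unconstrained by the hypotheses, because the theorem imposes no condition whatsoever on the mixed torsions $\widehat T^{\,i}_{\ ja}=\widehat C^{i}_{ja}$ and $\widehat T^{\,c}_{\ aj}=\widehat L^{c}_{aj}-e_{a}(N^{c}_{j})$. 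Your appeal to ``the mixed--torsion expressions involving $W^{\gamma}_{\alpha\beta}$'' is therefore empty here, and the linear system you describe is degenerate, not nondegenerate: one may add to (\ref{cdc}) any $\delta C^{i}_{jc}$ with $\delta C_{ijc}=-\delta C_{jic}$ and any $\delta L^{a}_{bk}$ with $\delta L_{abk}=-\delta L_{bak}$ and still satisfy metric compatibility together with $\widehat T^{\,i}_{\ jk}=\widehat T^{\,a}_{\ bc}=0$. This is precisely why the statement is labelled a \emph{Theorem--Definition}: the formulas (\ref{cdc}) are part of the \emph{definition} of the canonical $\widehat{\mathbf D}$, and what is asserted is only that this particular connection enjoys the listed properties --- the ``straightforward verification'' the paper invokes. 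Your existence argument and your uniqueness for $\widehat L^{i}_{jk}$ and $\widehat C^{a}_{bc}$ are fine; the uniqueness claim for $L^{a}_{bk}$ and $C^{i}_{jc}$ from the stated hypotheses alone should be dropped.
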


\begin{proof}
It follows from a straightforward verification in N--adapted frames that $%
\widehat{\mathbf{D}}=\{$ $\widehat{\mathbf{\Gamma }}_{\ \alpha \beta
}^{\gamma }=(\widehat{L}_{jk}^{i},\widehat{L}_{bk}^{a},\widehat{C}_{jc}^{i},%
\widehat{C}_{bc}^{a})\}$ \ with coefficients
\begin{eqnarray}
\widehat{L}_{jk}^{i} &=&\frac{1}{2}g^{ir}\left( \mathbf{e}_{k}g_{jr}+\mathbf{%
e}_{j}g_{kr}-\mathbf{e}_{r}g_{jk}\right) ,  \label{cdc} \\
\widehat{L}_{bk}^{a} &=&e_{b}(N_{k}^{a})+\frac{1}{2}g^{ac}\left( \mathbf{e}%
_{k}g_{bc}-g_{dc}\ e_{b}N_{k}^{d}-g_{db}\ e_{c}N_{k}^{d}\right) ,  \nonumber  \\
\widehat{C}_{jc}^{i} &=&\frac{1}{2}g^{ik}e_{c}g_{jk},\ \widehat{C}_{bc}^{a}=%
\frac{1}{2}g^{ad}\left( e_{c}g_{bd}+e_{b}g_{cd}-e_{d}g_{bc}\right) ,  \nonumber
\end{eqnarray}
computed for a d--metric $\mathbf{g}=[g_{ij},g_{ab}]$ \ (\ref{dm}). $\square
$
\end{proof}

\vskip5pt

Every metric field $\mathbf{g}$ naturally and completely defines a
Levi--Civita connection $D=\nabla =\{\Gamma _{\ \alpha \beta }^{\gamma }\}$
if and only if there are satisfied the metric compatibility, $\ ^{\nabla }%
\mathcal{Q}(X)=\nabla _{\mathbf{X}}\mathbf{g}=0,$ and zero torsion, $\
^{\nabla }\mathcal{T}=0,$ conditions.\footnote{%
We emphasize that for geometric/physical objects defined by $\nabla $ we do
not use "boldface" symbols because this linear connection does not preserve
under parallelism and general frame/coordinate transforms a N--splitting (%
\ref{ncon}).} \ For $\widehat{\mathbf{D}}$, we can find a unique distortion
relation
\begin{equation}
\nabla =\widehat{\mathbf{D}}+\widehat{\mathbf{Z}},  \label{distrel}
\end{equation}%
where both linear connections $\nabla $ and $\widehat{\mathbf{D}}$ and the
distortion tensor $\widehat{\mathbf{Z}}$ are completely defined by $\mathbf{g%
}=\{g_{\alpha \beta }\}$ for a prescribed $\mathbf{N}=\{N_{i}^{a}\},$ see
details in \cite{ijgmmp,vexsol1,vexsol2} and Theorem \ref{thadist}. With
respect to N--adapted frames, the formula (\ref{distrel}) is given by (\ref%
{distrel1}) when the coefficients of the distortion tensor $\widehat{\mathbf{%
Z}}$ is determined by values (\ref{deft}). It is important to emphasize that
such formulas define a unique deformation of the Christoffel symbols for $%
\nabla $ into the corresponding coefficients (\ref{cdc}) of $\widehat{%
\mathbf{D}}$ because all such values are completely defined by the
coefficients of a metric tensor (\ref{ansatz}) (equivalently, (\ref{dm})).
In N--adapted form, the coefficients of the Levi--Civita connection (i.e.
the second type Christoffel symbols) can be computed by taking respective
sums of (\ref{cdc}) and (\ref{deft}).

The curvature, Ricci and Einstein tensors of the connections $\nabla $ and $%
\widehat{\mathbf{D}}$ are computed respectively by standard formulas (\ref%
{dcurv}), (\ref{driccic}) and (\ref{enstdt}) (see details in Appendix \ref%
{sa}). For instance, the curvature $\ ^{\nabla }\mathcal{R}$ for $\nabla $
can be computed as a distortion of $\widehat{\mathcal{R}}=\{\widehat{\mathbf{%
R}}_{\ \beta \gamma \delta }^{\alpha }\}$ (\ref{dcurv}) for $\widehat{%
\mathbf{D}}$ using the distortion relation for connections (\ref{distrel})
(and (\ref{distrel1})). The properties of the Ricci tensor of $\widehat{%
\mathbf{D}}$ are stated by Proposition \ref{approp}, see also formulas (\ref%
{riccie}) and (\ref{riccid}). Here we note that, in general, the Ricci
tensor $\widehat{\mathbf{R}}_{\ \beta \gamma }$ is not not symmetric because
of nontrivial nonholonomically induced torsion $\widehat{\mathcal{T}}.$
Nevertheless, such nonsymetric contributions do not result in nonsymmetric
metrics if we consider nonholonomic deformations determined by distortions (%
\ref{distrel}) computed by symmetric metrics when nonysmmetric components of
$\widehat{\mathbf{R}}_{\ \beta \gamma }$ are zero.

We conclude that all geometric constructions and physical theories derived
for the geometric data $\left( \mathbf{g,}\nabla \right) $ can be
equivalently modeled by the data $(\mathbf{g,N,}\widehat{\mathbf{D}})$
because of a unique distortion relation (\ref{distrel}). If we work with $%
\widehat{\mathbf{D}},$ we have non--trivial nonholonomically induced torsion
coefficients (\ref{dtors}). The meaning of such a torsion structure is
different from that, for instance, in Riemann--Cartan geometry when certain
additional spin like sources are considered for additional algebraic type
field equations for torsion fields. In our approach, $\widehat{\mathcal{T}}$
\ is completely defined by the metric structure when a N--splitting is
prescribed.

\subsection{Nonholonomic MG field equations}

\label{ssmgfe}We study modified gravity theories derived for the action%
\begin{equation}
S=\frac{1}{16\pi }\int \delta u\sqrt{|\mathbf{g}_{\alpha \beta }|}[f(\ ^{s}%
\widehat{R},T)+\ ^{m}L],  \label{act}
\end{equation}%
where $\ ^{m}L$ is the matter Lagrangian density; the stress--energy tensor
of matter is computed via variation on inverse metric tensor, $\ ^{m}\mathbf{%
T}_{\alpha \beta }=-\frac{2}{\sqrt{|\mathbf{g}_{\mu \nu }|}}\frac{\delta (%
\sqrt{|\mathbf{g}_{\mu \nu }|}\ ^{m}L)}{\delta \mathbf{g}^{\alpha \beta }}$,
trace $\ T:=\mathbf{g}^{\alpha \beta }\ ^{m}\mathbf{T}_{\alpha \beta },$ and
$f(\ ^{s}\widehat{R},\ ^{m}T)$ is an arbitrary functional on $\ ^{s}\widehat{%
R}$ (\ref{sdcurv}) and $\ ^{m}T.$ The volume form $\delta u$ is determined
by a d--metric $\mathbf{g}$ (\ref{dm}) in order to derive variational
formulas in N--adapted form.  For simplicity, we can assume that in
cosmological models the stress--energy tensor of the matter is given by
\begin{equation}
\ ^{m}\mathbf{T}_{\alpha \beta }=(\rho +p)\mathbf{v}_{\alpha }\mathbf{v}%
_{\beta }-p\mathbf{g}_{\alpha \beta },  \label{emt}
\end{equation}%
where in the approximation of perfect fluid matter $\rho $ is the energy
density, $p$ is the pressure and the four--velocity $\mathbf{v}_{\alpha }$
is subjected to the conditions $\mathbf{v}_{\alpha }\mathbf{v}^{\alpha }=1$
and $\mathbf{v}^{\alpha }\widehat{\mathbf{D}}_{\beta }\mathbf{v}_{\alpha }=0,
$ for $\ ^{m}L=-p$ in a corresponding local frame. We also consider
approximations of type%
\begin{equation}
f(\ ^{s}\widehat{R},\ ^{m}T)=\ ^{1}f(\ ^{s}\widehat{R})+\ ^{2}f(\ ^{m}T)
\label{functs}
\end{equation}%
and denote by $\ ^{1}F(\ ^{s}\widehat{R}):=\partial \ ^{1}f(\ ^{s}\widehat{R}%
)/\partial \ ^{s}\widehat{R}$ and $\ ^{2}F(\ ^{m}T):=\partial \ ^{2}f(\
^{m}T)/\partial \ \ ^{m}T.$

In this work, we consider effective sources parameterized with respect to
N--adapted frames in the form
\begin{equation}
\mathbf{\Upsilon }_{\beta \delta }=\ ^{ef}\eta \ G\ \ ^{m}\mathbf{T}_{\beta
\delta }+\ ^{ef}\mathbf{T}_{\beta \delta },  \label{effectsource}
\end{equation}%
with effective polarization of cosmological constant $\ ^{ef}\eta =[1+\
^{2}F/8\pi ]/\ ^{1}F$ and where the $f$--modification of the
energy--momentum tensor is computed as an additional effective source
\begin{equation}
\ ^{ef}\mathbf{T}_{\beta \delta }=[\frac{1}{2}(\ ^{1}f-\ ^{1}F\ ^{s}\widehat{%
R}+2p\ ^{2}F+\ ^{2}f)\mathbf{g}_{\beta \delta }-(\mathbf{g}_{\beta \delta }\
\widehat{\mathbf{D}}_{\alpha }\widehat{\mathbf{D}}^{\alpha }-\widehat{%
\mathbf{D}}_{\beta }\widehat{\mathbf{D}}_{\delta })\ ^{1}F]/\ ^{1}F.
\label{efm}
\end{equation}

We can postulate in geometric form or prove following a variational approach:

\begin{theorem}
The gravitational field equations for a modified gravity model (\ref{act})
with $f$--functional (\ref{functs}) and perfect fluid stress--energy tensor (%
\ref{emt}) can be re--written equivalently using the canonical d--connection
$\widehat{\mathbf{D}},$
\begin{equation}
\widehat{\mathbf{R}}_{\ \beta \delta }-\frac{1}{2}\mathbf{g}_{\beta \delta
}\ ^{s}R=\mathbf{\Upsilon }_{\beta \delta },  \label{cdeinst}
\end{equation}%
where the source d--tensor $\mathbf{\Upsilon }_{\beta \delta }$ is such way
constructed that $\mathbf{\Upsilon }_{\beta \delta }\rightarrow 8\pi
GT_{\beta \delta }$ \ for $\widehat{\mathbf{D}}\rightarrow \nabla ,$ where $%
T_{\beta \delta }$ is the energy--momentum tensor in GR with coupling
gravitational constant $G.$
\end{theorem}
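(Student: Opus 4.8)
The plan is to \emph{derive} (\ref{cdeinst}) by an N--adapted variational calculus applied directly to the action (\ref{act}) --- the alternative being simply to postulate it in geometric form --- and then to verify the Levi--Civita limit. First I would vary $S$ with respect to the inverse d--metric $\mathbf{g}^{\alpha\beta}$ at fixed N--connection $\mathbf{N}=\{N_i^a\}$, so that the volume element $\delta u$ and every covariant derivative that appears is that of the canonical d--connection $\widehat{\mathbf{D}}$ of the Theorem--Definition above. Using $\delta\sqrt{|\mathbf{g}_{\mu\nu}|}=-\frac{1}{2}\sqrt{|\mathbf{g}_{\mu\nu}|}\,\mathbf{g}_{\alpha\beta}\,\delta\mathbf{g}^{\alpha\beta}$, the additive split (\ref{functs}), the chain rule with ${}^{1}F$ and ${}^{2}F$, and the definitions of ${}^{m}\mathbf{T}_{\alpha\beta}$ and of its trace $T$, the variation of $f({}^{s}\widehat{R},T)$ produces, schematically,
\[
\delta\!\big(\sqrt{|\mathbf{g}|}\,f\big)=\sqrt{|\mathbf{g}|}\,\Big[{}^{1}F\,\widehat{\mathbf{R}}_{\beta\delta}-\tfrac{1}{2}({}^{1}f+{}^{2}f)\mathbf{g}_{\beta\delta}+\big(\mathbf{g}_{\beta\delta}\,\widehat{\mathbf{D}}_{\alpha}\widehat{\mathbf{D}}^{\alpha}-\widehat{\mathbf{D}}_{\beta}\widehat{\mathbf{D}}_{\delta}\big){}^{1}F+{}^{2}F\,\mathbf{\Theta}_{\beta\delta}\Big]\,\delta\mathbf{g}^{\beta\delta}
\]
modulo a $\widehat{\mathbf{D}}$--total divergence that integrates to zero, where $\mathbf{\Theta}_{\beta\delta}:=\mathbf{g}^{\mu\nu}\,\delta({}^{m}\mathbf{T}_{\mu\nu})/\delta\mathbf{g}^{\beta\delta}$; for the perfect--fluid tensor (\ref{emt}) with ${}^{m}L=-p$ one computes $\mathbf{\Theta}_{\beta\delta}=-2\,{}^{m}\mathbf{T}_{\beta\delta}-p\,\mathbf{g}_{\beta\delta}$. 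Setting $\delta S=0$ yields the $f({}^{s}\widehat{R},T)$ field equations with the bare matter source ${}^{m}\mathbf{T}_{\beta\delta}$ on the right.

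Next I would cast this into Einstein--like form: add and subtract $\frac{1}{2}\,{}^{1}F\,{}^{s}\widehat{R}\,\mathbf{g}_{\beta\delta}$, divide through by ${}^{1}F$, and write ${}^{s}R:={}^{s}\widehat{R}$, the scalar curvature of (\ref{sdcurv}). The terms built only from $\widehat{\mathbf{R}}_{\beta\delta}$ and ${}^{s}R$ collapse to the left side $\widehat{\mathbf{R}}_{\beta\delta}-\frac{1}{2}\mathbf{g}_{\beta\delta}\,{}^{s}R$ of (\ref{cdeinst}), while the remaining, purely $f$--generated pieces --- namely $\frac{1}{2}({}^{1}f-{}^{1}F\,{}^{s}\widehat{R}+2p\,{}^{2}F+{}^{2}f)\mathbf{g}_{\beta\delta}/{}^{1}F$ together with $-(\mathbf{g}_{\beta\delta}\widehat{\mathbf{D}}_{\alpha}\widehat{\mathbf{D}}^{\alpha}-\widehat{\mathbf{D}}_{\beta}\widehat{\mathbf{D}}_{\delta}){}^{1}F/{}^{1}F$ --- reproduce exactly ${}^{ef}\mathbf{T}_{\beta\delta}$ of (\ref{efm}); and the matter term, once the ${}^{2}F\,\mathbf{\Theta}$ contribution is split between the effective coupling and ${}^{ef}\mathbf{T}_{\beta\delta}$, becomes ${}^{ef}\eta\,G\,{}^{m}\mathbf{T}_{\beta\delta}$ with ${}^{ef}\eta=[1+{}^{2}F/8\pi]/{}^{1}F$. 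This is precisely the effective source (\ref{effectsource}), so (\ref{cdeinst}) holds; since the rearrangement is a term--by--term identity, the rewriting is an equivalence, not a mere definition.

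Finally I would check the limit $\widehat{\mathbf{D}}\to\nabla$. Imposing nonholonomic constraints that annihilate the induced torsion $\widehat{\mathcal{T}}$ forces the distortion $\widehat{\mathbf{Z}}\to0$ in the relation $\nabla=\widehat{\mathbf{D}}+\widehat{\mathbf{Z}}$ (\ref{distrel}) (Theorem \ref{thadist}), whence ${}^{s}\widehat{R}\to R$ and $\widehat{\mathbf{R}}_{\beta\delta}\to R_{\beta\delta}$ (its nonsymmetric part dropping out, as noted after Proposition \ref{approp}). Taking $f({}^{s}R,T)=R$ gives ${}^{1}f={}^{s}\widehat{R}$, ${}^{1}F=1$, ${}^{2}f={}^{2}F=0$, so ${}^{ef}\eta\to1$ and ${}^{ef}\mathbf{T}_{\beta\delta}\to0$ --- its $\frac{1}{2}({}^{1}f-{}^{1}F\,{}^{s}\widehat{R})$ prefactor vanishes and the $\widehat{\mathbf{D}}$--derivatives act on a constant; hence $\mathbf{\Upsilon}_{\beta\delta}\to8\pi G\,T_{\beta\delta}$, the $8\pi$ being the one carried in ${}^{ef}\eta$ and fixed by the coupling normalization in (\ref{act}), and (\ref{cdeinst}) reduces to the Einstein equations for $(\mathbf{g},\nabla)$.

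The hard part is the variation itself: since $\widehat{\mathbf{D}}$ carries a nonzero nonholonomically induced torsion and the frames (\ref{nader}) are anholonomic (\ref{anhcoef}), integrating by parts the second--order terms $\mathbf{g}_{\beta\delta}\widehat{\mathbf{D}}_{\alpha}\widehat{\mathbf{D}}^{\alpha}\delta\mathbf{g}^{\beta\delta}$ and $\widehat{\mathbf{D}}_{\beta}\widehat{\mathbf{D}}_{\delta}\delta\mathbf{g}^{\beta\delta}$ generates extra torsion/anholonomy terms absent in the Levi--Civita computation. One must check that these reorganize consistently --- cancelling pairwise or being reabsorbed into the $\widehat{\mathbf{D}}$--derivative structure of (\ref{efm}) --- using $\widehat{\mathbf{D}}\mathbf{g}=0$ and the explicit coefficients (\ref{cdc}). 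Equivalently, and perhaps more safely, one can first express everything for $\nabla$ through the distortion relation (\ref{distrel})--(\ref{distrel1}) with the distortion coefficients (\ref{deft}), obtain the standard $f(R,T)$ equations, and then re--express the $\widehat{\mathbf{Z}}$--dependent pieces in N--adapted form, verifying that no distortion term survives outside ${}^{ef}\mathbf{T}_{\beta\delta}$.
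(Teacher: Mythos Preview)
Your proposal is correct and follows essentially the same approach as the paper: an N--adapted variation of the action (\ref{act}) with respect to $\mathbf{g}^{\alpha\beta}$, paralleling the standard $f(R,T)$ derivation for $\nabla$ but with every covariant derivative replaced by $\widehat{\mathbf{D}}$, followed by the algebraic rearrangement into Einstein form and the identification of the effective source (\ref{effectsource})--(\ref{efm}). The paper's proof is in fact only a sketch that omits the technical details you spell out (the $\mathbf{\Theta}_{\beta\delta}$ computation, the add--and--subtract of $\frac{1}{2}{}^{1}F\,{}^{s}\widehat{R}\,\mathbf{g}_{\beta\delta}$, and the torsion bookkeeping in the integration by parts), referring instead to the Levi--Civita computation in \cite{odints3} together with the distortion relation (\ref{distrel})--(\ref{distrel1}); your ``safer'' alternative route via $\nabla$ and then re--expressing the $\widehat{\mathbf{Z}}$--terms is exactly what the paper has in mind.
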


\begin{proof}
By varying the action $S$ (\ref{act}) with respect to $\mathbf{g}^{\alpha
\beta }$ and following covariant differential calculus with N--elongated
operators (\ref{nader}) and (\ref{nadif}), we obtain the gravitational field
equations (\ref{cdeinst}) and respective effective sources (\ref%
{effectsource}) and (\ref{efm}).  \ We omit technical details of such proofs
because they are similar to the results for the Levi--Civita connection in
\cite{odints3} but (in our case) with distortions of formulas containing
covariant derivatives to be defined by the canonical d--connection, see (\ref%
{distrel}) and (\ref{distrel1}). All such constructions are metric
compatible and determined by the same metric structure. For $\ ^{ef}\mathbf{T%
}_{\beta \delta }=0,$ such details can be found in Refs.  \cite%
{ijgmmp,vexsol1,vexsol2} \ Effective source from modified gravity do no
change the N--adapted variational calcus. $\square $
\end{proof}

\vskip5pt

We consider matter field sources in (\ref{cdeinst}) which can be
diagonalized with respect to N--adapted frames, {\small
\begin{equation}
\mathbf{\Upsilon }_{~\delta }^{\beta }=diag[\mathbf{\Upsilon }_{\alpha }:%
\mathbf{\Upsilon }_{~1}^{1}=\mathbf{\Upsilon }_{~2}^{2}=\Upsilon
(x^{k},y^{3})+\underline{\Upsilon }(x^{k},y^{4});\mathbf{\Upsilon }_{~3}^{3}=%
\mathbf{\Upsilon }_{~4}^{4}=~^{v}\Upsilon (x^{k})].  \label{source}
\end{equation}%
} It can be performed via frame/ coordinate transforms for very general
distributions of matter fields. Such effective sources can be considered as
nonholonomic constraints via corresponding classes of $f$--functionals (\ref%
{functs}) on the Ricci tensor (see Theorem \ref{th2a}) and certain classes
of computed for certain general assumptions on modified off--diagonal
gravitational interactions.

\begin{corollary}
The gravitational field equations (\ref{cdeinst}) transform into the
Einstein equations in GR, in \textquotedblright standard\textquotedblright\
form for $\nabla ,$
\begin{equation}
E_{\beta \delta }=R_{\beta \delta }-\frac{1}{2}\mathbf{g}_{\beta \delta }\
R=\varkappa \ ^{m}T_{\beta \delta },  \label{einsteqs}
\end{equation}%
where $R:=\mathbf{g}^{\beta \delta }R_{\ \beta \delta },$ if $\ ^{2}f=0,$
and $\ ^{1}f(\ ^{s}\widehat{R})=R,$ for the same N--adapted coefficients for
both $\widehat{\mathbf{D}}$ and $\nabla $ if\
\begin{equation}
\widehat{L}_{aj}^{c}=e_{a}(N_{j}^{c}),\ \widehat{C}_{jb}^{i}=0,\ \Omega _{\
ji}^{a}=0,  \label{lcconstr}
\end{equation}
\end{corollary}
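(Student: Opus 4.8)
The plan is to argue in two stages: first that the nonholonomic constraints (\ref{lcconstr}) force the canonical d--connection to coincide with the Levi--Civita one, $\widehat{\mathbf{D}}=\nabla $, and then that the conditions $\ ^{2}f=0$ and $\ ^{1}f(\ ^{s}\widehat{R})=R$ collapse every $f$--modification entering (\ref{effectsource}) and (\ref{efm}), so that (\ref{cdeinst}) reduces term by term to (\ref{einsteqs}).

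For the first stage I would start from the Theorem--Definition: the canonical d--connection $\widehat{\mathbf{D}}$ already has vanishing h--torsion $\widehat{T}_{\ jk}^{i}=0$ and v--torsion $\widehat{T}_{\ bc}^{a}=0$ for \emph{any} data $(\mathbf{g},\mathbf{N})$. Reading off the remaining N--adapted d--torsion coefficients from (\ref{dtors}), they are, up to the sign conventions of the Appendix, $\widehat{T}_{\ ij}^{a}=-\Omega _{\ ij}^{a}$, $\widehat{T}_{\ ja}^{i}=\widehat{C}_{\ ja}^{i}$ and $\widehat{T}_{\ bi}^{a}=e_{b}(N_{i}^{a})-\widehat{L}_{\ bi}^{a}$. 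The three equations in (\ref{lcconstr}) annihilate exactly these three families, so $\widehat{\mathcal{T}}=0$ identically. Since $\nabla $ and $\widehat{\mathbf{D}}$ are both compatible with the same metric $\mathbf{g}$ (the former by definition of the Levi--Civita connection, the latter by the Theorem--Definition) and $\widehat{\mathbf{D}}$ is now torsion--free, uniqueness of the Levi--Civita connection gives $\widehat{\mathbf{D}}=\nabla $; equivalently, in the distortion relation (\ref{distrel}) we get $\widehat{\mathbf{Z}}=0$, because the coefficients (\ref{deft}) of $\widehat{\mathbf{Z}}$ are linear in precisely the torsion components that now vanish. It then follows that the curvatures are identified, $\widehat{\mathbf{R}}_{\ \beta \gamma \delta }^{\alpha }=R_{\ \beta \gamma \delta }^{\alpha }$, hence $\widehat{\mathbf{R}}_{\ \beta \delta }=R_{\beta \delta }$ (now symmetric, as the induced torsion is gone) and $\ ^{s}R=R$.

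For the second stage I would evaluate the $f$--data. Taking $\ ^{2}f=0$ gives $\ ^{2}F=\partial \ ^{2}f/\partial \ ^{m}T=0$, so $\ ^{ef}\eta =[1+\ ^{2}F/8\pi ]/\ ^{1}F=1/\ ^{1}F$; and taking $\ ^{1}f(\ ^{s}\widehat{R})=R=\ ^{s}\widehat{R}$ (using $\ ^{s}R=R$ from the first stage) gives $\ ^{1}F=\partial \ ^{1}f/\partial \ ^{s}\widehat{R}=1$, hence $\ ^{ef}\eta =1$. In (\ref{efm}) the scalar prefactor is then $\frac{1}{2}(\ ^{1}f-\ ^{1}F\ ^{s}\widehat{R}+2p\ ^{2}F+\ ^{2}f)=\frac{1}{2}(\ ^{s}\widehat{R}-\ ^{s}\widehat{R})=0$, while $(\mathbf{g}_{\beta \delta }\widehat{\mathbf{D}}_{\alpha }\widehat{\mathbf{D}}^{\alpha }-\widehat{\mathbf{D}}_{\beta }\widehat{\mathbf{D}}_{\delta })\ ^{1}F=0$ since $\ ^{1}F\equiv 1$ is constant; therefore $\ ^{ef}\mathbf{T}_{\beta \delta }=0$. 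Substituting into (\ref{effectsource}) leaves $\mathbf{\Upsilon }_{\beta \delta }=G\ ^{m}\mathbf{T}_{\beta \delta }$, i.e. the matter source $\varkappa \ ^{m}T_{\beta \delta }$ of (\ref{einsteqs}) with $\varkappa $ the coupling constant appearing in the $\widehat{\mathbf{D}}\rightarrow \nabla $ limit recorded in the preceding Theorem. Combined with $\widehat{\mathbf{R}}_{\ \beta \delta }-\frac{1}{2}\mathbf{g}_{\beta \delta }\ ^{s}R=R_{\beta \delta }-\frac{1}{2}\mathbf{g}_{\beta \delta }R$, this turns (\ref{cdeinst}) into (\ref{einsteqs}).

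The only genuinely nontrivial step is the first one: verifying from the explicit formulas (\ref{cdc}), (\ref{deft}) and (\ref{dtors}) that the three conditions in (\ref{lcconstr}) are simultaneously necessary and sufficient for $\widehat{\mathcal{T}}=0$, and that they make the N--adapted coefficients of $\widehat{\mathbf{D}}$ literally equal the second--type Christoffel symbols of $\nabla $ rather than merely share its curvature invariants. Everything after that is routine bookkeeping with the definitions of $\ ^{1}F$, $\ ^{2}F$, $\ ^{ef}\eta $ and $\ ^{ef}\mathbf{T}_{\beta \delta }$.
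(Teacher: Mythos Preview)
Your argument is correct and follows essentially the same route as the paper: the paper's proof simply invokes Proposition~\ref{approp}, whose content is precisely your Stage~1 (the constraints (\ref{lcconstr}) kill the remaining d--torsion components (\ref{dtors}), hence the distortion tensor (\ref{deft}) vanishes and $\Gamma _{\ \alpha \beta }^{\gamma }=\widehat{\mathbf{\Gamma }}_{\ \alpha \beta }^{\gamma }$ in N--adapted frames, so $R_{\beta \delta }=\widehat{\mathbf{R}}_{\beta \delta }$ and $E_{\alpha \beta }=\widehat{\mathbf{E}}_{\alpha \beta }$). Your Stage~2, which explicitly checks that $\ ^{2}f=0$ and $\ ^{1}f=\ ^{s}\widehat{R}$ force $\ ^{ef}\eta =1$ and $\ ^{ef}\mathbf{T}_{\beta \delta }=0$ so that $\mathbf{\Upsilon }_{\beta \delta }$ collapses to the GR source, is left implicit in the paper's terse proof; you have simply made that bookkeeping explicit.
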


\begin{proof}
The systems of PDE (\ref{cdeinst}) and (\ref{einsteqs}) are very different.
But if the constraints (\ref{lcconstr}) are imposed additionally on $%
\widehat{\mathbf{D}},$ we satisfy the conditions of Proposition \ref{approp}%
, when the equalities $\Gamma _{\ \alpha \beta }^{\gamma }=\widehat{\mathbf{%
\Gamma }}_{\ \alpha \beta }^{\gamma }$ result in $R_{\beta \delta }=\widehat{%
\mathbf{R}}_{\ \beta \delta }$ and $E_{\alpha \beta }=\widehat{\mathbf{E}}%
_{\alpha \beta }.$ $\square $
\end{proof}

\vskip5pt

Finally, we note that the effective source $\mathbf{\Upsilon }_{~\delta
}^{\beta }$ and the canonical d--connection $\widehat{\mathbf{D}}$ encode
all information on modifications of the GR theory. Prescribing such values
following certain geometric/physical assumptions, we can model modified
gravity effects via generic off--diagonal solutions in GR if we are able to
construct such metrics via nonholonomic constraints $\widehat{\mathbf{D}}_{%
\mathcal{T}=0}\to\nabla $.

\subsection{Decoupling of modified gravitational field eqs}

\subsubsection{Off--diagonal metrics with Killing symmetry}

Let us consider an ansatz (\ref{ans1}) with $\omega =1,\underline{h}_{3}=1,%
\underline{w}_{i}=0$ and $\underline{n}_{i}=0$ for the data (\ref{paramdcoef}%
) and $\underline{\mathbf{\Upsilon }}=0$ for (\ref{source}). Such a generic
off--diagonal metric does not depend on variable $y^{4},$ i.e. $\partial
/\partial y^{4}$ is a Killing vector, if $\underline{h}_{4}=1.$
Nevertheless, the decoupling property can be proven for the same assumptions
but arbitrary $\underline{h}_{4}(x^{k},y^{4})$ with nontrivial dependence on
$y^{4}.$ We call this class of metrics to be with effective Killing symmetry
because they result in systems of PDE (\ref{cdeinst}) as for the Killing
case but there are differences in (\ref{lcconstr}) if $\underline{h}_{4}\neq
1.$ In order to simplify formulas, there will be used brief denotations for
partial derivatives,
\begin{equation*}
a^{\bullet }=\partial _{1}=\partial a/\partial x^{1},a^{\prime }=\partial
_{2}=\partial a/\partial x^{2},a^{\ast }=\partial _{3}=\partial a/\partial
y^{3},a^{\circ }=\partial _{4}=\partial a/\partial y^{4}.
\end{equation*}

\begin{theorem}
\label{th2a}The effective Einstein eqs (\ref{cdeinst}) and nonholonomic
constraints(\ref{lcconstr}) for a metric $\mathbf{g}$ (\ref{paramdcoef})
with $\omega =\underline{h}_{3}=1$ and $\underline{w}_{i}=\underline{n}%
_{i}=0 $ and $\underline{\mathbf{\Upsilon }}=0$ in matter source $\mathbf{%
\Upsilon }_{~\delta }^{\beta }$ (\ref{source}) are equivalent, respectively,
to {\small
\begin{eqnarray}
\widehat{R}_{1}^{1} &=&\widehat{R}_{2}^{2}=\frac{-1}{2g_{1}g_{2}}%
[g_{2}^{\bullet \bullet }-\frac{g_{1}^{\bullet }g_{2}^{\bullet }}{2g_{1}}-%
\frac{\left( g_{2}^{\bullet }\right) ^{2}}{2g_{2}}+g_{1}^{\prime \prime }-%
\frac{g_{1}^{\prime }g_{2}^{\prime }}{2g_{2}}-\frac{(g_{1}^{\prime })^{2}}{%
2g_{1}}]=-\ ^{v}\Upsilon ,  \label{eq1b} \\
\widehat{R}_{3}^{3} &=&\widehat{R}_{4}^{4}=-\frac{1}{2h_{3}h_{4}}%
[h_{4}^{\ast \ast }-\frac{\left( h_{4}^{\ast }\right) ^{2}}{2h_{4}}-\frac{%
h_{3}^{\ast }h_{4}^{\ast }}{2h_{3}}]=-\Upsilon ,  \label{eq2b} \\
\widehat{R}_{3k} &=&\frac{w_{k}}{2h_{4}}[h_{4}^{\ast \ast }-\frac{\left(
h_{4}^{\ast }\right) ^{2}}{2h_{4}}-\frac{h_{3}^{\ast }h_{4}^{\ast }}{2h_{3}}%
]+\frac{h_{4}^{\ast }}{4h_{4}}\left( \frac{\partial _{k}h_{3}}{h_{3}}+\frac{%
\partial _{k}h_{4}}{h_{4}}\right) -\frac{\partial _{k}h_{4}^{\ast }}{2h_{4}}%
=0,  \label{eq3b} \\
\widehat{R}_{4k} &=&\frac{h_{4}}{2h_{3}}n_{k}^{\ast \ast }+\left( \frac{h_{4}%
}{h_{3}}h_{3}^{\ast }-\frac{3}{2}h_{4}^{\ast }\right) \frac{n_{k}^{\ast }}{%
2h_{3}}=0,  \label{eq4b}
\end{eqnarray}%
}and{\small \
\begin{eqnarray}
w_{i}^{\ast } &=&(\partial _{i}-w_{i}\partial _{3})\ln |h_{3}|,(\partial
_{i}-w_{i}\partial _{3})\ln \sqrt{|h_{4}|}=0,\partial _{k}w_{i}=\partial
_{i}w_{k},  \label{lccond} \\
n_{k}\underline{h}_{4}^{\circ } &=&\mathbf{\partial }_{k}\underline{h}%
_{4},n_{i}^{\ast }=0,\partial _{i}n_{k}=\partial _{k}n_{i}.  \nonumber
\end{eqnarray}%
}
\end{theorem}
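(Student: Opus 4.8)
The plan is to compute the N--adapted coefficients of the canonical d--connection $\widehat{\mathbf{D}}$ and its Ricci d--tensor $\widehat{\mathbf{R}}_{\ \beta\gamma}$ directly from the parametrized d--metric, then substitute into the modified Einstein equations \eqref{cdeinst} with the diagonalized source \eqref{source}. First I would fix the ansatz explicitly: $g_1=g_1(x^k)$, $g_2=g_2(x^k)$, $g_3=h_3(x^k,y^3)$, $g_4=h_4(x^k,y^3)\underline{h}_4(x^k,y^4)$, with $N_i^3=w_i(x^k,y^3)$, $N_i^4=n_i(x^k,y^3)$ (so $w_i^\circ=n_i^\circ=0$), and $\omega=\underline{h}_3=1$. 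Plugging these into \eqref{cdc} gives the nontrivial $\widehat{L}_{jk}^i$, $\widehat{L}_{bk}^a$, $\widehat{C}_{jc}^i$, $\widehat{C}_{bc}^a$; the key simplifications are that $g_{ij}$ depends only on $x^k$ (so $\widehat{C}_{jc}^i=\tfrac12 g^{ik}e_c g_{jk}=0$) and that the v--metric is block--diagonal with only $y^3$-dependence through $h_a$ and the extra $y^4$-factor $\underline{h}_4$. I would then feed these into the curvature formula \eqref{dcurv} and contract to the Ricci d--tensor via \eqref{driccic}, as organized in Appendix~\ref{sb}.

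The core of the argument is that for this ansatz the Ricci d--tensor is \emph{diagonal} in the sense $\widehat{R}_1^1=\widehat{R}_2^2$, $\widehat{R}_3^3=\widehat{R}_4^4$, with off--diagonal blocks carried only by $\widehat{R}_{3k}$ and $\widehat{R}_{4k}$; this is precisely the decoupling property proved in Appendix~\ref{sb} (referenced as ``a proof of the decoupling property''), and I would invoke it rather than rederiving it. The h--part $\widehat{R}_1^1=\widehat{R}_2^2$ reduces to the 2--d Laplace--type operator on $g_i(x^k)$ displayed in \eqref{eq1b}; setting it equal to $-\,^v\Upsilon$ (the $\mathbf{\Upsilon}_3^3=\mathbf{\Upsilon}_4^4$ entry, since for a d--metric the h--Einstein block pairs with the v--source — a standard feature of the AFDM) gives \eqref{eq1b}. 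The v--part yields the single second--order ODE in $y^3$ of \eqref{eq2b} with right side $-\Upsilon$, where crucially the factor $\underline{h}_4(x^k,y^4)$ drops out of $\widehat{R}_3^3$ because it enters $h_4$ multiplicatively and only $\ast$--derivatives ($\partial_3$) appear — this is exactly why the ``effective Killing'' relaxation $\underline{h}_4\neq1$ is harmless for \eqref{eq2b}. The mixed equations $\widehat{R}_{3k}=0$ and $\widehat{R}_{4k}=0$ then become \eqref{eq3b} and \eqref{eq4b}: \eqref{eq4b} is linear in $n_k$ once the v--data $h_a$ are known, and \eqref{eq3b} is algebraic-linear in $w_k$ with coefficient the same bracket $[\,h_4^{\ast\ast}-\cdots]$ appearing in \eqref{eq2b}.

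For the second group \eqref{lccond} I would impose the Levi--Civita constraints \eqref{lcconstr}, i.e. $\widehat{L}_{aj}^c=e_a(N_j^c)$, $\widehat{C}_{jb}^i=0$, $\Omega_{ji}^a=0$, and translate each into a PDE on the generating/integration functions. The condition $\widehat{C}_{jb}^i=0$ is automatic here (h--metric independent of $y^a$). Writing out $\widehat{L}_{3j}^3=e_3(N_j^3)$ and $\widehat{L}_{4j}^4=e_4(N_j^4)$ using \eqref{cdc} produces, after simplification with the field equations already in hand, the relations $w_i^\ast=(\partial_i-w_i\partial_3)\ln|h_3|$ and $(\partial_i-w_i\partial_3)\ln\sqrt{|h_4|}=0$; here the $\underline{h}_4$ factor survives and contributes the extra block $n_k\underline{h}_4^\circ=\partial_k\underline{h}_4$ together with $n_i^\ast=0$ (recall $n_i$ has no $y^4$-dependence), which is the announced discrepancy with the pure Killing case. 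The vanishing anholonomy $\Omega_{ij}^a=0$ gives the curl conditions $\partial_k w_i=\partial_i w_k$ and $\partial_i n_k=\partial_k n_i$.

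The main obstacle is bookkeeping rather than conceptual: one must track a large number of connection coefficients through the curvature contraction and verify that all terms outside the four displayed expressions cancel, using the specific zero pattern of the ansatz (e.g.\ $g_{ij}$ depending only on $x$, the $w$-- and $n$--coefficients being $y^3$--dependent only, $\omega=\underline{h}_3=1$). The one genuinely delicate point is checking that the $\underline{h}_4(x^k,y^4)$ factor is \emph{invisible} to \eqref{eq1b}--\eqref{eq4b} — which follows because those equations involve only $x^k$-- and $y^3$--derivatives of $h_4$ and the multiplicative $\underline{h}_4$ cancels in every logarithmic derivative and ratio appearing there — while it is \emph{visible} in \eqref{lccond} through $\underline{h}_4^\circ$; isolating where $\partial_4$ acts nontrivially is the crux of separating the ``effective Killing'' regime from genuine $y^4$-dependence. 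Since the decoupling itself is deferred to Appendix~\ref{sb}, in the main text I would present this as: (i) quote the Appendix for the structural form of $\widehat{\mathbf{R}}_{\ \beta\gamma}$, (ii) specialize to $\omega=\underline{h}_3=1$, $\underline{w}_i=\underline{n}_i=0$, $\underline{\mathbf{\Upsilon}}=0$, (iii) read off \eqref{eq1b}--\eqref{eq4b}, (iv) expand \eqref{lcconstr} to obtain \eqref{lccond}.
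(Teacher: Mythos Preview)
Your plan matches the paper's approach exactly: the proof in the main text simply refers to Appendix~\ref{sb}, which carries out precisely the computation you outline --- write down the nontrivial $\widehat{\mathbf{\Gamma}}^\gamma_{\ \alpha\beta}$ from \eqref{cdc} for the data $g_i(x^k)$, $g_3=h_3$, $g_4=h_4\underline{h}_4$, $N_i^3=w_i$, $N_i^4=n_i$, then build $\widehat{\mathbf{R}}_{\alpha\beta}$ term by term and read off \eqref{eq1b}--\eqref{eq4b}, with \eqref{lccond} coming from setting the torsion coefficients \eqref{dtors} to zero.

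There is one point where your stated mechanism is off. You claim that $\underline{h}_4(x^k,y^4)$ is ``invisible'' to \eqref{eq1b}--\eqref{eq4b} because it ``cancels in every logarithmic derivative and ratio appearing there.'' That is true for $\widehat{R}^1_{\ 1}$, $\widehat{R}^3_{\ 3}$, $\widehat{R}^4_{\ 4}$, and (after an overall factor) for $\widehat{R}_{4k}$, but it is \emph{not} true for $\widehat{R}_{3k}$. The raw computation in Appendix~\ref{sb} produces extra terms
\[
n_k^{\ast}\,\frac{\underline{h}_4^{\circ}}{4\underline{h}_4}
\;+\;n_k\,\frac{h_4^{\ast}}{2h_4}\,\frac{\underline{h}_4^{\circ}}{2\underline{h}_4}
\;-\;\frac{h_4^{\ast}}{2h_4}\,\frac{\partial_k\underline{h}_4}{2\underline{h}_4}
\]
in $\widehat{R}_{3k}$, and these do \emph{not} cancel automatically. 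They vanish only once you impose the zero--torsion condition $\widehat{T}^4_{4k}=0$, which for this ansatz reads $n_k\underline{h}_4^{\circ}=\partial_k\underline{h}_4$ (together with $n_k^{\ast}=0$). This is exactly why that relation appears in \eqref{lccond}: it is not an independent constraint added on, but the condition needed to reduce $\widehat{R}_{3k}$ to the clean form \eqref{eq3b}. So the logic is not ``\eqref{eq1b}--\eqref{eq4b} hold independently, then \eqref{lccond} is imposed,'' but rather that part of \eqref{lccond} is already used to obtain \eqref{eq3b} in its displayed form. Apart from this, your outline is correct and coincides with the paper's.
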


\begin{proof}
See a sketch of proof in Appendix \ref{sb}; more details and computations
for nonholonomic configurations in higher dimensions and the Einstein
gravity are provided in Refs. \cite{ijgmmp,vexsol2,veymh}. $\square $
\end{proof}

\vskip5pt

The system of PDE (\ref{eq1b})--(\ref{eq4b}) possess a very important
decoupling property which is characteristic for various classes of modified
gravity theories. Let us explain in brief this property. The equation (\ref%
{eq1b}) is for a 2--d metric which is always conformally flat and can be
diagonalized. Choosing any value of a function $g_{1}$ for a prescribed
source $\ ^{v}\Upsilon ,$ we can find $g_{2},$ or inversely. The equation (%
\ref{eq2b}) contains only the first and second derivatives on $\partial
/\partial y^{3}$ and relates two functions $h_{3}$ and $h_{4}.$ The
equations (\ref{eq3b}) consist a linear algebraic system for $w_{k}$ if the
coefficients $h_{a}$ have been already defined as a solution of (\ref{eq2b}%
). Nevertheless, we have to consider additional constraints on $w_{i}$ and $%
h_{4}$ solving a system of first order PDE on $x^{k}$ and $y^{3}$ in order
to find $w_{k}$ resulting in zero torsion conditions (\ref{lccond}). There
are additional conditions on $n_{k}.$ We shall analyze how the Levi--Civita
conditions can be solved in very general forms in section \ref{s3}.

\begin{corollary}
\label{corol1} The effective Einstein eqs (\ref{cdeinst}) and (\ref{lcconstr}%
) for a metric $\mathbf{g}$ (\ref{paramdcoef}) with $\omega =h_{4}=1$ and $%
w_{i}=n_{i}=0$ and $\mathbf{\Upsilon }=0$ in matter source $\mathbf{\Upsilon
}_{~\delta }^{\beta }$ (\ref{source}), when such values do not depend on
coordinate $y^{3}$ and posses one Killing symmetry on $\partial /\partial
y_{3},$ are equivalent, respectively, to {\small
\begin{eqnarray}
-\widehat{R}_{1}^{1} &=&-\widehat{R}_{2}^{2}=g_{2}^{\bullet \bullet }-\frac{%
g_{1}^{\bullet }g_{2}^{\bullet }}{2g_{1}}-\frac{\left( g_{2}^{\bullet
}\right) ^{2}}{2g_{2}}+g_{1}^{\prime \prime }-\frac{g_{1}^{\prime
}g_{2}^{\prime }}{2g_{2}}-\frac{(g_{1}^{\prime })^{2}}{2g_{1}}=2g_{1}g_{2}\
^{v}\Upsilon ,  \label{eq1c} \\
\widehat{R}_{3}^{3} &=&\widehat{R}_{4}^{4}=-\frac{1}{2\underline{h}_{3}%
\underline{h}_{4}}[\underline{h}_{3}^{\circ \circ }-\frac{\left( \underline{h%
}_{3}^{\circ }\right) ^{2}}{2\underline{h}_{3}}-\frac{\underline{h}%
_{3}^{\circ }\underline{h}_{4}^{\circ }}{2\underline{h}_{4}}]=-\underline{%
\Upsilon },  \label{eq2c} \\
\widehat{R}_{3k} &=&+\frac{\underline{h}_{3}}{2\underline{h}_{4}}\underline{w%
}_{k}^{\circ \circ }+\left( \frac{\underline{h}_{3}}{\underline{h}_{4}}%
\underline{h}_{4}^{\circ }-\frac{3}{2}\underline{h}_{3}^{\circ }\right)
\frac{\underline{h}_{k}^{\circ }}{2\underline{h}_{4}}=0,  \label{eq3c} \\
\widehat{R}_{4k} &=&\frac{\underline{n}_{k}}{2\underline{h}_{3}}[\underline{h%
}_{3}^{\circ \circ }-\frac{\left( \underline{h}_{3}^{\circ }\right) ^{2}}{2%
\underline{h}_{3}}-\frac{\underline{h}_{3}^{\circ }\underline{h}_{4}^{\circ }%
}{2\underline{h}_{4}}]+\frac{\underline{h}_{3}^{\circ }}{4\underline{h}_{3}}%
\left( \frac{\partial _{k}\underline{h}_{3}}{\underline{h}_{3}}+\frac{%
\partial _{k}\underline{h}_{4}}{\underline{h}_{4}}\right) -\frac{\partial
_{k}h_{3}^{\circ }}{2h_{3}}=0,  \label{eq4c}
\end{eqnarray}%
\begin{eqnarray}
\mbox{ \ and \ }\underline{n}_{i}^{\circ } &=&(\partial _{i}-\underline{n}%
_{i}\partial _{4})\ln |\underline{h}_{4}|,(\partial _{i}-\underline{n}%
_{i}\partial _{4})\ln |\underline{h}_{3}|=0,,  \label{lccondd} \\
(\partial _{k}-\underline{n}_{k}\partial _{4})\underline{n}_{i} &=&(\partial
_{i}-\underline{n}_{i}\partial _{4})\underline{n}_{k},\underline{w}%
_{k}h_{3}^{\ast }=\mathbf{\partial }_{k}h_{3},\underline{w}_{i}^{\circ
}=0,\partial _{i}\underline{w}_{k}=\partial _{k}\underline{w}_{i}.  \nonumber
\end{eqnarray}%
}
\end{corollary}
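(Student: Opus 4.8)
The plan is to deduce Corollary \ref{corol1} from Theorem \ref{th2a} by an adapted relabeling of the fibre coordinates, rather than by redoing the full curvature computation. The configuration treated here is the mirror image, under the exchange $y^{3}\leftrightarrow y^{4}$, of the one in Theorem \ref{th2a}: there the active metric and N--connection coefficients depend on $y^{3}$ and the metric is (effectively) Killing along $\partial /\partial y^{4}$, whereas in Corollary \ref{corol1} they depend on $y^{4}$ and the metric is (effectively) Killing along $\partial /\partial y^{3}$. Concretely, I would introduce $\bar{y}^{3}:=y^{4}$ and $\bar{y}^{4}:=y^{3}$, keeping $x^{i}$ fixed. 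This is a diffeomorphism that respects the $2+2$ splitting (\ref{ncon}), so the N--connection, the canonical d--connection $\widehat{\mathbf{D}}$, its Ricci d--tensor, the scalar curvature $\ ^{s}R$, the nonholonomically induced torsion, and hence the whole systems (\ref{cdeinst}) and (\ref{lcconstr}), transform covariantly. The first step is to check that, in the barred coordinates, the d--metric of Corollary \ref{corol1} (with $\omega =h_{4}=1$, $w_{i}=n_{i}=0$, $\mathbf{\Upsilon }=0$ and all remaining coefficients independent of $y^{3}$) is an instance of the ansatz (\ref{paramdcoef}) subject to $\bar{\omega}=\underline{\bar{h}}_{3}=1$, $\underline{\bar{w}}_{i}=\underline{\bar{n}}_{i}=0$, $\underline{\bar{\Upsilon }}=0$, under the identifications $\bar{h}_{3}=\underline{h}_{4}$, $\bar{h}_{4}=\underline{h}_{3}$, $\underline{\bar{h}}_{4}=h_{3}$, $\bar{w}_{i}=\underline{n}_{i}$, $\bar{n}_{i}=\underline{w}_{i}$, $\bar{g}_{i}=g_{i}$, $\bar{\Upsilon }=\underline{\Upsilon }$ and $\ ^{v}\bar{\Upsilon }=\ ^{v}\Upsilon $; in particular the ``effective Killing'' freedom carried by $\underline{h}_{4}$ in Theorem \ref{th2a} corresponds precisely to the freedom left in $h_{3}$ here.

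Next I would apply Theorem \ref{th2a} to these barred data, so that (\ref{cdeinst}) together with (\ref{lcconstr}) become the barred copies of (\ref{eq1b})--(\ref{eq4b}) and (\ref{lccond}). Translating back, and using that the coordinate exchange permutes the $v$--indices ($\widehat{\bar{R}}_{\bar{1}}^{\bar{1}}=\widehat{R}_{1}^{1}$, $\widehat{\bar{R}}_{\bar{3}}^{\bar{3}}=\widehat{R}_{3}^{3}=\widehat{R}_{4}^{4}$, $\widehat{\bar{R}}_{\bar{3}k}=\widehat{R}_{4k}$, $\widehat{\bar{R}}_{\bar{4}k}=\widehat{R}_{3k}$) and swaps the $v$--derivatives ($\bar{\partial}_{3}=\partial _{4}=\ ^{\circ }$, $\bar{\partial}_{4}=\partial _{3}=\ ^{\ast }$), I expect the barred (\ref{eq1b}) to reproduce (\ref{eq1c}) after a routine algebraic rearrangement, the barred (\ref{eq2b}) to reproduce (\ref{eq2c}), the barred (\ref{eq4b})---second order in $\bar{n}_{k}=\underline{w}_{k}$---to reproduce (\ref{eq3c}), the barred (\ref{eq3b})---algebraic in $\bar{w}_{k}=\underline{n}_{k}$---to reproduce (\ref{eq4c}), and the barred Levi--Civita conditions (\ref{lccond}) to reproduce (\ref{lccondd}). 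This gives the asserted equivalence.

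I expect the main obstacle to be the bookkeeping of the asymmetry between the two fibre legs: since $N_{i}^{3}=w_{i}+\underline{w}_{i}$ and $N_{i}^{4}=n_{i}+\underline{n}_{i}$ enter the off--diagonal ansatz (\ref{ansatz}) in an ordered way, the exchange $y^{3}\leftrightarrow y^{4}$ is not an automorphism of the general ansatz, so one must verify carefully that it does send the restricted class of Corollary \ref{corol1} onto the restricted class of Theorem \ref{th2a} and, in particular, that the algebraic Ricci equation for $w_{k}$ and the second--order Ricci equation for $n_{k}$ genuinely exchange roles. The most error--prone part will be tracking the N--elongated derivatives when reducing (\ref{lcconstr}): one should land on the flow equations $\underline{n}_{i}^{\circ }=(\partial _{i}-\underline{n}_{i}\partial _{4})\ln |\underline{h}_{4}|$ and $(\partial _{i}-\underline{n}_{i}\partial _{4})\ln |\underline{h}_{3}|=0$, the elongated curl identity $(\partial _{k}-\underline{n}_{k}\partial _{4})\underline{n}_{i}=(\partial _{i}-\underline{n}_{i}\partial _{4})\underline{n}_{k}$, and $\underline{w}_{k}h_{3}^{\ast }=\mathbf{\partial }_{k}h_{3}$, $\underline{w}_{i}^{\circ }=0$, $\partial _{i}\underline{w}_{k}=\partial _{k}\underline{w}_{i}$, rather than their naive non--elongated analogues. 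As a fallback, should the relabeling be awkward to present, the same system (\ref{eq1c})--(\ref{eq4c}) and (\ref{lccondd}) follows by repeating the computation sketched in Appendix \ref{sb} verbatim, now with $\partial _{3}$ replaced by $\partial _{4}$ throughout and with the same effort.
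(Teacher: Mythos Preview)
Your approach is correct and coincides with the paper's own argument: the proof in the paper consists of the single remark that it is ``similar to that for Theorem \ref{th2a} but for $y^{3}\rightarrow y^{4}$,'' and the paragraph following the corollary spells out exactly the dictionary $h_{3}\rightarrow\underline{h}_{4}$, $h_{4}\rightarrow\underline{h}_{3}$, $w_{k}\rightarrow\underline{n}_{k}$, $n_{k}\rightarrow\underline{w}_{k}$ that you identified. Your discussion is in fact more detailed than the paper's, including the explicit swap of Ricci components and the fallback of redoing Appendix \ref{sb} with $\partial_{3}\leftrightarrow\partial_{4}$.
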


\begin{proof}
It is similar to that for Theorem \ref{th2a} but for $y^{3}\rightarrow
y^{4}. $ We do not repeat such computations. $\square $
\end{proof}

\vskip5pt

The nonlinear systems of PDE corresponding to modified gravitational field
equations (\ref{cdeinst}) and (\ref{lcconstr}) for metrics $\mathbf{g}$ (\ref%
{paramdcoef}) with Killing symmetry on $\partial /\partial y_{4},$ when $%
\omega =\underline{h}_{3}=1$ and $\underline{w}_{i}=\underline{n}_{i}=0$ and
$\underline{\mathbf{\Upsilon }}=0$ in matter source $\mathbf{\Upsilon }%
_{~\delta }^{\beta }$ (\ref{source}), can be transformed into respective
systems of PDE for data with Killing symmetry on $\partial /\partial y_{3},$
when $\omega =h_{4}=1$ and $w_{i}=n_{i}=0$ and $\mathbf{\Upsilon }=0$, if $%
h_{3}(x^{i},y^{3})\rightarrow \underline{h}_{4}(x^{i},y^{4}),$ $%
h_{4}(x^{i},y^{3})\rightarrow \underline{h}_{3}(x^{i},y^{4}),$ $%
w_{k}(x^{i},y^{3})\rightarrow \underline{n}_{k}(x^{i},y^{4})$ and $%
n_{k}(x^{i},y^{3})\rightarrow \underline{w}_{k}(x^{i},y^{4}).$

There is a possibility to preserve a N--adapted decoupling under "vertical"
conform transforms.

\begin{lemma}
\label{lemma1}The modified gravitational equations (\ref{eq2b})--(\ref{eq4b}%
), do not change under a "vertical" conformal transform with nontrivial $%
\omega (x^{k},y^{a})$ to a d--metric (\ref{paramdcoef}) if there are
satisfied the conditions%
\begin{equation}
\partial _{k}\omega -w_{i}\omega ^{\ast }-n_{i}\omega ^{\circ }=0\mbox{ and }%
\widehat{T}_{kb}^{a}=0.  \label{conf2a}
\end{equation}
\end{lemma}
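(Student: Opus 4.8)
The statement is that the equations (\ref{eq2b})--(\ref{eq4b}) are invariant when the v-block of the d-metric is rescaled by a nontrivial conformal factor $\omega(x^k,y^a)$, i.e. $h_a \to \omega^2 h_a \underline h_a$ in the parametrization (\ref{paramdcoef}), provided the two conditions (\ref{conf2a}) hold. The natural approach is a direct N-adapted computation of the relevant components $\widehat{\mathbf R}_{3}^{3}=\widehat{\mathbf R}_4^4$, $\widehat{\mathbf R}_{3k}$, $\widehat{\mathbf R}_{4k}$ of the Ricci tensor of the canonical d-connection $\widehat{\mathbf D}$ for the conformally transformed d-metric, and then to show that the extra $\omega$-terms organize themselves into a total multiple of the expression $\partial_k\omega - w_i\omega^\ast - n_i\omega^\circ$ (which vanishes by the first hypothesis) together with the torsion piece $\widehat T_{kb}^a$ (which vanishes by the second). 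I would set this up using the component formulas for $\widehat{\mathbf D}$ in (\ref{cdc}) and for $\widehat{\mathcal T}$, $\widehat{\mathcal R}$ in Appendix \ref{sa}, rather than trying to invoke a slick conformal-transformation identity, because the relevant connection is not Levi--Civita and the standard Weyl conformal formulas do not directly apply.

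\emph{Key steps, in order.} First I would record that under $g_a \to \omega^2 g_a$ with $g_a = h_a\underline h_a$ as in (\ref{paramdcoef}), the logarithmic derivatives that enter (\ref{eq2b})--(\ref{eq4b}) pick up an additive shift: e.g. $\partial_3\ln|h_4| \to \partial_3\ln|h_4| + 2\,\partial_3\ln|\omega|$ and similarly for the $h_3$ terms and for the $x^k$-derivatives. Second, I would observe that the first condition in (\ref{conf2a}), $\partial_k\omega - w_i\omega^\ast - n_i\omega^\circ = 0$, is exactly the statement that $\omega$ is constant along the N-elongated vector $\mathbf e_k = \partial_k - w_i\partial_3 - n_i\partial_4$ (cf. (\ref{nader})); equivalently, $\mathbf e_k(\ln\omega)=0$. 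Third, I would plug the shifted quantities into the left-hand sides of (\ref{eq2b}), (\ref{eq3b}), (\ref{eq4b}) and collect the $\omega$-dependent corrections; the terms that are pure $\partial_3$-derivatives of $\ln\omega$ cancel among themselves because $\omega$ enters $h_3$ and $h_4$ symmetrically in the combinations appearing in those equations (this is the content of requiring the $\widehat T_{kb}^a=0$ relation, which removes the asymmetric off-block torsion contributions), while the mixed terms involving $\partial_k\ln\omega$ reassemble into $\mathbf e_k(\ln\omega)$ and hence vanish by the first hypothesis. Finally I would conclude that the surviving equations are identically (\ref{eq2b})--(\ref{eq4b}) with $h_a$ in place of $\omega^2 h_a\underline h_a$, which is the claim.

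\emph{Main obstacle.} The delicate part is the bookkeeping in the third step: one must show that \emph{all} the $\omega$-correction terms cancel, and the only mechanism available is the pair of conditions (\ref{conf2a}). The risk is that some correction term is neither a $\partial_3$-symmetric cancellation nor proportional to $\mathbf e_k(\ln\omega)$; tracing where the torsion condition $\widehat T_{kb}^a=0$ is used is precisely what rules this out, since that condition is what makes $\widehat{\mathbf D}$ behave, on the relevant off-diagonal components, compatibly with the conformal rescaling. A secondary subtlety is that $\omega$ is allowed to depend on \emph{all} coordinates (in particular on $y^4$), so one cannot simply absorb it into a redefinition of $h_4(x^k,y^3)$; the argument must genuinely use that the $y^3$-sector equations (\ref{eq2b})--(\ref{eq4b}) only see $\omega$ through the combination killed by $\mathbf e_k$. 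I expect the cleanest presentation is to verify the invariance of (\ref{eq2b}) first (a one-variable computation in $y^3$, where $\omega$-terms drop by the symmetry just noted), then (\ref{eq3b}) and (\ref{eq4b}), at which stage the $\partial_k\ln\omega$ terms appear and are eliminated by (\ref{conf2a}); the higher-dimensional analogue of this computation is carried out in Refs. \cite{ijgmmp,vexsol2,veymh}, which I would cite to avoid reproducing the full calculation here.
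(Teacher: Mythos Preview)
Your approach is essentially the same as the paper's: a direct N--adapted computation of the Ricci d--tensor for the $\omega$--rescaled vertical metric, showing that the $\omega$--dependent corrections vanish precisely when $\mathbf e_k\omega=0$ and $\widehat T^{a}_{\ kb}=0$, with the detailed bookkeeping deferred to Refs.\ \cite{vexsol1,vexsol2}. The paper organises the same calculation slightly more cleanly by writing $~^{\omega}\widehat R^{a}_{\ b}=\widehat R^{a}_{\ b}+{}^{\omega}\widehat Z^{a}_{\ b}$ and verifying $~^{\omega}\widehat Z^{a}_{\ b}=0$ from the connection--level formulas (\ref{cdc}), (\ref{dcurv}), (\ref{driccic}), which sidesteps the subtlety you flagged about substituting a fully four--variable $\omega$ into equations derived under the restricted ansatz.
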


\begin{proof}
We do not repeat here such details provided in Refs. \cite{vexsol1,vexsol2}
for $\underline{h}_{4}=1$ because a nontrivial $\underline{h}_{4}$ does not
modify substantially the proof. \ The computations from Appendix \ref{sb}
should be performed for coefficients $g_{i}(x^{k}),g_{3}=h_{3},g_{4}=h_{4}%
\underline{h}_{4},N_{i}^{3}=w_{i},$ $N_{i}^{4}=n_{i}$ are generalized to a
nontrivial $\omega (x^{k},y^{a})$ with $~^{\omega }g_{3}=\omega ^{2}h_{3}$
and $~^{\omega }g_{4}=\omega ^{2}h_{4}\underline{h}_{4}.$ Using formulas (%
\ref{cdc}), (\ref{distrel1}), (\ref{dcurv}) and (\ref{driccic}), we get
certain distortion relations for the Ricci d--tensors (\ref{riccid}), $%
~^{\omega }\widehat{R}_{~b}^{a}=\widehat{R}_{~b}^{a}+~^{\omega }\widehat{Z}%
_{~b}^{a}$ and $~^{\omega }\widehat{R}_{bi}=\widehat{R}_{bi}=0,$ where $%
\widehat{R}_{~b}^{a}$ and $\widehat{R}_{bi}$ are those computed for $\omega
=1,$ i.e. (\ref{eq2b})--(\ref{eq4b}). The values $~^{\omega }\widehat{R}%
_{~b}^{a}$ and$~^{\omega }\widehat{Z}_{~b}^{a}$ are defined by a nontrivial $%
\omega $ and computed using the same formulas. We get that $~^{\omega }%
\widehat{Z}_{~b}^{a}=0$ if the conditions (\ref{conf2a}) are satisfied. $%
\square $
\end{proof}

\vskip5pt

The conditions of the Theorem \ref{th2a}, Corollary \ref{corol1} and Lemma %
\ref{lemma1} result in a prove that

\begin{theorem}
\label{th2b}Any d--metric
\begin{eqnarray}
\mathbf{g} &=&g_{i}(x^{k})dx^{i}\otimes dx^{i}+\omega
^{2}(x^{k},y^{a})\left( h_{3}\mathbf{e}^{3}\otimes \mathbf{e}^{3}+h_{4}\
\underline{h}_{4}\mathbf{e}^{4}\otimes \mathbf{e}^{4}\right) ,  \nonumber  \\
\mathbf{e}^{3} &=&dy^{3}+w_{i}(x^{k},y^{3})dx^{i},\ \mathbf{e}%
^{4}=dy^{4}+n_{i}(x^{k})dx^{i},  \label{class1}
\end{eqnarray}%
satisfying the PDE (\ref{eq1b})-- (\ref{lccond}) and $\partial _{k}\omega
-w_{i}\omega ^{\ast }-n_{i}\omega ^{\circ }=0$, or any d--metric
\begin{eqnarray}
\mathbf{g} &=&g_{i}(x^{k})dx^{i}\otimes dx^{i}+\omega
^{2}(x^{k},y^{a})\left( h_{3}\underline{h}_{3}\mathbf{e}^{3}\otimes \mathbf{e%
}^{3}+\underline{h}_{4}\mathbf{e}^{4}\otimes \mathbf{e}^{4}\right) ,  \nonumber
\\
\mathbf{e}^{3} &=&dy^{3}+\underline{w}_{i}(x^{k})dx^{i},\ \mathbf{e}%
^{4}=dy^{4}+\underline{n}_{i}(x^{k},y^{4})dx^{i},  \label{class2}
\end{eqnarray}%
satisfying the PDE (\ref{eq1c})-- (\ref{lccondd}) and $\partial _{k}\omega -%
\underline{w}_{i}\omega ^{\ast }-\underline{n}_{i}\omega ^{\circ }=0$,
define (in general, different) two classes of generic off--diagonal
solutions \ of modified gravitational equations (\ref{cdeinst}) and (\ref%
{lcconstr}) with respective sources of type (\ref{source}).
\end{theorem}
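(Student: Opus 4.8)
The plan is to assemble Theorem \ref{th2b} directly from the three results already established, treating it essentially as a corollary once the pieces are arranged correctly. First I would observe that Theorem \ref{th2a} establishes the equivalence of the effective Einstein equations (\ref{cdeinst}) together with the Levi--Civita constraints (\ref{lcconstr}) to the explicit system (\ref{eq1b})--(\ref{lccond}) for an ansatz of type (\ref{paramdcoef}) with $\omega=\underline{h}_{3}=1$, $\underline{w}_{i}=\underline{n}_{i}=0$ and $\underline{\mathbf{\Upsilon}}=0$, i.e.\ for the effective-Killing case with Killing vector $\partial/\partial y^{4}$. So a d--metric with $\omega=1$ and coefficients $g_{i},h_{3},h_{4},\underline{h}_{4},w_{i},n_{i}$ solving those PDE is a solution of (\ref{cdeinst}) and (\ref{lcconstr}). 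The second step is to switch on the vertical conformal factor $\omega(x^{k},y^{a})$: by Lemma \ref{lemma1}, the gravitational equations (\ref{eq2b})--(\ref{eq4b}) are invariant under $h_{a}\mapsto\omega^{2}h_{a}$ provided $\partial_{k}\omega-w_{i}\omega^{\ast}-n_{i}\omega^{\circ}=0$ and $\widehat{T}_{kb}^{a}=0$. Hence, starting from a solution of Theorem \ref{th2a} and multiplying the v--block by $\omega^{2}$ with $\omega$ obeying that first--order linear PDE, one still has a solution of the modified gravitational equations; this is precisely the d--metric (\ref{class1}). I would remark that the condition $\widehat{T}_{kb}^{a}=0$ is already secured by the Levi--Civita constraints in the list (\ref{lccond}) plus (\ref{lcconstr}), so it imposes nothing new.

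The third step handles the second class (\ref{class2}) by the symmetry already spelled out in the paragraph between Corollary \ref{corol1} and Lemma \ref{lemma1}: the whole construction is invariant under the coordinate/label swap $y^{3}\leftrightarrow y^{4}$, under which $h_{3}\leftrightarrow\underline{h}_{4}$, $h_{4}\leftrightarrow\underline{h}_{3}$, $w_{k}\leftrightarrow\underline{n}_{k}$, $n_{k}\leftrightarrow\underline{w}_{k}$, and the system (\ref{eq1b})--(\ref{lccond}) maps to (\ref{eq1c})--(\ref{lccondd}). Applying Corollary \ref{corol1} in place of Theorem \ref{th2a}, and the mirror image of Lemma \ref{lemma1} (with $\partial_{3}\leftrightarrow\partial_{4}$, so the conformal condition becomes $\partial_{k}\omega-\underline{w}_{i}\omega^{\ast}-\underline{n}_{i}\omega^{\circ}=0$), one obtains exactly (\ref{class2}) as a second class of generic off--diagonal solutions of (\ref{cdeinst}) and (\ref{lcconstr}) with source (\ref{source}). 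I would close by noting that the two families are genuinely different in general, since one carries the Killing direction $\partial/\partial y^{4}$ (up to the effective-Killing generalization by a nontrivial $\underline{h}_{4}$) and the other carries $\partial/\partial y^{3}$, and the off--diagonal coefficients enter through distinct N--connection slots.

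The only real work beyond bookkeeping is verifying compatibility of the hypotheses: one must check that a function $\omega$ satisfying the linear transport equation $\partial_{k}\omega-w_{i}\omega^{\ast}-n_{i}\omega^{\circ}=0$ actually exists given the $w_{i},n_{i}$ produced by Theorem \ref{th2a} (it does, since the equation is a compatible overdetermined linear first--order system once $\partial_{i}w_{k}=\partial_{k}w_{i}$ and $n_{i}^{\ast}=0$ from (\ref{lccond}) hold, so the characteristics integrate consistently), and that introducing $\omega$ does not spoil the zero--torsion conditions collected in (\ref{lcconstr}) and (\ref{lccond}) — which is exactly the content of the extra requirement $\widehat{T}_{kb}^{a}=0$ in Lemma \ref{lemma1}, already built into the constraint list. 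So I expect the main obstacle to be purely expository: stating precisely which subset of the PDE (\ref{eq1b})--(\ref{lccond}) is the ``gravitational'' part (invariant under the conformal rescaling) versus the ``constraint'' part (which must be re--examined), and making the $y^{3}\leftrightarrow y^{4}$ dictionary airtight. Once those are in place the theorem follows by composing Theorem \ref{th2a} (resp.\ Corollary \ref{corol1}) with Lemma \ref{lemma1} (resp.\ its mirror), and no new computation is needed. $\square$
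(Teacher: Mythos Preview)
Your proposal is correct and follows essentially the same approach as the paper, which simply states that the theorem results from combining Theorem~\ref{th2a}, Corollary~\ref{corol1}, and Lemma~\ref{lemma1}. Your write-up is in fact more careful than the paper's, since you explicitly track which hypotheses of Lemma~\ref{lemma1} are secured by the constraints (\ref{lccond}) and spell out the $y^{3}\leftrightarrow y^{4}$ dictionary for the second class.
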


Both ansatz of type (\ref{class1}) and (\ref{class2}) consist particular
cases of parameterizations of metrics in the form (\ref{ans1}). Via
frame/coordinate transform into a finite region of a point $~^{0}u\in
\mathbf{V}$ any (pseudo) Riemannian metric can be represented in an above
mentioned d--metric form. On Lorentz manifolds, only one of coordinates $%
y^{a}$ is timelike, i.e. the solutions of type (\ref{class1}) and (\ref%
{class2}) can not be transformed mutually via nonholonomic frame
deformations preserving causality.

\subsubsection{Effective linearization of Ricci tensors}

We can consider such local coordinates on an open region $U\subset \mathbf{V}
$ when computing the N--adapted coefficients of the Riemann and Ricci
d--tensors, see formulas (\ref{dcurv}) and (\ref{driccic}), we can neglect
contributions from quadratic terms of type $\widehat{\Gamma }\cdot \widehat{%
\Gamma }$ (preserving values of type $\partial _{\mu }\widehat{\Gamma }).$
These are N--adapted \ analogs of normal coordinates when $\widehat{\Gamma }%
(u_{0})=0$ for points $u_{0}$, for instance, belonging to a line on $U.$
Such conditions can be satisfied for decompositions of metrics and
connections on a small parameter like it is explained in details in Ref.
\cite{ijgmmp} (we shall consider decompositions on a small eccentricity
parameter $\varepsilon ,$ for ellipsoid configurations, in Section \ref{s5}%
). Other possibilities can be found if we impose nonholonomic constraints,
for instance, of type $h_{4}^{\ast }=0$ but for nonzero $h_{4}(x^{k},y^{3})$
and/or $h_{4}^{\ast \ast }(x^{k},y^{3}).$ Such constraints can be solved in
non--explicit form and define a corresponding subclass of N--adapted frames.
Considering additional nonholonomic deformations with a general decoupling
with respect to a \textquotedblright convenient\textquotedblright\ system of
reference/coordinates, we can deform the equations and solutions to
configurations with contributions from $\widehat{\Gamma }\cdot \widehat{%
\Gamma }$ terms.

\begin{theorem}[effective linearized decoupling]
\label{th2}The modified gravitational equations (\ref{cdeinst}) and (\ref%
{lcconstr}), via nonholonomic frame deformations to a metric $\mathbf{g}$ (%
\ref{paramdcoef}) and matter source $\mathbf{\Upsilon }_{~\delta }^{\beta }$
(\ref{source}), can be considered for an open region $U\subset \mathbf{V}$
when the contributions from terms of type $\widehat{\Gamma }\cdot \widehat{%
\Gamma }$ are small and we obtain and effective system of PDE with
h--v--decoupling: {\small
\begin{equation}
\widehat{R}_{1}^{1}=\widehat{R}_{2}^{2}=\frac{-1}{2g_{1}g_{2}}%
[g_{2}^{\bullet \bullet }-\frac{g_{1}^{\bullet }g_{2}^{\bullet }}{2g_{1}}-%
\frac{\left( g_{2}^{\bullet }\right) ^{2}}{2g_{2}}+g_{1}^{\prime \prime }-%
\frac{g_{1}^{\prime }g_{2}^{\prime }}{2g_{2}}-\frac{(g_{1}^{\prime })^{2}}{%
2g_{1}}]=-\ ^{v}\Upsilon ,  \label{eq1}
\end{equation}%
\begin{eqnarray}
\widehat{R}_{3}^{3} &=&\widehat{R}_{4}^{4}=-\frac{1}{2h_{3}h_{4}}%
[h_{4}^{\ast \ast }-\frac{\left( h_{4}^{\ast }\right) ^{2}}{2h_{4}}-\frac{%
h_{3}^{\ast }h_{4}^{\ast }}{2h_{3}}]-\frac{1}{2\underline{h}_{3}\underline{h}%
_{4}}[\underline{h}_{3}^{\circ \circ }-\frac{\left( \underline{h}_{3}^{\circ
}\right) ^{2}}{2\underline{h}_{3}}-\frac{\underline{h}_{3}^{\circ }%
\underline{h}_{4}^{\circ }}{2\underline{h}_{4}}]  \nonumber  \\
&=&-\Upsilon -\underline{\Upsilon },  \label{eq2} \\
\widehat{R}_{3k} &=&\frac{w_{k}}{2h_{4}}[h_{4}^{\ast \ast }-\frac{\left(
h_{4}^{\ast }\right) ^{2}}{2h_{4}}-\frac{h_{3}^{\ast }h_{4}^{\ast }}{2h_{3}}%
]+\frac{h_{4}^{\ast }}{4h_{4}}\left( \frac{\partial _{k}h_{3}}{h_{3}}+\frac{%
\partial _{k}h_{4}}{h_{4}}\right) -\frac{\partial _{k}h_{4}^{\ast }}{2h_{4}}
\label{eq3} \\
&&+\frac{\underline{h}_{3}}{2\underline{h}_{4}}\underline{n}_{k}^{\circ
\circ }+\left( \frac{\underline{h}_{3}}{\underline{h}_{4}}\underline{h}%
_{4}^{\circ }-\frac{3}{2}\underline{h}_{3}^{\circ }\right) \frac{\underline{n%
}_{k}^{\circ }}{2\underline{h}_{4}}=0,  \nonumber  \\
\widehat{R}_{4k} &=&\frac{\underline{w}_{k}}{2\underline{h}_{3}}[\underline{h%
}_{3}^{\circ \circ }-\frac{\left( \underline{h}_{3}^{\circ }\right) ^{2}}{2%
\underline{h}_{3}}-\frac{\underline{h}_{3}^{\circ }\underline{h}_{4}^{\circ }%
}{2\underline{h}_{4}}]+\frac{\underline{h}_{3}^{\circ }}{4\underline{h}_{3}}%
\left( \frac{\partial _{k}\underline{h}_{3}}{\underline{h}_{3}}+\frac{%
\partial _{k}\underline{h}_{4}}{\underline{h}_{4}}\right) -\frac{\partial
_{k}\underline{h}_{3}^{\circ }}{2\underline{h}_{3}}  \nonumber  \\
&&+\frac{h_{4}}{2h_{3}}n_{k}^{\ast \ast }+\left( \frac{h_{4}}{h_{3}}%
h_{3}^{\ast }-\frac{3}{2}h_{4}^{\ast }\right) \frac{n_{k}^{\ast }}{2h_{3}}=0,
\label{eq4}
\end{eqnarray}%
\begin{eqnarray}
w_{i}^{\ast } &=&(\partial _{i}-w_{i}\partial _{3})\ln |h_{3}|,(\partial
_{i}-w_{i}\partial _{3})\ln |h_{4}|=0  \nonumber  \\
(\partial _{k}-w_{k}\partial _{3})w_{i} &=&(\partial _{i}-w_{i}\partial
_{3})w_{k},n_{i}^{\ast }=0,\partial _{i}n_{k}=\partial _{k}n_{i},
\label{lcconstr1} \\
\underline{w}_{i}^{\circ } &=&0,\ \partial _{i}\underline{w}_{k}=\partial
_{k}\underline{w}_{i},(\partial _{k}-\underline{n}_{k}\partial _{4})%
\underline{n}_{i}=(\partial _{i}-\underline{n}_{i}\partial _{4})\underline{n}%
_{k},\   \nonumber  \\
\underline{n}_{i}^{\circ } &=&(\partial _{i}-\underline{n}_{i}\partial
_{4})\ln |\underline{h}_{4}|,(\partial _{i}-\underline{n}_{i}\partial
_{4})\ln |\underline{h}_{3}|=0  \nonumber  \\
\mathbf{e}_{k}\omega &=&\partial _{k}\omega -\left( w_{i}+\underline{w}%
_{i}\right) \omega ^{\ast }-\left( n_{i}+\underline{n}_{i}\right) \omega
^{\circ }=0.  \label{conf2}
\end{eqnarray}%
}
\end{theorem}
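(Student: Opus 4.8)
The plan is to repeat the N--adapted curvature computation sketched in Appendix \ref{sb}, but now for the \emph{full} ansatz (\ref{ans1})/(\ref{paramdcoef}) rather than for the two Killing--type subclasses treated in Theorem \ref{th2a} and Corollary \ref{corol1}, and then to drop the quadratic $\widehat{\Gamma}\cdot\widehat{\Gamma}$ contributions in the regime stated in the theorem. Concretely, I would first substitute $g_i=g_i(x^k)$, $g_3=\omega^2 h_3\underline{h}_3$, $g_4=\omega^2 h_4\underline{h}_4$, $N_i^3=w_i+\underline{w}_i$, $N_i^4=n_i+\underline{n}_i$ into (\ref{cdc}) and observe that every coefficient of $\widehat{\mathbf{\Gamma}}$ splits into a ``$y^3$--piece'' built from $h_a,w_i,n_i$ and their $\partial_3$--derivatives, a ``$y^4$--piece'' built from $\underline{h}_a,\underline{w}_i,\underline{n}_i$ and their $\partial_4$--derivatives, a piece linear in the derivatives of $\omega$, and products of these. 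Feeding this into (\ref{dcurv}) and (\ref{driccic}), the terms of type $\partial_\mu\widehat{\Gamma}$ respect the splitting (since $\partial_3$ kills the $y^4$--piece and $\partial_4$ kills the $y^3$--piece), while the terms of type $\widehat{\Gamma}\cdot\widehat{\Gamma}$ and the anholonomy contributions $W\widehat{\Gamma}$ are the only ones that can couple the two sectors.

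Next, in the regime of the theorem --- either N--adapted normal coordinates with $\widehat{\Gamma}(u_0)=0$ along a line in $U$, or a decomposition on a small parameter as in \cite{ijgmmp} --- one discards the $\widehat{\Gamma}\cdot\widehat{\Gamma}$ products, so that what remains of $\widehat{\mathbf{R}}_{\beta\delta}$ is additive in the two sectors. Then the h--block equation $\widehat{R}_1^1=\widehat{R}_2^2$ is unchanged and keeps the form (\ref{eq1b}), while each of $\widehat{R}_3^3=\widehat{R}_4^4$, $\widehat{R}_{3k}$ and $\widehat{R}_{4k}$ is obtained by simply adding its $y^3$--sector expression (the left--hand sides of (\ref{eq2b})--(\ref{eq4b}), quoted from Theorem \ref{th2a}) to its $y^4$--sector expression (the left--hand sides of (\ref{eq2c})--(\ref{eq4c}), quoted from Corollary \ref{corol1} with the relabelling $y^3\leftrightarrow y^4$ already built in); the remaining N--adapted components of $\widehat{\mathbf{R}}_{\beta\delta}$ vanish as in Appendix \ref{sb}. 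Lemma \ref{lemma1} and its straightforward extension to $\mathbf{e}_k\omega=\partial_k\omega-(w_i+\underline{w}_i)\omega^\ast-(n_i+\underline{n}_i)\omega^\circ=0$ ensure that $\omega$ drops out of (\ref{eq2})--(\ref{eq4}) and leaves exactly the constraint (\ref{conf2}). Performing on (\ref{cdeinst}) the same standard manipulation used in Theorem \ref{th2a}, and using that the diagonalized source (\ref{source}) is itself chosen in the split form $\Upsilon+\underline{\Upsilon}$ and ${}^v\Upsilon$, one then reads off (\ref{eq1})--(\ref{eq4}).

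For the torsionless part I would compute the nonholonomically induced d--torsion $\widehat{\mathcal{T}}$ of the full ansatz from (\ref{dtors}), impose (\ref{lcconstr}), and split once more: the resulting conditions break into the $y^3$--sector constraints (\ref{lccond}), the $y^4$--sector constraints (\ref{lccondd}), and the conformal constraint, which together are (\ref{lcconstr1})--(\ref{conf2}). The main technical obstacle is exactly the control of the mixed terms that the linearization is supposed to remove: the N--elongated operators $\mathbf{e}_i=\partial_i-(w_i+\underline{w}_i)\partial_3-(n_i+\underline{n}_i)\partial_4$ entering $\Omega_{ij}^a$ in (\ref{anhcoef}), and genuinely bilinear products such as $w_i\underline{w}_j$ and $\partial_3 h_a\cdot\partial_4\underline{h}_b$. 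One must argue that these are either second order in the linearization/small--parameter expansion on $U$, or are killed once the constraints $n_i^\ast=0$ and $\underline{w}_i^\circ=0$ (which force $n_i=n_i(x^k)$, $\underline{w}_i=\underline{w}_i(x^k)$, consistently with the ansatz forms (\ref{class1}) and (\ref{class2})) together with the $\partial_i$--commutation relations of (\ref{conf2}) are used; this is where the word ``effective'' in the statement does the real work, and where I would be most careful in delimiting the region $U$ on which the approximation is controlled.
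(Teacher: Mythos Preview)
Your proposal is correct and matches the paper's approach in spirit, though you give considerably more detail than the paper does. The paper's own proof is a single sentence: it simply states that the result follows from Theorems \ref{th2a} and \ref{th2b} applied to any superposition of the two Killing--type ansatz (\ref{class1}) and (\ref{class2}) that together build (\ref{ans1}). In other words, the paper takes for granted exactly the mechanism you spell out --- that after dropping the $\widehat{\Gamma}\cdot\widehat{\Gamma}$ terms the Ricci components become additive in the $y^3$-- and $y^4$--sectors, so (\ref{eq1})--(\ref{eq4}) are just the sums of (\ref{eq1b})--(\ref{eq4b}) and (\ref{eq1c})--(\ref{eq4c}), with Lemma \ref{lemma1} handling $\omega$ --- and does not work through the cross--term bookkeeping you flag at the end. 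Your more explicit treatment of the mixed terms and of why the linearization on $U$ is needed to kill them is a genuine addition of rigor over the paper's terse argument, not a different route.
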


\begin{proof}
It follows from the Theorems \ref{th2a} and \ref{th2b} for any superposition
of ansatz (\ref{class1}) and (\ref{class2}) resulting into metrics of type (%
\ref{ans1}). $\square $
\end{proof}

\vskip5pt

In general, the solutions defined by a system (\ref{eq1})--(\ref{conf2}) can
not be transformed into solutions parameterized by an ansatz (\ref{class1})
and/or (\ref{class2}). As we shall prove in Section \ref{s3}, the general
solutions of the such systems of PDE are determined by corresponding sets of
generating and integration functions. A solution for (\ref{eq1})--(\ref%
{conf2}) contains a larger set of $h-v$--generating functions than those
with some N--coefficients stated to be zero.

\subsection{Decoupling of MGYMH equations}

We write for $\mathbf{T}_{\beta \delta }$ in (\ref{effectsource}) the
corresponding values for the energy--momentum tensors of the Yang--Mills,
YM, and Higgs, H, fields, when
\begin{equation*}
\mathbf{\Upsilon }_{\beta \delta }=8\pi \ ^{ef}\eta \ G\ \ ^{m}\mathbf{T}%
_{\beta \delta }+\ ^{ef}\mathbf{T}_{\beta \delta },
\end{equation*}%
where we introduced the coefficient $8\pi $ in order to get in the Einstein
limit solutions parameterized in the form \cite{bart,dzh,br1,br2}. A variational
N--adapted procedure can be elaborated on a manifold $\mathbf{V}$ when the
operator $\widehat{\mathbf{D}}$ is used instead of $\nabla $ and all
computations are performed with respect N--adapted bases (\ref{nader}) and (%
\ref{nadif}). It is completely similar to that for the Levi--Civita
connection but with N--elongated partial derivatives for a gravitating
non--Abbelian SU(2) gauge field $\mathbf{A}=\mathbf{A}_{\mu }\mathbf{e}^{\mu
}$ coupled to a triplet Higgs field $\Phi .$ We derive this system of
modified gravitational and matter field equations (in brief, MGYMH):%
\begin{eqnarray}
\widehat{\mathbf{R}}_{\ \beta \delta }-\frac{1}{2}\mathbf{g}_{\beta \delta
}\ ^{s}R &=&8\pi \ ^{ef}\eta G\left( \ ^{H}T_{\beta \delta }+\ ^{YM}T_{\beta
\delta }\right) +\ ^{ef}\mathbf{T}_{\beta \delta },  \label{ym1} \\
\ D_{\mu }(\sqrt{|g|}F^{\mu \nu }) &=&\frac{1}{2}ie\sqrt{|g|}[\Phi ,D^{\nu
}\Phi ],  \label{heq2} \\
D_{\mu }(\sqrt{|g|}\Phi ) &=&\lambda \sqrt{|g|}(\ \Phi _{\lbrack
0]}^{2}-\Phi ^{2})\Phi ,  \label{heq3}
\end{eqnarray}%
where the stress--energy tensors for the YM and H fields are computed
{\small
\begin{eqnarray}
\ ^{YM}T_{\beta \delta } &=&2Tr\left( \mathbf{g}^{\mu \nu }F_{\beta \mu
}F_{\delta \nu }-\frac{1}{4}\mathbf{g}_{\beta \delta }F_{\mu \nu }F^{\mu \nu
}\right) ,  \label{source1} \\
\ ^{H}T_{\beta \delta } &=&Tr[\frac{1}{4}\ (D_{\delta }\Phi \ D_{\beta }\Phi
+D_{\beta }\Phi \ D_{\delta }\Phi )-\frac{1}{4}\mathbf{g}_{\beta \delta
}D_{\alpha }\Phi \ D^{\alpha }\Phi ]-\mathbf{g}_{\beta \delta }\mathcal{V}%
(\Phi ).  \label{source2}
\end{eqnarray}%
} The value $\ ^{ef}\mathbf{T}_{\beta \delta }$ in (\ref{ym1}) is the same
as in (\ref{efm}) but for zero pressure, $p=0.$ The nonholonomic and
modified gravitational interactions of matter fields and the constants in (%
\ref{ym1})--(\ref{heq3}) are treated as follows: \ The non--Abbelian gauge
field with derivative $D_{\mu }=\mathbf{e}_{\mu }$ $+ie[\mathbf{A}_{\mu },]$
is changed into $\ \widehat{\mathcal{D}}_{\delta }=\widehat{\mathbf{D}}%
_{\delta }+ie[\mathbf{A}_{\delta },].$ The vector field $\mathbf{A}_{\delta }
$ is characterised by curvature
\begin{equation}
F_{\beta \mu }=\mathbf{e}_{\beta }\mathbf{A}_{\mu }-\mathbf{e}_{\mu }\mathbf{%
A}_{\beta }+ie[\mathbf{A}_{\beta },\mathbf{A}_{\mu }],  \label{gaugestr}
\end{equation}%
where $e$ is the coupling constant, $i^{2}=-1,$ and $[\cdot ,\cdot ]$ is
used for the commutator. We also consider that the value $\Phi _{\lbrack 0]}$
in (\ref{heq3}) is the vacuum expectation of the Higgs field which
determines the mass $\ ^{H}M=\sqrt{\lambda }\eta ;$ the value $\lambda $ is
the \ constant of scalar field self--interaction with potential $\mathcal{V}%
(\Phi )=\frac{1}{4}\lambda Tr(\Phi _{\lbrack 0]}^{2}-\Phi ^{2})^{2};$ the
gravitational constant $G$ defines the Plank mass $M_{Pl}=1/\sqrt{G}$ and it
is also the mass of a gauge boson, $\ ^{W}M=ev.$

Let us consider that a \textquotedblright prime\textquotedblright\ solution
is known for the system (\ref{ym1})--(\ref{heq3}) (given by data for a
diagonal d--metric $\ ^{\circ }\mathbf{g=}[\ ^{\circ }g_{i}(x^{1}),\ ^{\circ
}h_{a}(x^{k}),$ $\ ^{\circ }N_{i}^{a}=0]$ and matter fields $\ ^{\circ
}A_{\mu }(x^{1})$ and $^{\circ }\Phi (x^{1}),$ for instance, of type
constructed in Ref. \cite{br3} (see also Appendix \ref{aseymheq})). We
suppose that there are satisfied the following conditions:

\begin{enumerate}
\item The d--metric $~^{\eta }\mathbf{g}$ with nontrivial N--coefficients
for $\ ^{\circ }\mathbf{g\rightarrow }~^{\eta }\mathbf{g}$ is parameterized
by an ansats of type (\ref{ans1}), {\small
\begin{eqnarray}
\ \mathbf{g} &=&\eta _{i}(x^{k})\ ^{\circ }g_{i}(x^{1})dx^{i}\otimes
dx^{i}+\eta _{a}(x^{k},y^{a})\ ^{\circ }h_{a}(x^{1},x^{2})\mathbf{e}%
^{a}\otimes \mathbf{e}^{a}  \nonumber  \\
&=&g_{i}(x^{k})dx^{i}\otimes dx^{i}+\omega
^{2}(x^{k},y^{b})h_{a}(x^{k},y^{a})\mathbf{e}^{a}\otimes \mathbf{e}^{a},
\label{dm1} \\
\mathbf{e}^{3} &=&dy^{3}+[w_{i}+\underline{w}_{i}]dx^{i},\ \mathbf{e}%
^{4}=dy^{3}+[n_{i}+\underline{n}_{i}]dx^{i}.  \nonumber
\end{eqnarray}%
}

\item The non--Abbelian gauge fields are nonholonomically deformed as
\begin{equation}
A_{\mu }(x^{i},y^{3})=\ ^{\circ }A_{\mu }(x^{1})+\ ^{\eta }A_{\mu
}(x^{i},y^{a}),  \label{ans2a}
\end{equation}%
where $\ ^{\circ }A_{\mu }(x^{1})$ is defined by an ansatz (\ref{ans1a}) and
$\ ^{\eta }A_{\mu }(x^{i},y^{a})$ are any functions
\begin{equation}
F_{\beta \mu }=\ ^{\circ }F_{\beta \mu }(x^{1})+\ \ ^{\eta }F_{\beta \mu
}(x^{i},y^{a})=s\sqrt{|g|}\varepsilon _{\beta \mu },  \label{gaugstr1}
\end{equation}%
for $s=const$ and $\varepsilon _{\beta \mu }$ being the absolute
antisymmetric tensor. The gauge field curvatures $F_{\beta \mu },\ ^{\circ
}F_{\beta \mu }$ and $\ ^{\eta }F_{\beta \mu }$ are computed by introducing (%
\ref{ans1a}) and (\ref{ans2a}) into (\ref{gaugestr}). It should be
emphasized that an antisymmetric tensor $F_{\beta \mu }$ (\ref{gaugstr1})
solves the equations $D_{\mu }(\sqrt{|g|}F^{\mu \nu })=0;$ we can always
determine $\ ^{\eta }F_{\beta \mu }$ and $\ ^{\eta }A_{\mu },$ for any given
$\ ^{\circ }A_{\mu }$ and$\ ^{\circ }F_{\beta \mu }.$

\item The scalar field is nonholonomically modified by gravitational and
gauge field interactions $\ \ ^{\circ }\Phi (x^{1})\rightarrow \Phi
(x^{i},y^{a})=\ ^{\Phi }\eta (x^{i},y^{a})\ ^{\circ }\Phi (x^{1})$ by a
polarization $\ ^{\Phi }\eta $ is such way that
\begin{equation}
D_{\mu }\Phi =0\mbox{ and \ }\Phi (x^{i},y^{a})=\pm \Phi _{\lbrack 0]}.
\label{cond3}
\end{equation}%
Such nonholonomic modifications of the nonlinear scalar field is not trivial
even with respect to N--adapted frames $\mathcal{V}(\Phi )=0$ and $\
^{H}T_{\beta \delta }=0,$ see formula (\ref{source1}). For ansatz (\ref{dm1}%
), the equations (\ref{cond3}) transform into%
\begin{eqnarray}
(\partial /\partial x^{i}-A_{i})\Phi &=&(w_{i}+\underline{w}_{i})\Phi ^{\ast
}+(n_{i}+\underline{n}_{i})\Phi ^{\circ },  \label{cond2a} \\
\left( \partial /\partial y^{3}-A_{3}\right) \Phi &=&0,\ \left( \partial
/\partial y^{4}-A_{4}\right) \Phi =0.  \nonumber
\end{eqnarray}%
So, a nonolonomically constrained/deformed Higgs $\Phi $ field (depending in
non--explicit form on two variables because of constraint (\ref{cond3}))
modifies indirectly the off--diagonal components of the metric via $w_{i}+%
\underline{w}_{i}$ and $n_{i}+\underline{n}_{i}$ and conditions (\ref{cond2a}%
) for $\ ^{\eta }A_{\mu }.$ Such modifications can compensate $f$%
--modifications.

\item The non--Abbelian gauge fields (\ref{gaugstr1}) with the potential $%
A_{\mu }$ (\ref{ans2a}) modified nonholonomically by $\Phi $ subjected to
the conditions (\ref{cond3}) and with gravitational $f$--modifications
determine exact solutions of the system (\ref{eq2}) and (\ref{eq3}) if the
metric ansatz is chosen to be in the form (\ref{dm1}). The energy--momentum
tensor is computed\footnote{%
such a calculus in coordinate frames is provided in sections 3.2 and 6.51 in
Ref. \cite{lidsey}} $\ ^{YM}T_{\beta }^{\alpha }=-4s^{2}\delta _{\beta
}^{\alpha }m$ (see similar results in sections 3.2 and 6.51 in Ref. \cite%
{lidsey}). Such (modified) gravitationally interacting gauge and Higgs
fields, with respect to N--adapted frames, result in an effective
cosmological constant $\ ^{s}\lambda =8\pi s^{2}$ which should be added to a
respective source (\ref{source}).
\end{enumerate}

We conclude that an ansatz $\mathbf{g}=[\eta _{i}\ ^{\circ }g_{i},\eta _{a}\
^{\circ }h_{a};w_{i},n_{i}]$ (\ref{dm1}) and certain gauge--scalar
configurations $(A,\Phi )$ subjected to above mentioned conditions 1-4
define a decoupling of the system (\ref{ym1})--(\ref{heq3}) in a form stated
respectively by the Theorems \ref{th2a}, \ref{th2b}, and/or \ref{th2} if the
sources (\ref{source}) are redefined in the form
\begin{equation*}
\mathbf{\Upsilon }_{~\delta }^{\beta }=diag[\mathbf{\Upsilon }_{\alpha
}]\rightarrow \mathbf{\Upsilon }_{~\delta }^{\beta }+\ ^{YM}T_{~\delta
}^{\beta }=diag[\mathbf{\Upsilon }_{\alpha }-4s^{2}\delta _{\beta }^{\alpha
}].
\end{equation*}
In N--adapted frames the contributions of effective $f$--sources and matter
fields is defined by an effective cosmological constant $\ ^{s}\lambda .$

\section{Off--Diagonal Solutions for Modified Gravitational YMH Eqs}

\label{s3} In this section, we show how the decoupling property of the MGYMH
equations allows us to integrate such PDE in very general forms depending on
properties of coefficients of ansatz for metrics.

\subsection{Generating solutions with weak one Killing symmetry}

We prove that the MGYMH equations encoding gravitational and YMH
interactions and satisfying the conditions of Theorem \ref{th2a} can be
integrated in general forms for $h_{a}^{\ast }\neq 0$ and certain special
cases with zero and non--zero sources (\ref{source}). In general, such
generic off--diagonal metrics are determined by generating functions
depending on three/four coordinates.

\subsubsection{(Non) vacuum metrics with $h_{a}^{\ast }\neq 0$}

For ansatz (\ref{ans1}) with data $\omega =1,\underline{h}_{3}=1,\underline{w%
}_{i}=0$ and $\underline{n}_{i}=0$ for (\ref{paramdcoef}), when $h_{a}^{\ast
}\neq 0,$ and the condition that the source
\begin{equation}
\mathbf{\Upsilon }_{~\delta }^{\beta }=diag[\mathbf{\Upsilon }_{\alpha }:%
\mathbf{\Upsilon }_{~1}^{1}=\mathbf{\Upsilon }_{~2}^{2}=\Upsilon
(x^{k},y^{3})-4s^{2};\mathbf{\Upsilon }_{~3}^{3}=\mathbf{\Upsilon }%
_{~4}^{4}=~^{v}\Upsilon (x^{k})-4s^{2}],  \label{source1a}
\end{equation}%
is not zero, the solutions of Einstein eqs can be constructed following

\begin{theorem}
\label{th3a}The MGYMH equations (\ref{eq1b})-- (\ref{eq4b}) with source (\ref%
{source1a}) can be integrated in general forms by metrics
\begin{eqnarray}
\mathbf{g} &=&\epsilon _{i}e^{\psi (x^{k})}dx^{i}\otimes dx^{i}+\frac{|%
\tilde{\Theta}^{\ast }|^{2}}{\breve{\Upsilon}\tilde{\Theta}^{2}}\mathbf{e}%
^{3}\otimes \mathbf{e}^{3}-\frac{\tilde{\Theta}^{2}}{|\Lambda |}\underline{h}%
_{4}(x^{k},y^{4})\mathbf{e}^{4}\otimes \mathbf{e}^{4},  \nonumber  \\
\mathbf{e}^{3} &=&dy^{3}+\partial _{i}\widetilde{K}(x^{k},y^{3})dx^{i},\
\mathbf{e}^{4}=dy^{4}+\partial _{i}n(x^{k})dx^{i},  \label{sol1}
\end{eqnarray}%
with coefficients determined by generating functions $\psi (x^{k}),\tilde{%
\Theta}(x^{k},y^{3}),$ $\tilde{\Theta}^{\ast }\neq 0,$ $n_{i}(x^{k})$ and $%
\underline{h}_{4}(x^{k},y^{4}),$ and effective cosmological constant $%
\Lambda $ and source $\Upsilon -4s=\breve{\Upsilon}(x^{k})\neq 0$ following
recurrent formulas and conditions%
\begin{eqnarray}
\epsilon _{1}\psi ^{\bullet \bullet }+\epsilon _{2}\psi ^{\prime \prime }
&=&2\ [\ ^{v}\Upsilon -4s^{2}];  \label{sol1a} \\
h_{4} &=&-K^{2}=-\left[ \ _{0}K^{2}+\int dy^{3}\frac{(\Theta ^{2})^{\ast }}{%
4(\Upsilon -4s^{2})}\right]  \label{sol1b} \\
&=&(\ _{0}K^{2}+\tilde{\Theta}^{2}/\Lambda );  \nonumber  \\
h_{3} &=&B^{2}=4(K^{\ast })^{2}/\Theta ^{2}  \nonumber  \\
&=&(\Theta ^{\ast })^{2}/4(\Upsilon -4s^{2})^{2}\left[ \ _{0}K^{2}+\int
dy^{3}(\Theta ^{2})^{\ast }/4(\Upsilon -4s^{2})\right]  \nonumber  \\
&=&|\tilde{\Theta}^{\ast }|^{2}/\int dy^{3}(\Upsilon -4s^{2})(\tilde{\Theta}%
^{2})^{\ast }=|\tilde{\Theta}^{\ast }|^{2}/\breve{\Upsilon}\tilde{\Theta}%
^{2};  \label{sol1c} \\
w_{i} &=&\partial _{i}\phi /\phi ^{\ast }=\partial _{i}\Theta /\Theta ^{\ast
},  \label{sol1d} \\
&=&\partial _{i}K/K^{\ast }=\partial _{i}|\tilde{\Theta}|/|\tilde{\Theta}%
|^{\ast },\mbox{ if }\Upsilon -4s=\breve{\Upsilon}(x^{k}).  \nonumber
\end{eqnarray}%
where the constraints
\begin{eqnarray}
w_{i}^{\ast } &=&(\partial _{i}-w_{i}\partial _{3})\ln |h_{3}|,(\partial
_{i}-w_{i}\partial _{3})\ln |h_{4}|=0,  \label{lccondm} \\
\partial _{k}w_{i} &=&\partial _{i}w_{k},n_{k}\underline{h}_{4}^{\circ }=%
\mathbf{\partial }_{k}\underline{h}_{4},\partial _{i}n_{k}=\partial
_{k}n_{i}.  \nonumber
\end{eqnarray}%
are used for the Levi--Civita configurations.
\end{theorem}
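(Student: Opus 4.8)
\textbf{Proof plan for Theorem \ref{th3a}.}

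The plan is to integrate the system (\ref{eq1b})--(\ref{eq4b}) with the specialized source (\ref{source1a}) equation by equation, exploiting the triangular (decoupled) structure already established in Theorem \ref{th2a}. First I would treat the h--sector: equation (\ref{eq1b}) with $\mathbf{\Upsilon}_{1}^{1}=\mathbf{\Upsilon}_{2}^{2}={}^{v}\Upsilon-4s^{2}$ is a 2--d equation for a conformally flat metric, so writing $g_{1}=\epsilon_{1}e^{\psi}$, $g_{2}=\epsilon_{2}e^{\psi}$ reduces it to the linear Poisson--type equation (\ref{sol1a}), $\epsilon_{1}\psi^{\bullet\bullet}+\epsilon_{2}\psi^{\prime\prime}=2[{}^{v}\Upsilon-4s^{2}]$; this is standard and I would only note that $\psi$ is the generating function here.

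Next comes the v--sector, equation (\ref{eq2b}), which is the real computational core. With the abbreviation that all derivatives are in $y^{3}$, I would introduce the substitution $h_{4}=-K^{2}$ (so $h_{4}<0$ for the timelike signature) and observe that (\ref{eq2b}) becomes, after multiplying through by $-2h_{3}h_{4}$, a relation in which $h_{3}$ can be algebraically solved in terms of $h_{4}^{\ast}$, $h_{4}$ and $\Upsilon-4s^{2}$. The cleanest route is to define the generating function $\Theta$ (equivalently $\phi$) via $h_{3}=(\Theta^{\ast})^{2}/4(\Upsilon-4s^{2})^{2}h_{4}$ and then integrate the resulting first--order ODE for $h_{4}$ in $y^{3}$, producing $h_{4}=-[\,_{0}K^{2}+\int dy^{3}\,(\Theta^{2})^{\ast}/4(\Upsilon-4s^{2})\,]$ with integration constant $\,_{0}K^{2}(x^{k})$; this is exactly (\ref{sol1b})--(\ref{sol1c}). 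In the special case $\Upsilon-4s=\breve{\Upsilon}(x^{k})$ independent of $y^{3}$, one can pull $\breve{\Upsilon}$ out of the integral, set $\tilde{\Theta}$ so that $\tilde{\Theta}^{2}=\int dy^{3}(\Theta^{2})^{\ast}$ up to the constant, and rewrite the coefficients in the compact ``tilde'' form appearing in the metric (\ref{sol1}), with $h_{3}=|\tilde{\Theta}^{\ast}|^{2}/\breve{\Upsilon}\tilde{\Theta}^{2}$ and $h_{4}=-\tilde{\Theta}^{2}/|\Lambda|$ after absorbing $\,_{0}K^{2}$ into the effective cosmological constant $\Lambda$; a vertical conformal factor $\underline{h}_{4}(x^{k},y^{4})$ is then appended using the effective--Killing freedom noted before Theorem \ref{th2a} and Lemma \ref{lemma1}.

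For the N--connection coefficients I would proceed as in the decoupling scheme: once $h_{a}$ are known, equation (\ref{eq3b}) is \emph{linear algebraic} in $w_{k}$, and since the bracket $h_{4}^{\ast\ast}-(h_{4}^{\ast})^{2}/2h_{4}-h_{3}^{\ast}h_{4}^{\ast}/2h_{3}=-2h_{3}h_{4}\,(\Upsilon-4s^{2})$ is generically nonzero, one solves $w_{k}=\partial_{k}(\text{function})/(\text{function})^{\ast}$; the several equivalent expressions in (\ref{sol1d}) come from rewriting this ratio in terms of $\phi$, $\Theta$, $K$ or $|\tilde{\Theta}|$, which are all related by multiplicative functions of $x^{k}$ (so the ratio is unchanged). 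Equation (\ref{eq4b}) is a linear second--order ODE in $y^{3}$ for $n_{k}$, integrable by two quadratures; in the reduced presentation it collapses to $n_{k}=\partial_{k}n(x^{k})$ with an integration function $n(x^{k})$. Finally I would impose the Levi--Civita constraints (\ref{lccond}): these are first--order PDE tying $w_{i}$ to $\ln|h_{3}|$, forcing $(\partial_{i}-w_{i}\partial_{3})\ln\sqrt{|h_{4}|}=0$ and $\partial_{k}w_{i}=\partial_{i}w_{k}$, plus $n_{i}^{\ast}=0$, $\partial_{i}n_{k}=\partial_{k}n_{i}$ and the $\underline{h}_{4}$--compatibility $n_{k}\underline{h}_{4}^{\circ}=\partial_{k}\underline{h}_{4}$; one checks that the ansatz $w_{i}=\partial_{i}|\tilde{\Theta}|/|\tilde{\Theta}|^{\ast}$ together with $h_{3}\propto|\tilde{\Theta}^{\ast}|^{2}/\tilde{\Theta}^{2}$ satisfies the first two automatically (this is where the special structure $\Upsilon-4s=\breve{\Upsilon}(x^{k})$ is used), and that $w_{i}=\partial_{i}\widetilde{K}$ for a potential $\widetilde{K}$ is equivalent to the curl condition $\partial_{k}w_{i}=\partial_{i}w_{k}$, which is why the metric (\ref{sol1}) can be written with $w_{i}=\partial_{i}\widetilde{K}$.

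The step I expect to be the main obstacle is the v--sector bookkeeping: showing that the various generating functions $\phi$, $\Theta$, $K$, $\tilde{\Theta}$ produce \emph{consistent} expressions for $h_{3}$, $h_{4}$ and $w_{i}$ across equations (\ref{eq2b}) and (\ref{eq3b}), and in particular that the reduction from the general $y^{3}$--dependent source to the ``tilde'' form $\breve{\Upsilon}(x^{k})$ is exactly the case in which the Levi--Civita conditions (\ref{lccondm}) close without further constraints — the earlier AFDM references \cite{ijgmmp,vexsol1,vexsol2,veymh} handle the analogous algebra, so I would invoke them for the routine verifications and concentrate on the $f(R,T)$--specific change, namely that the effective source is $\Upsilon-4s^{2}$ rather than a bare cosmological constant.
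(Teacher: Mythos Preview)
Your plan is correct and follows essentially the same route as the paper's proof: reduce (\ref{eq1b}) to the 2--d Poisson equation for $\psi$; introduce $K=\sqrt{|h_{4}|}$, $B=\sqrt{|h_{3}|}$ and the generating function $\Theta=e^{\phi}$ (with $\phi=\ln|h_{4}^{\ast}/\sqrt{|h_{3}h_{4}|}|$) to rewrite (\ref{eq2b}) as the pair $\Theta^{\ast}K^{\ast}=(\Upsilon-4s^{2})B^{2}\Theta K$, $\Theta B=2K^{\ast}$, whence $(K^{2})^{\ast}=(\Theta^{2})^{\ast}/4(\Upsilon-4s^{2})$ and the formulas (\ref{sol1b})--(\ref{sol1c}); solve (\ref{eq3b}) algebraically for $w_{i}$ and (\ref{eq4b}) by two quadratures for $n_{k}$; then impose (\ref{lccondm}).

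The only place where the paper is more careful than your sketch is the Levi--Civita step. You assert that $w_{i}=\partial_{i}|\tilde{\Theta}|/|\tilde{\Theta}|^{\ast}$ together with $h_{3}\propto|\tilde{\Theta}^{\ast}|^{2}/\tilde{\Theta}^{2}$ satisfies the first two constraints in (\ref{lccondm}) \emph{automatically}. The paper separates these: the identity $(\partial_{i}-w_{i}\partial_{3})\Theta\equiv 0$ (and hence the same for any functional $H(\Theta)$) handles the $h_{4}$--condition once $\Upsilon-4s^{2}=\breve{\Upsilon}(x^{k})$, but the $h_{3}$--condition $w_{i}^{\ast}=(\partial_{i}-w_{i}\partial_{3})\ln|h_{3}|$ is checked by an explicit differentiation that uses the commutation $(\partial_{i}\check{\Theta})^{\ast}=\partial_{i}\check{\Theta}^{\ast}$ of the generating function, and the curl condition $\partial_{k}w_{i}=\partial_{i}w_{k}$ is what forces $w_{i}=\partial_{i}\widetilde{K}$ for some potential $\widetilde{K}$. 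This is bookkeeping rather than a new idea, so your plan would go through, but you should not claim ``automatic'' without doing that computation.
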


\begin{proof}
We sketch a proof which related to similar ones in \cite%
{vexsol1,vexsol2,ijgmmp} if $\underline{h}_{4}=1.$ The N--adapted
coefficients of a metric (\ref{dm}) are parameterized in the form
\begin{equation}
g_{i}=e^{\psi {(x^{k})}},g_{a}=\omega (x^{k},y^{b})h_{a}(x^{k},y^{3}),\
N_{i}^{3}=w_{i}(x^{k},y^{3}),N_{i}^{4}=n_{i}(x^{k}),  \label{data2}
\end{equation}%
considering that for certain frame/coordinate transforms we can satisfy the
conditions $h_{a}^{\ast }\neq 0,\Upsilon _{2,4}\neq 0.$

\begin{itemize}
\item We introduce the functions
\begin{eqnarray}
\phi (x^{k},y^{3}) &:=&\ln \left\vert \frac{h_{4}^{\ast }}{\sqrt{|h_{3}h_{4}|%
}}\right\vert ,\ \Theta :=e^{{\phi }}  \label{auxcoef} \\
\gamma &:=&\left( \ln \frac{|h_{4}|^{3/2}}{|h_{3}|}\right) ^{\ast },\ \
\alpha _{i}=h_{4}^{\ast }\partial _{i}\phi ,\ \beta =h_{4}^{\ast }\phi
^{\ast }  \label{auxcoef1}
\end{eqnarray}%
The system of equations (\ref{eq1b})--(\ref{eq4b}) transforms into
\begin{eqnarray}
\psi ^{\bullet \bullet }+\psi ^{\prime \prime } &=&2[~^{v}\Upsilon -4s^{2}],
\label{eq1bb} \\
\phi ^{\ast }h_{4}^{\ast } &=&2h_{3}h_{4}\left[ \Upsilon -4s^{2}\right]
\label{eq2bb} \\
\beta w_{i}-\alpha _{i} &=&0,  \label{eq3bb} \\
n_{i}^{\ast \ast }+\gamma n_{i}^{\ast } &=&0,  \label{eq4bb} \\
\partial _{i}\omega -(\partial _{i}\phi /\phi ^{\ast })\omega ^{\ast
}-n_{i}\omega ^{\diamond } &=&0,  \label{confeq}
\end{eqnarray}

\item A horizontal metric $g_{i}(x^{2})$ is for a 2--d subspace and can be
represented in a conformally flat form $\epsilon _{i}e^{\psi
(x^{k})}dx^{i}\otimes dx^{i}.$ For such a h--metric, the equation (\ref%
{eq1bb}) is a 2-d Laplace equation which can be solved exactly if a source $%
^{v}\Upsilon (x^{k})-4s^{2}$ is prescribed from formulas (\ref{source}).

\item For un--known two functions $K=+\sqrt{|h_{4}|},B=+\sqrt{|h_{3}|},$ the
system of equations (\ref{auxcoef}) and (\ref{eq2bb}) can be written in the
form
\begin{eqnarray}
\Theta ^{\ast }K^{\ast } &=&(\Upsilon -4s^{2})B^{2}\Theta K,  \label{au1} \\
\Theta B &=&2K^{\ast }.  \label{au2}
\end{eqnarray}%
We introduce $\Theta B$ from (\ref{au2}) in the right of (\ref{au1}) for $%
A^{\ast }\neq 0,$ and find $2AB\Upsilon =\Phi ^{\ast }$ and devide to (\ref%
{au2}) with nonzero coefficients. We express
\begin{equation}
(K^{2})^{\ast }=(\Theta ^{2})^{\ast }/4(\Upsilon -4s^{2}).  \label{au3}
\end{equation}%
Integrating on $y^{3},$ we find (\ref{sol1b}) for an integration function $\
_{0}K(x^{k}),$ which can be included in $\tilde{\Theta},$ and $\epsilon
_{4}=\pm 1$ depending on signature of the metric. We introduced in that
formula an effective cosmological constant $\Lambda $ and re--defined the
generating function, $\Theta \rightarrow \tilde{\Theta},$
\begin{equation}
\frac{(\Theta ^{2})^{\ast }}{4(\Upsilon -4s^{2})}=\frac{(\tilde{\Theta}%
^{2})^{\ast }}{\Lambda },  \label{auxx4}
\end{equation}%
choosing $K=\tilde{K}=|\tilde{\Theta}|/\sqrt{|\Lambda |}.$ Using (\ref{au2}%
), (\ref{sol1b}) and (\ref{au3}), rewritten in the form $K^{\ast }K=\Theta
^{\ast }\Theta /4(\Upsilon -4s^{2}),$ we obtain $B=2K^{\ast }/\Theta ,$ i.e.
(\ref{sol1c}). That formula is written for $\Upsilon -4s^{2}=\breve{\Upsilon}%
(x^{k})$ when we can transform $h_{a}[\ _{0}K,\Theta ,\Upsilon
-4s^{2}]\rightarrow h_{a}[\tilde{\Theta},\Lambda ],$ for off--diagonal
configurations determined by a cosmological constant $\Lambda $ and
generating function $\tilde{\Theta}(x^{k},y^{3}).$

\item The algebraic equations for $w_{i}$ can be solved by introducing the
coefficients (\ref{auxcoef1}) in (\ref{eq3bb}) for the generating function $%
\phi ,$ or using equivalent variables,
\begin{eqnarray}
w_{i} &=&\partial _{i}\phi /\phi ^{\ast }=\partial _{i}\Theta /\Theta ^{\ast
}  \label{w1b} \\
&=&\partial _{i}K/K^{\ast }=\partial _{i}|\tilde{\Theta}|/|\tilde{\Theta}%
|^{\ast },\mbox{ if }\Upsilon -4s^{2}=\breve{\Upsilon}(x^{k}).  \label{w1bl}
\end{eqnarray}

\item Integrating two times on $y^{3},$ we obtain the solution of (\ref%
{eq4bb}):
\begin{equation*}
n_{k}=\ _{1}n_{k}+\ _{2}n_{k}\int dy^{3}\ h_{3}/(\sqrt{|h_{4}|})^{3}=\
_{1}n_{k}+\ _{2}n_{k}\int dy^{3}K^{2}/B^{3},
\end{equation*}%
for integration functions $\ _{1}n_{k}(x^{i}),\ _{2}n_{k}(x^{i}).$

\item The nonholonomic Levi--Civita conditions (\ref{lccondm}) can not be
solved in explicit form for arbitrary data $(K,\Upsilon -4s^{2}),$ or $%
(\Theta ,\Upsilon -4s^{2}),$ and arbitrary integration functions $\
_{1}n_{k} $ and $\ _{2}n_{k};$ we can fix $\ _{2}n_{k}=0$ and $\
_{1}n_{k}=\partial _{k}n$ with a function $n=n(x^{k}).$ We emphasize that $%
(\partial _{i}-w_{i}\partial _{3})\Theta \equiv 0$ for any $\Theta
(x^{k},y^{3})$ if $w_{i}$ is computed following formula (\ref{w1b}).
Introducing a new functional $H(\Theta )$ instead of $\Theta ,$ we obtain $%
(\partial _{i}-w_{i}\partial _{3})H=\frac{\partial H}{\partial \Theta }%
(\partial _{i}-w_{i}\partial _{3})\Theta =0$. Any formula (\ref{sol1b}) for
functionals of type $h_{4}=H(|\tilde{\Theta}(\Theta )|),$ we solve always
the equations $(\partial _{i}-w_{i}\partial _{3})h_{4}=0,$ which is
equivalent to the second system of equations in (\ref{lccondm}) because $%
(\partial _{i}-w_{i}\partial _{3})\ln \sqrt{|h_{4}|}\sim (\partial
_{i}-w_{i}\partial _{3})h_{4}.$ We compute for the left part of the second
equation, $\ (\partial _{i}-w_{i}\partial _{3})\ln \sqrt{|h_{4}|}=0,$ for a
subclass of generating functions $\Theta =\check{\Theta}$ for which
\begin{equation}
(\partial _{i}\check{\Theta})^{\ast }=\partial _{i}\check{\Theta}^{\ast }
\label{aux4a}
\end{equation}%
and (\ref{sol1b}). The first system of equations in (\ref{lccondm}) are
solved in explicit form if $w_{i}$ are determined by formulas (\ref{w1bl}),
and $h_{3}[\tilde{\Theta}]$ and $h_{4}[\tilde{\Theta},\tilde{\Theta}^{\ast
}] $ are \ respectively for (\ref{sol1b}) and (\ref{sol1c}) when $\Upsilon
-4s=\breve{\Upsilon}(x^{k}).$ We can write write the formulas
\begin{equation*}
w_{i}=\partial _{i}|\tilde{\Theta}|/|\tilde{\Theta}|^{\ast }=\partial
_{i}|\ln \sqrt{|h_{3}|}|/|\ln \sqrt{|h_{3}|}|^{\ast }
\end{equation*}%
if $\tilde{\Theta}=\tilde{\Theta}(\ln \sqrt{|h_{3}|})$ and $h_{3}[\tilde{%
\Theta}[\check{\Theta}]].$ Taking derivative $\partial _{3}$ on both sides
of previous equation, we compute
\begin{equation*}
w_{i}^{\ast }=\frac{(\partial _{i}|\ln \sqrt{|h_{3}|}|)^{\ast }}{|\ln \sqrt{%
|h_{3}|}|^{\ast }}-w_{i}\frac{|\ln \sqrt{|h_{3}|}|^{\ast \ast }}{|\ln \sqrt{%
|h_{3}|}|^{\ast }}.
\end{equation*}%
This way we are able to construct generic off--diagonal configurations with $%
w_{i}^{\ast }=(\partial _{i}-w_{i}\partial _{3})\ln \sqrt{|h_{3}|},$ which
is necessary for zero torsion conditions, if the constraints (\ref{aux4a})
are imposed. The conditions $\partial _{k}w_{i}=\partial _{i}w_{k}$ from the
second line in (\ref{lccondm}) $\ $are satisfied by any
\begin{equation}
\check{w}_{i}=\partial _{i}\check{\Theta}/\check{\Theta}^{\ast }=\partial
_{i}\widetilde{K},  \label{w1c}
\end{equation}%
when a nontrivial $\widetilde{K}(x^{k},y^{3})$ exists. $\square $
\end{itemize}
\end{proof}

\vskip5pt

The solutions constructed in Theorem \ref{th3a}, and those derived following
Corollary \ref{corol1} are very general ones and contain as particular cases
all known exact solutions for (non) holonomic Einstein spaces with Killing
symmetries and $f$--modifications studied in this paper. They can be
generalized to include arbitrary finite sets of parameters, see \cite{ijgmmp}%
.

For arbitrary $K$ and $\Upsilon -4s,$ and related $\Theta ,$ or $\tilde{%
\Theta},$ and $\Lambda ,$ we can generate off--diagonal solutions of (\ref%
{eq1b})--(\ref{eq4b}) with nonholonomically induced torsion completely
determined by the metric structure,
\begin{eqnarray}
ds^{2} &=&e^{\psi (x^{k})}[(dx^{1})^{2}+(dx^{2})^{2}]+K^{2}[dy^{3}+\frac{%
\partial _{i}\Theta }{\Theta ^{\ast }}dx^{i}]^{2}  \label{qelgent} \\
&&-B^{2}[dt+(\ _{1}n_{k}+\ _{2}n_{k}\int dy^{3}K^{2}/B^{3})dx^{k}]^{2},
\nonumber
\end{eqnarray}%
\ where the generating functions $K,B$ and $\Theta $ are related via
formulas (\ref{sol1b}) and (\ref{sol1c}) but not subjected to the conditions
(\ref{lccondm}).

\subsubsection{Off--diagonal effective vacuum EYMH configurations}

We can consider a subclass of generic off--diagonal MGYMH interactions which
can be encoded as effective vacauum Einstein manifolds when $\Upsilon
=4s^{2}.$ In general, such classes of solutions depend parametrically on $%
\Upsilon -4s^{2}$ and do not have a smooth limit from non-vacuum to vacuum
models.

\begin{corollary}
\label{corolvacuum}The effective vacuum solutions for the EYHM systems with
ansatz for metrics of type (\ref{sol1}) with vanishing source (\ref{source1a}%
) are parameter\-iz\-ed in the form
\begin{eqnarray}
\mathbf{g} &=&\epsilon _{i}e^{\psi (x^{k})}dx^{i}\otimes
dx^{i}+h_{3}(x^{k},y^{3})\mathbf{e}^{3}\otimes \mathbf{e}%
^{3}+h_{4}(x^{k},y^{3})\underline{h}_{4}(x^{k},y^{4})\mathbf{e}^{4}\otimes
\mathbf{e}^{4},  \nonumber  \\
\mathbf{e}^{3} &=&dy^{3}+w_{i}(x^{k},y^{3})dx^{i},\ \mathbf{e}%
^{4}=dy^{4}+n_{i}(x^{k})dx^{i},  \label{sol2}
\end{eqnarray}%
where coefficients are defined by solutions of the system
\begin{eqnarray}
\ddot{\psi}+\psi ^{\prime \prime } &=&0,  \label{ep1a} \\
\phi ^{\ast }\ h_{4}^{\ast } &=&0,  \label{ep2a} \\
\beta w_{i}+\alpha _{i} &=&0,  \label{ep3a}
\end{eqnarray}%
where the coefficients are subjected additionally to the zero--torsion
conditions (\ref{lccondm}).
\end{corollary}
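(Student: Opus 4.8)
The plan is to specialize the general construction of Theorem~\ref{th3a} to the vacuum locus $\Upsilon = 4s^2$, i.e. $\breve{\Upsilon}(x^k)=0$, and read off what the recurrent formulas (\ref{sol1a})--(\ref{lccondm}) become. First I would set $\ ^{v}\Upsilon - 4s^2 = 0$ in (\ref{sol1a}): the source term on the right of (\ref{eq1bb}) vanishes, so $\psi^{\bullet\bullet}+\psi^{\prime\prime}=0$, which is exactly (\ref{ep1a}). This is the 2-d Laplace equation for the conformal factor of the h--metric, replacing the 2-d Poisson equation of the non-vacuum case. Second, I would revisit the key v--equation (\ref{eq2bb}), $\phi^{\ast}h_4^{\ast} = 2h_3 h_4(\Upsilon - 4s^2)$; setting $\Upsilon - 4s^2=0$ collapses the right-hand side to zero, giving (\ref{ep2a}), $\phi^{\ast} h_4^{\ast}=0$. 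Third, the linear algebraic equation (\ref{eq3bb}) for $w_i$, $\beta w_i - \alpha_i = 0$ with $\alpha_i = h_4^\ast\partial_i\phi$ and $\beta = h_4^\ast\phi^\ast$, is unchanged in form by the vanishing of the source (the source never entered the definitions (\ref{auxcoef1})), so it is retained as (\ref{ep3a}) (modulo the sign convention already present in the Corollary's display). Finally, the equation (\ref{eq4bb}) for $n_i$ and the conformal constraint (\ref{confeq}) are likewise source-independent and are subsumed, together with the Levi--Civita conditions, into the statement that the coefficients still satisfy (\ref{lccondm}).

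More concretely, I would organize the proof as: (i) invoke Theorem~\ref{th3a} to assert that any solution of (\ref{eq1b})--(\ref{eq4b}) with source (\ref{source1a}) is parameterized by the coefficient relations there, and the metric takes the form (\ref{sol1}); (ii) substitute $\Upsilon = 4s^2$ throughout, noting that this is precisely the condition under which the effective source (\ref{source1a}) vanishes; (iii) observe that (\ref{sol1a}) degenerates to (\ref{ep1a}), that (\ref{eq2bb}) degenerates to (\ref{ep2a}), and that (\ref{eq3bb}), being built from the auxiliary coefficients (\ref{auxcoef})--(\ref{auxcoef1}) which do not involve the source, survives verbatim as (\ref{ep3a}); (iv) rewrite the resulting d--metric: since $|\Lambda|$ and $\breve{\Upsilon}$ no longer make sense as the driving data, one reverts to the raw parameterization $g_a = \omega^2 h_a$, $h_a = h_a(x^k,y^3)$, multiplied by the ``effective Killing'' factor $\underline{h}_4(x^k,y^4)$, which is exactly the ansatz (\ref{sol2}); (v) note that the zero--torsion conditions (\ref{lccondm}) are imposed exactly as before since nothing in their derivation used non-vanishing of the source. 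One subtlety worth flagging explicitly in the write-up is the remark already made before the Corollary: the solutions depend on $\Upsilon - 4s^2$ parametrically through the integration in (\ref{sol1b})--(\ref{sol1c}), so the vacuum family is \emph{not} obtained as a naive $\Upsilon \to 4s^2$ limit of (\ref{sol1}) — rather, one must go back to the pre-integrated equations (\ref{eq1bb})--(\ref{eq3bb}) and set the source to zero there. This is the only point requiring genuine care; everything else is bookkeeping.

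I do not expect a serious obstacle here, since the Corollary is essentially a transcription of Theorem~\ref{th3a} under a parameter restriction, but if pressed I would say the main point to get right is the degeneracy of (\ref{eq2bb}): when $\phi^{\ast}h_4^{\ast}=0$ one has a branching — either $h_4^{\ast}=0$ (so $h_4$ does not depend on $y^3$, and then $h_3$ is essentially free) or $\phi^{\ast}=0$ (so $h_4^{\ast}/\sqrt{|h_3 h_4|}$ is a function of $x^k$ only, linking $h_3$ and $h_4$ by a first-order ODE in $y^3$). Both branches are legitimate effective-vacuum configurations and the statement as given encompasses them via (\ref{ep2a}); I would mention this branching in one sentence rather than dwell on it. The final line of the proof is simply: ``Hence the data $(\psi, \phi, h_a, w_i, n_i, \omega, \underline{h}_4)$ subject to (\ref{ep1a})--(\ref{ep3a}), (\ref{eq4bb}), (\ref{confeq}) and (\ref{lccondm}) define the metric (\ref{sol2}), which completes the proof. $\square$''
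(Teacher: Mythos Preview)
Your proposal is correct and follows essentially the same approach as the paper: both recognize that one must return to the pre-integrated system (\ref{eq1bb})--(\ref{eq3bb}) and set the source to zero there, rather than attempt a limit in the integrated formulas (\ref{sol1b})--(\ref{sol1c}), and both identify the resulting branching $h_4^{\ast}=0$ versus $\phi^{\ast}=0$ in (\ref{ep2a}). The only difference is emphasis: the paper's proof works out each branch explicitly---giving, for the $\phi^{\ast}=0$ branch, the relation $\sqrt{|h_3|}={}^{0}h(\sqrt{|h_4|})^{\ast}$ and the parametrization $h_4=-\varpi^2$, $h_3=({}^{0}h)^2(\varpi^{\ast})^2$ in terms of a generating function $\varpi(x^i,y^3)$, together with the reduced Levi--Civita constraints on $w_i$---whereas you propose to mention the branching in one sentence; since these explicit forms are used downstream (e.g.\ in Corollary~\ref{corym2} and Section~\ref{s4}), it is worth recording them here rather than deferring.
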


\begin{proof}
The equations (\ref{ep1a})-- (\ref{ep3a}) are, respectively, (\ref{eq1bb})--
(\ref{eq3bb}) with zero sources. \ To solve (\ref{ep1a}) we can take $\psi
=0 $ or consider a trivial 2-d wave equation if one of coordinates $x^{k}$
is timelike.

There are two classes of solutions for (\ref{ep2a}):

The first one is to consider that $h_{4}=h_{4}(x^{k}),$ i.e. $h_{4}^{\ast
}=0,$ which states that the equation (\ref{ep2a}) has solutions for
arbitrary function $h_{3}(x^{k},y^{3})$ and arbitrary N--coefficients $%
w_{i}(x^{k},y^{3}),$ see (\ref{auxcoef1}). The functions $h_{3}$ and $w_{i}$
can be taken as generation ones which should be constrained only by the
conditions (\ref{lccondm}). The second equations in such conditions
constrain substantially the class of admissible $w_{i}$ if $h_{4}$ depends
only on $x^{k}.$ Nevertheless, $h_{3}$ can be an arbitrary one generating
solutions which can be extended for nontrivial sources $\underline{\Upsilon }
$ and systems (\ref{eq1c})-- (\ref{lccondd}) and/or (\ref{eq1})-- (\ref%
{conf2}).

The second class of solutions can be generated after corresponding
coordinate transforms, $\phi =\ln \left\vert h_{4}^{\ast }/\sqrt{|h_{3}h_{4}|%
}\right\vert =\ ^{0}\phi =const,~\phi ^{\ast }=0$ and $h_{4}^{\ast }\neq 0.$
We can solve (\ref{ep2a}) if
\begin{equation}
\sqrt{|h_{3}|}=\ ^{0}h(\sqrt{|h_{4}|})^{\ast },  \label{rel1}
\end{equation}%
for $\ ^{0}h=const\neq 0.$ Such v--metrics are generated by any function $%
\varpi (x^{i},y^{3}),$\ with $\varpi ^{\ast }\neq 0,$ when
\begin{equation}
h_{4}=-\varpi ^{2}\left( x^{i},y^{3}\right) \mbox{ and }h_{3}=(\ ^{0}h)^{2}\ %
\left[ \varpi ^{\ast }\left( x^{i},y^{3}\right) \right] ^{2};  \label{aux2}
\end{equation}%
for $N_{i}^{a}\rightarrow 0$ we obtain diagonal metrics with signature $%
(+,+,+,-).$ The coefficients $\alpha _{i}=\beta =0$ in (\ref{ep3a}) and $%
w_{i}(x^{k},y^{3})$ can be any functions subjected to the conditions (\ref%
{lccondm}), or equivalently to
\begin{eqnarray}
w_{i}^{\ast } &=&2\mathbf{\partial }_{i}\ln |\varpi |-2w_{i}(\ln |\varpi
|)^{\ast },  \label{cond1} \\
\partial _{k}w_{i}-\mathbf{\partial }_{i}w_{k} &=&2(w_{k}\partial
_{i}-w_{i}\partial _{k})\ln |\varpi |,  \nonumber
\end{eqnarray}%
for any $n_{i}(x^{k})$ when $\partial _{i}n_{k}=\partial _{k}n_{i}$.
Constraints of type $n_{k}\underline{h}_{4}^{\circ }=\mathbf{\partial }_{k}%
\underline{h}_{4}$ (\ref{aux4}) have to be imposed for a nontrivial multiple
$\underline{h}_{4}.$ $\square $
\end{proof}

\vskip5pt

Using Corollary \ref{corol1}, the "dual" ansatz to (\ref{sol2}) with $%
y^{3}\rightarrow y^{4}$ and $y^{4}\rightarrow y^{3}$ can be used to generate
effective vacuum solutions with weak Killing symmetry on $\partial /\partial
y^{3}.$

\subsection{Effective EYMH configurations with non--Killing symmetries}

The Theorem \ref{th2b} can be applied for constructing non--vacuum and
effective vacuum solutions of the EYMH equations depending on all
coordinates without explicit Killing symmetries.

\subsubsection{Non--vacuum off--diagonal solutions}

We can generate such YMH Einstein manifolds following

\begin{corollary}
\label{corym1}An ansatz of type (\ref{class1}) with a d--metric
\begin{eqnarray*}
\mathbf{g} &=&\epsilon _{i}e^{\psi (x^{k})}dx^{i}\otimes dx^{i}+\omega ^{2}[%
\frac{|\tilde{\Theta}^{\ast }|^{2}}{\breve{\Upsilon}\tilde{\Theta}^{2}}%
\mathbf{e}^{3}\otimes \mathbf{e}^{3}-\frac{\tilde{\Theta}^{2}}{|\Lambda |}%
\underline{h}_{4}(x^{k},y^{4})\mathbf{e}^{4}\otimes \mathbf{e}^{4}], \\
\mathbf{e}^{3} &=&dy^{3}+\partial _{i}\widetilde{K}(x^{k},y^{3})dx^{i},\
\mathbf{e}^{4}=dy^{4}+\partial _{i}n(x^{k})dx^{i},
\end{eqnarray*}
for $\ ^{v}\Upsilon =\Upsilon =\ ^{0}\Upsilon =const,$ where the
coefficients are subjected to conditions (\ref{sol1a})--(\ref{lccondm}) and
\begin{equation*}
\partial _{k}\omega +(\partial _{i}\phi /\phi ^{\ast })\omega ^{\ast
}-n_{i}\omega ^{\circ }=0,
\end{equation*}
defines solutions of the Einstein equations $R_{\alpha \beta }=(\
^{0}\Upsilon -4s^{2})g_{\alpha \beta }$ with nonholonomic interactions anf $%
f $--modifications of YMH fields encoded effectively into the vacuum
structure of GR with nontrivial cosmological constant, $\ ^{0}\Upsilon
-4s^{2}\neq 0.$
\end{corollary}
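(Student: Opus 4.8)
The plan is to read off Corollary \ref{corym1} as the constant-source, ``non-Killing'' specialization of Theorem \ref{th3a}, combined with Lemma \ref{lemma1} and the Levi--Civita reduction provided by the Corollary reducing (\ref{cdeinst}) to (\ref{einsteqs}) under the constraints (\ref{lcconstr}). First I would start from the general integral of the MGYMH system (\ref{eq1b})--(\ref{eq4b}) with source (\ref{source1a}) furnished by Theorem \ref{th3a}: it already gives $g_i=\epsilon_i e^{\psi(x^k)}$ solving the $2$--d Laplace equation (\ref{sol1a}), the $v$--coefficients $h_3,h_4$ expressed through a single generating function $\tilde{\Theta}$ by (\ref{sol1b})--(\ref{sol1c}), the N--coefficient $w_i=\partial_i|\tilde{\Theta}|/|\tilde{\Theta}|^{\ast}$ from (\ref{sol1d}), and $n_i$ by a quadrature. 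Setting $\ ^{v}\Upsilon=\Upsilon=\ ^{0}\Upsilon=const$ makes $\breve{\Upsilon}=\ ^{0}\Upsilon-4s^{2}$ constant; then (\ref{auxx4}) identifies the effective cosmological constant $\Lambda$ with $\ ^{0}\Upsilon-4s^{2}$ (up to a signature-dependent sign), $\ _{0}K$ is absorbed into $\tilde{\Theta}$, and one obtains $h_3=|\tilde{\Theta}^{\ast}|^{2}/\breve{\Upsilon}\tilde{\Theta}^{2}$, $h_4=-\tilde{\Theta}^{2}/|\Lambda|$, $w_i=\partial_i\widetilde{K}$ and $n_i=\partial_i n(x^k)$ exactly as displayed in the corollary.

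Next I would promote this to a genuine four-coordinate solution by switching on the vertical conformal factor $\omega(x^k,y^a)$ and the multiplier $\underline{h}_4(x^k,y^4)$ and invoking Lemma \ref{lemma1}: the gravitational equations (\ref{eq2b})--(\ref{eq4b}) are unchanged by such a transform provided $\mathbf{e}_k\omega=\partial_k\omega-w_i\omega^{\ast}-n_i\omega^{\circ}=0$ and $\widehat{T}_{kb}^{a}=0$, and since here $w_i=\partial_i\phi/\phi^{\ast}$ this is precisely the condition stated in the corollary (a sign normalization of (\ref{confeq})); as in the proof of Lemma \ref{lemma1}, the factor $\underline{h}_4$ enters only through the product $h_4\underline{h}_4$ and does not spoil the decoupling. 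Imposing in addition the nonholonomic constraints (\ref{lccondm}) brings us, by the Corollary establishing (\ref{einsteqs}) under (\ref{lcconstr}), to the coincidence $\widehat{\mathbf{\Gamma}}_{\ \alpha\beta}^{\gamma}=\Gamma_{\ \alpha\beta}^{\gamma}$, hence $\widehat{\mathbf{R}}_{\beta\delta}=R_{\beta\delta}$ and $\ ^{s}R=R$, so that the effective equations (\ref{cdeinst}) turn into the standard Einstein equations for $\nabla$.

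It remains to collapse the matter sector using conditions 1--4 of Section \ref{s3}: the deformed $SU(2)$ potential (\ref{ans2a}) with curvature (\ref{gaugstr1}) proportional to the volume antisymmetric tensor solves $D_\mu(\sqrt{|g|}F^{\mu\nu})=0$ and yields $\ ^{YM}T_{\beta}^{\alpha}=-4s^{2}\delta_{\beta}^{\alpha}$, while the constrained Higgs field with $D_\mu\Phi=0$, $\Phi=\pm\Phi_{[0]}$ gives $\mathcal{V}(\Phi)=0$, $\ ^{H}T_{\beta\delta}=0$ and trivially satisfies (\ref{heq3}) since both sides vanish, and $\ ^{ef}\mathbf{T}_{\beta\delta}$ at $p=0$ from (\ref{efm}) contributes only to the effective cosmological term; altogether the source in (\ref{cdeinst}) reduces to one proportional to $\mathbf{g}_{\alpha\beta}$ with constant coefficient built from $\ ^{0}\Upsilon-4s^{2}$, whence $R_{\alpha\beta}=(\ ^{0}\Upsilon-4s^{2})\mathbf{g}_{\alpha\beta}$ as claimed and the matter equations (\ref{heq2})--(\ref{heq3}) hold as above. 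The main difficulty I expect is not a single hard computation but a consistency check: one must verify that imposing simultaneously the non-Killing $\omega$--dependence on all of $(x^k,y^3,y^4)$, the zero-torsion system (\ref{lccondm}), the first-order constraints on $w_i$ and $n_i$, and the constancy of $\Upsilon$ still leaves a nonempty class of admissible generating data $(\psi,\tilde{\Theta},\widetilde{K},n,\underline{h}_4,\omega)$, and that the sign/signature identifications relating $\breve{\Upsilon}$, $\Lambda$ and $\ ^{0}\Upsilon-4s^{2}$ are those making $h_3,h_4$ real with Lorentzian signature; both points are handled exactly as in the proofs of Theorem \ref{th3a} and Lemma \ref{lemma1}.
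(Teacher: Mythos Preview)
Your proposal is correct and follows essentially the same route as the paper: the paper's proof is a one-line reference, stating that one must take $\ ^{v}\Upsilon=\Upsilon=\ ^{0}\Upsilon=const$ in Theorem \ref{th2b} and Theorem \ref{th3a}, which is exactly what you unpack in detail (your invocation of Lemma \ref{lemma1} is equivalent, since Theorem \ref{th2b} is built from it). Your additional discussion of the Levi--Civita reduction and the collapse of the YMH matter sector via conditions 1--4 goes beyond what the paper's terse proof makes explicit, but it is consistent with the surrounding framework and only strengthens the argument.
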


\begin{proof}
We have to consider $\ ^{v}\Upsilon =\Upsilon =\ ^{0}\Upsilon =const$ in the
Theorems \ref{th2b} and Corollary \ref{th3a}. $\square $
\end{proof}

\vskip5pt

Solutions of type (\ref{class2}) can be generated for conformal factors
being solutions of
\begin{equation*}
\partial _{k}\omega -\underline{w}_{i}(x^{k})\omega ^{\ast }+(\partial _{i}%
\underline{\phi }/\underline{\phi }^{\circ })\omega ^{\circ }=0
\end{equation*}
with respective \textquotedblright dual\textquotedblright\ generating
functions $\omega $ and $\underline{\phi }$ when the data (\ref{sol1a})--(%
\ref{lccondm}) $\ $are re--defined for solutions with weak Killing symmetry
on $\partial /\partial y^{3}.$

\subsubsection{Effective vacuum off--diagonal solutions for $f$-modifications%
}

Vacuum Einstein spaces encoding nonholonomic interactions of MGYMH fields
can be constructed using

\begin{corollary}
\label{corym2}An ansatz of type (\ref{class1}) with d--metric
\begin{eqnarray*}
\mathbf{g} &=&\epsilon _{i}e^{\psi (x^{k})}dx^{i}\otimes dx^{i}+\omega
^{2}(x^{k},y^{a})[(\ ^{0}h)^{2}\ \left[ \varpi ^{\ast }\left(
x^{i},y^{3}\right) \right] ^{2}\mathbf{e}^{3}\otimes \mathbf{e}^{3} \\
&&-\varpi ^{2}\left( x^{i},y^{3}\right) \underline{h}_{4}(x^{k},y^{4})%
\mathbf{e}^{4}\otimes \mathbf{e}^{4}], \\
\mathbf{e}^{3} &=&dy^{3}+w_{i}(x^{k},y^{3})dx^{i},\ \mathbf{e}%
^{4}=dy^{4}+n_{i}(x^{k})dx^{i},
\end{eqnarray*}%
where the coefficients are subjected to conditions (\ref{rel1})--(\ref{cond1}%
), (\ref{lccondm}) and
\begin{equation*}
\partial _{k}\omega -w_{i}\omega ^{\ast }-n_{i}\omega ^{\circ }=0,
\end{equation*}
define generic off--diagonal solutions of $R_{\alpha \beta }=0$.
\end{corollary}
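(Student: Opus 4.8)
The plan is to recognize that this Corollary is the effective--vacuum ($R_{\alpha\beta}=0$) specialization of the non--vacuum Corollary~\ref{corym1}, obtained by setting the effective cosmological constant to zero, i.e. $\ ^{0}\Upsilon=4s^{2}$ so that the source (\ref{source1a}) vanishes. Concretely, I would invoke Corollary~\ref{corolvacuum}, which already supplies the complete list of effective vacuum solutions of the EYMH system with metrics of type (\ref{sol2}), together with the second branch of that proof in which one takes $\phi=\ ^{0}\phi=const$, $\phi^{\ast}=0$, $h_{4}^{\ast}\neq 0$, and solves (\ref{ep2a}) via the relation (\ref{rel1}), $\sqrt{|h_{3}|}=\ ^{0}h\,(\sqrt{|h_{4}|})^{\ast}$. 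Parameterizing $h_{4}=-\varpi^{2}(x^{i},y^{3})$ gives $h_{3}=(\ ^{0}h)^{2}[\varpi^{\ast}(x^{i},y^{3})]^{2}$ exactly as in (\ref{aux2}), which is the $\omega=1$, $\underline{h}_{4}=1$ core of the d--metric displayed in the Corollary.

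First I would check that the $h$--metric part is handled: with $\psi$ chosen so that $\ddot\psi+\psi''=0$ (equation (\ref{ep1a})), the block $\epsilon_{i}e^{\psi(x^{k})}dx^{i}\otimes dx^{i}$ solves the 2--d piece of the vacuum equations, and $\psi=0$ is the simplest admissible choice. Next I would address the $v$--block and the off--diagonal coefficients: the $w_{i}$ are unconstrained by the Ricci equations in this branch (since $\alpha_{i}=\beta=0$ in (\ref{ep3a})), and are pinned only by the Levi--Civita conditions, which reduce precisely to (\ref{cond1}), while any $n_{i}(x^{k})$ with $\partial_{i}n_{k}=\partial_{k}n_{i}$ is allowed, i.e.\ $n_{i}=\partial_{i}n$. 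Then I would restore the conformal factor $\omega(x^{k},y^{a})$ by appealing to Lemma~\ref{lemma1}: the vacuum equations (\ref{eq2b})--(\ref{eq4b}) are invariant under the vertical conformal transform provided $\partial_{k}\omega-w_{i}\omega^{\ast}-n_{i}\omega^{\circ}=0$ and $\widehat{T}_{kb}^{a}=0$, which is exactly the displayed constraint on $\omega$ in the Corollary. Finally I would reinstate the extra multiple $\underline{h}_{4}(x^{k},y^{4})$ via the effective Killing extension used in Theorem~\ref{th2a} and Theorem~\ref{th2b}, noting that a nontrivial $\underline{h}_{4}$ forces the additional constraint $n_{k}\underline{h}_{4}^{\circ}=\partial_{k}\underline{h}_{4}$ already folded into (\ref{lccondm}); assembling these pieces yields a d--metric of type (\ref{class1}) satisfying all of (\ref{eq1b})--(\ref{lccond}) with zero source, hence $R_{\alpha\beta}=0$ after imposing (\ref{lcconstr}).

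The main obstacle I anticipate is not the Ricci computation itself (that is inherited wholesale from Corollary~\ref{corolvacuum}) but the bookkeeping of the Levi--Civita / zero--torsion constraints when all three modifications are present simultaneously: the conformal factor $\omega$, the ``dual'' vertical multiple $\underline{h}_{4}$, and the $w_{i}$ tied to $\varpi$ through (\ref{cond1}). One must verify that the system consisting of (\ref{rel1}), (\ref{aux2}), (\ref{cond1}), the conformal equation on $\omega$, and the $\underline{h}_{4}$--compatibility condition is not overdetermined --- i.e.\ that it genuinely admits solutions with functional freedom (a free $\varpi$, a free $\underline{h}_{4}$, a free $n$, and $\omega$ determined along characteristics of the first--order linear PDE $\mathbf{e}_{k}\omega=0$). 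I would argue this by the same method of characteristics reasoning used after Theorem~\ref{th3a}: for generic $\varpi$ satisfying the analogue of (\ref{aux4a}) one can always integrate the transport equation for $\omega$, so the constraint set is consistent and the solution space is parameterized by the indicated generating and integration functions. With that consistency established, the Corollary follows. $\square$
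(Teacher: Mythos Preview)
Your proposal is correct and follows essentially the same route as the paper, which simply records that the result is a consequence of Theorem~\ref{th2b} (the conformal--factor extension built on Lemma~\ref{lemma1}) together with Corollary~\ref{corolvacuum} (the $\phi=\ ^{0}\phi=const$ vacuum branch yielding (\ref{rel1})--(\ref{cond1})). You have merely unpacked these two ingredients in more detail and added a consistency discussion for the constraint system that the paper leaves implicit.
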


\begin{proof}
It is a consequence of Theorem \ref{th2b} and Corollary \ref{corolvacuum}. $%
\square $
\end{proof}

\vskip5pt

Metric of class (\ref{class2}) are generated if the conformal factor is a
solution of%
\begin{equation*}
\partial _{k}\omega -\underline{w}_{i}(x^{k})\omega ^{\ast }-\underline{n}%
_{i}(x^{k},y^{4})\omega ^{\circ }=0
\end{equation*}
with respective \textquotedblright dual\textquotedblright\ generating
functions $\omega (x^{k},y^{a})$ and $\underline{\phi }(x^{k},y^{4});$ the
data and conditions (\ref{rel1})--(\ref{cond1}) and (\ref{lccondm}) are
reconsidered for ansatz with weak Killing symmetry on $\partial /\partial
y^{3}.$

\section{ $f$--modified and YMH Deformations of Black Holes}

\label{s4} In modified gravity, possible gauge--Higgs nonholonomic
interactions define off--diagonal deformations (for instance, of rotoid
type) of Schwarzschild black holes. In this section, we study effective EYMH
configurations when\ $\ ^{v}\Upsilon =\Upsilon $ and $\Upsilon +\
^{s}\lambda =0.$ Nonholonomic deformations can be derived from any
\textquotedblright prime\textquotedblright\ data $\left( \ ^{\circ }\mathbf{%
g,\ ^{\circ }A}_{\mu },\mathbf{\ ^{\circ }}\Phi \right) $ stating, for
instance, a diagonal cosmological monopole and non--Abbelian black hole
configuration in \cite{br3}. We can chose such a constant $s$ for $\
^{s}\lambda $ when the effective source is zero (if $\ ^{s}\lambda <0,$ this
is possible for $\Upsilon >0).$ The resulting nonholonomic matter field
configurations $\mathbf{\ A}_{\mu }=\mathbf{\ ^{\circ }A}_{\mu }+\ ^{\eta }%
\mathbf{A}_{\mu }$ (\ref{ans2a}), $F_{\mu \nu }=s\sqrt{|\mathbf{g}|}%
\varepsilon _{\mu \nu }$ (\ref{gaugstr1}) and $\mathbf{\ }\Phi =\ ^{\Phi
}\eta \mathbf{\ ^{\circ }}\Phi $ subjected to the conditions (\ref{cond3})
are encoded as vacuum off--diagonal polarizations into solutions of
equations (\ref{eq1})--(\ref{eq2}).

\subsection{(Non) holonomic and $f$--modified non--Abbelian effective vacuum
spaces}

We have to construct off--diagonal solutions of the Einstein equations for
the canonical d--connection taking the vacuum equations $\widehat{\mathbf{R}}%
_{\alpha \beta }=0$ and ansatz $~\mathbf{g}$ (\ref{dm1}) with coefficients
satisfying the conditions%
\begin{eqnarray}
&&\epsilon _{1}\psi ^{\bullet \bullet }(r,\theta )+\epsilon _{2}\psi
^{^{\prime \prime }}(r,\theta )=0;  \label{vacdsolc} \\
h_{3} &=&\pm e^{-2\ ^{0}\phi }\frac{\left( h_{4}^{\ast }\right) ^{2}}{h_{4}}%
\mbox{ for a given }h_{4}(r,\theta ,\varphi ),\ \phi (r,\theta ,\varphi )=\
^{0}\phi =const;\   \nonumber  \\
w_{i} &=&w_{i}(r,\theta ,\varphi ),\mbox{ for any such functions if }\lambda
=0;  \nonumber  \\
n_{i} &=&\ \left\{
\begin{array}{rcl}
\ \ \ ^{1}n_{i}(r,\theta )+\ ^{2}n_{i}(r,\theta )\int \left( h_{4}^{\ast
}\right) ^{2}|h_{4}|^{-5/2}dv,\ \  & \mbox{ if \ } & n_{i}^{\ast }\neq 0; \\
\ ^{1}n_{i}(r,\theta ),\quad \qquad \qquad \qquad \qquad \qquad &
\mbox{ if
\ } & n_{i}^{\ast }=0,%
\end{array}%
\right.  \nonumber
\end{eqnarray}%
when $h_{4}$ and $w_{i}$ are considered as generating functions. In general,
such effective vacuum solutions can be not generated in limits $\Upsilon +\
^{s}\lambda \rightarrow 0$ because of singularity of coefficients, for \
instance, for a class of solutions (\ref{sol1}) with coefficients (\ref%
{sol1a})--(\ref{sol1c}).

Imposing additional constraints on coefficients of d--metric, for $e^{-2\
^{0}\phi }=1,$ as solutions of (\ref{lccondm}),
\begin{eqnarray}
h_{3} &=&\pm 4\left[ \left( \sqrt{|h_{4}|}\right) ^{\ast }\right] ^{2},\quad
h_{4}^{\ast }\neq 0;  \label{auxvacsol} \\
w_{1}w_{2}\left( \ln |\frac{w_{1}}{w_{2}}|\right) ^{\ast } &=&w_{2}^{\bullet
}-w_{1}^{\prime },\ w_{i}^{\ast }\neq 0;\ w_{2}^{\bullet }-w_{1}^{\prime
}=0,\ w_{i}^{\ast }=0;  \nonumber  \\
\ ^{1}n_{1}^{\prime }(r,\theta )-\ ^{1}n_{2}^{\bullet }(r,\theta ) &=&0,\
n_{i}^{\ast }=0,  \label{vaclcsoc}
\end{eqnarray}%
we generate effective vacuum solutions of the Einstein equations for the
Levi--Civita connection.

The constructed class of vacuum solutions with coefficients subjected to
conditions (\ref{vacdsolc})--(\ref{vaclcsoc}) is of type (\ref{sol2}) for (%
\ref{ep1a})--(\ref{ep3a}). Such metrics consist a particular case of vacuum
ansatz defined by Corollary \ref{corym2} with $\underline{h}_{4}=1$ and $%
\omega =1.$

\subsection{Modifications of the Schwarzschild metric}

Let us consider a "prime" metric
\begin{equation}
~^{\varepsilon }\mathbf{g}=-d\xi \otimes d\xi -r^{2}(\xi )\ d\vartheta
\otimes d\vartheta -r^{2}(\xi )\sin ^{2}\vartheta \ d\varphi \otimes
d\varphi +\varkappa ^{2}(\xi )\ dt\otimes \ dt.  \label{5aux1}
\end{equation}%
In general, it is not obligatory to consider modifications only of solutions
of Einstein equations. Our goal is to construct a class of nonholonomic
deformations into "target" off--diagonal ones generating solutions of some
(effective) vacuum Einstein equations. The "primary" geometric data for (\ref%
{5aux1}) are stated by nontrivial coefficients
\begin{equation}
\check{g}_{1}=-1,\ \check{g}_{2}=-r^{2}(\xi ),\ \check{h}_{3}=-r^{2}(\xi
)\sin ^{2}\vartheta ,\ \check{h}_{4}=\varkappa ^{2}(\xi ),  \label{5aux1p}
\end{equation}%
for local coordinates $x^{1}=\xi ,x^{2}=\vartheta ,y^{3}=\varphi ,y^{4}=t,$
where
\begin{equation*}
\xi =\int dr\ \left\vert 1-\frac{2\mu _{0}}{r}+\frac{\varepsilon }{r^{2}}%
\right\vert ^{1/2}\mbox{ and }\varkappa ^{2}(r)=1-\frac{2\mu _{0}}{r}+\frac{%
\varepsilon }{r^{2}}.
\end{equation*}%
In a particular cas for $\varepsilon =0$ and $\mu _{0}$ considered as a
point mass, the metric $~^{\varepsilon }\mathbf{g}$ (\ref{5aux1}) determines
the Schwarzschild solution.

We generate exact solutions of the system (\ref{ep1a})--(\ref{ep3a}) with
effective $\ ^{v}\Upsilon =\Upsilon $ and $\Upsilon +\ ^{s}\lambda =0$ via
nonholonomic deformations $\ ^{\varepsilon }\mathbf{g\rightarrow }\ _{\eta
}^{\varepsilon }\mathbf{g,}$ when $g_{i}=\eta _{i}\check{g}_{i}$ and $%
h_{a}=\eta _{a}\check{h}_{a}$ and $w_{i},n_{i}.$ The resulting class of
target metrics is parameterized in the form {\small
\begin{eqnarray}
~_{\eta }^{\varepsilon }\mathbf{g} &=&\eta _{1}(\xi )d\xi \otimes d\xi +\eta
_{2}(\xi )r^{2}(\xi )\ d\vartheta \otimes d\vartheta +  \label{5sol1} \\
&&\eta _{3}(\xi ,\vartheta ,\varphi )r^{2}(\xi )\sin ^{2}\vartheta \ \delta
\varphi \otimes \delta \varphi -\eta _{4}(\xi ,\vartheta ,\varphi )\varpi
^{2}(\xi )\ \delta t\otimes \delta t,  \nonumber  \\
\delta \varphi &=&d\varphi +w_{1}(\xi ,\vartheta ,\varphi )d\xi +w_{2}(\xi
,\vartheta ,\varphi )d\vartheta ,\ \delta t=dt+n_{1}(\xi ,\vartheta )d\xi
+n_{2}(\xi ,\vartheta )d\vartheta ,  \nonumber
\end{eqnarray}%
} when the \ modified gravitational field equations for zero effective
source relate the prime and target coefficients of the vertical metric and
polarization functions via formulas
\begin{equation}
h_{3}=h_{0}^{2}(\varpi ^{\ast })^{2}=\eta _{3}(\xi ,\vartheta ,\varphi
)r^{2}(\xi )\sin ^{2}\vartheta ,\ h_{4}=-\varpi ^{2}=-\eta _{4}(\xi
,\vartheta ,\varphi )\varkappa ^{2}(\xi ).  \label{aux41}
\end{equation}
In these formulas, $|\eta _{3}|=(h_{0})^{2}|\check{h}_{4}/\check{h}_{3}|[(%
\sqrt{|\eta _{4}|})^{\ast }]^{2}$ and we have to chose $h_{0}=const$\ ($%
h_{0}=2$ in order to satisfy the first condition (\ref{vaclcsoc})). The
values $\check{h}_{a}$ are taken for the Schwarzschild solution for the
chosen system of coordinates and $\eta _{4}$ can be any function with $\eta
_{4}^{\ast }\neq 0.$ The $f$--modified gravitational polarizations $\eta
_{1} $ and $\eta _{2},$ when $\eta _{1}=\eta _{2}r^{2}=e^{\psi (\xi
,\vartheta )}, $ \ are found from (\ref{eq1}) with zero source, written in
the form $\psi ^{\bullet \bullet }+\psi ^{\prime \prime }=0.$

Introducing the coefficients (\ref{aux41}) in the ansatz (\ref{5sol1}), we
find a class of exact off--diagonal effective vacuum solutions of the
Einstein equations defining stationary nonholonomic deformations of the
Sch\-warz\-schild metric, {\small
\begin{eqnarray}
~^{\varepsilon }\mathbf{g} &=&-e^{\psi }\left( d\xi \otimes d\xi +\
d\vartheta \otimes d\vartheta \right) -4\left[ (\sqrt{|\eta _{4}|})^{\ast }%
\right] ^{2}\varkappa ^{2}\ \delta \varphi \otimes \ \delta \varphi +\eta
_{4}\varkappa ^{2}\ \delta t\otimes \delta t,  \nonumber  \\
\delta \varphi &=&d\varphi +w_{1}d\xi +w_{2}d\vartheta ,\ \delta t=dt+\
^{1}n_{1}d\xi +\ ^{1}n_{2}d\vartheta .  \label{5sol1a}
\end{eqnarray}%
} The N--connection coefficients $w_{i}(\xi ,\vartheta ,\varphi )$ and $\
^{1}n_{i}(\xi ,\vartheta )$ must satisfy the conditions (\ref{vaclcsoc}) in
order to get effect vacuum metrics with generic off--diagonal terms in GR.
Finally, we emphasize here that, in general, the bulk of solutions from the
set of target metrics do not define black holes and do not describe obvious
physical situations. $f$--modifications may preserve the singular character
of the coefficient $\varpi ^{2}$ vanishing on the horizon of a Schwarzschild
black hole if we take only smooth integration functions for some small
deformation parameters $\varepsilon .$

\subsection{Linear parametric polarizations and f--modifications induced by
YMH fields}

Let us select effective gravitational vacuum configurations with spherical
and/or rotoid (ellipsoid) symmetry if it is considered a generating function
\begin{equation}
\varpi ^{2}=\iota (\xi ,\vartheta ,\varphi )+\varepsilon \varrho (\xi
,\vartheta ,\varphi ).  \label{gf1}
\end{equation}%
For simplicity, we shall restrict our construction only to linear
decompositions on a small parameter $\varepsilon ,$ with $0<\varepsilon <<1.$

Using (\ref{gf1}), we compute $\left( \varpi ^{\ast }\right) ^{2}=[(\sqrt{%
|\iota |})^{\ast }]^{2}\ [1+\varepsilon \frac{1}{(\sqrt{|\iota |})^{\ast }}({%
\varrho }/\sqrt{|\iota |})^{\ast }]$ and the vertical coefficients of
d--metric (\ref{5sol1a}), i.e $h_{3}$ and $h_{4}$ (and corresponding
polarizations $\eta _{3}$ and $\eta _{4}),$ see formulas (\ref{aux41}). \
For rotoid configurations, \
\begin{equation}
\iota =1-\frac{2\mu (\xi ,\vartheta ,\varphi )}{r}\mbox{ and }\varrho =\frac{%
\iota _{0}(r)}{4\mu ^{2}}\sin (\omega _{0}\varphi +\varphi _{0}),
\label{aux42}
\end{equation}%
for $\mu (\xi ,\vartheta ,\varphi )=\mu _{0}+\varepsilon \mu _{1}(\xi
,\vartheta ,\varphi )$ (supposing that the mass is locally anisotropically
polarized) with certain constants $\mu ,\omega _{0}$ and $\varphi _{0}$ and
arbitrary functions/ polarizations $\mu _{1}(\xi ,\vartheta ,\varphi )$ and $%
\iota _{0}(r)$ to be determined from some boundary conditions, with $%
\varepsilon $ being the eccentricity. We may treat $\varepsilon $ as an
eccentricity imposing the condition that the coefficient $h_{4}=\varpi
^{2}=\eta _{4}(\xi ,\vartheta ,\varphi )\varkappa ^{2}(\xi )$ becomes zero
for data (\ref{aux42}) if
\begin{equation*}
r_{+}\simeq 2\mu _{0}/ (1+\varepsilon \frac{\iota _{0}(r)}{4\mu ^{2}}\sin
(\omega _{0}\varphi +\varphi _{0})).
\end{equation*}
\ Such conditions result in small deformations of the Schwarzschild
spherical horizon into an ellipsoidal one (rotoid configuration with
eccentricity $\varepsilon ).$

The resulting target solutions are for off--diagonal solution with rotoid
type symmetry
\begin{eqnarray}
~^{rot}\mathbf{g} &=&-e^{\psi }\left( d\xi \otimes d\xi +\ d\vartheta
\otimes d\vartheta \right) +\left( q+\varepsilon \varrho \right) \ \delta
t\otimes \delta t  \nonumber  \\
&&-4\left[ (\sqrt{|\iota |})^{\ast }\right] ^{2}\ [1+\varepsilon \frac{1}{(%
\sqrt{|\iota |})^{\ast }}({\varrho }/\sqrt{|\iota |})^{\ast }]\ \delta
\varphi \otimes \ \delta \varphi ,  \label{rotoidm} \\
\delta \varphi &=&d\varphi +w_{1}d\xi +w_{2}d\vartheta ,\ \delta t=dt+\
^{1}n_{1}d\xi +\ ^{1}n_{2}d\vartheta .  \nonumber
\end{eqnarray}%
The functions $\iota (\xi ,\vartheta ,\varphi )$ and $\varrho (\xi
,\vartheta ,\varphi )$ from (\ref{aux42}) and the N--connection coefficients
$w_{i}(\xi ,\vartheta ,\varphi )$ and $\ n_{i}=$ $\ ^{1}n_{i}(\xi ,\vartheta
)$ should be to conditions of type (\ref{vaclcsoc}),
\begin{eqnarray}
w_{1}w_{2}\left( \ln |\frac{w_{1}}{w_{2}}|\right) ^{\ast } &=&w_{2}^{\bullet
}-w_{1}^{\prime },\quad w_{i}^{\ast }\neq 0;  \label{constr3} \\
\mbox{ or \ }w_{2}^{\bullet }-w_{1}^{\prime } &=&0,\quad w_{i}^{\ast }=0;\
^{1}n_{1}^{\prime }(\xi ,\vartheta )-\ ^{1}n_{2}^{\bullet }(\xi ,\vartheta
)=0  \nonumber
\end{eqnarray}%
and $\psi (\xi ,\vartheta )$ being any function for which $\psi ^{\bullet
\bullet }+\psi ^{\prime \prime }=0,$ if we are interested to generate
Levi--Civita configurations.

Off--diagonal rotoid deformations of black hole solutions in GR are possible
via $f$--deformations, in noncommutative gravity, by nonlinear YMH
interactions and via generic off--diagonal Einstein gravitational fields.
The generating functions and parameters of such solutions depend on the type
of gravity model we consider.

\section{Ellipsoid--Solitonic f--modifications of EYMH Configurations}

\label{s5} It is possible to prescribe nonholonomic constraints with $\
^{v}\Upsilon =\Upsilon =\ ^{0}\Upsilon =const$ and $\ ^{0}\Upsilon +\
^{s}\lambda \neq 0.$ This allows us to construct off--diagonal solutions for
MGYMH systems (\ref{eq1})--(\ref{eq4}) and (\ref{lccondm}) with coefficients
of metric of type (\ref{sol1}). Such metrics provide explicit examples of
effective non--vacuum solutions with ansatz for metrics considered for the
Corollary \ref{corym1} with $\underline{h}_{4}=1$ and $\omega =1.$

\subsection{Nonholonomic rotoid $f$--modifications}

Using the anholonomic frame method, we can generate a class of solutions
with nontrivial cosmological constant possessing different limits (for large
radial distances and small nonholonomic deformations) than the vacuum
configurations considered in previous section.

Let us consider a diagonal metric of type
\begin{equation}
~_{\lambda }^{\varepsilon }\mathbf{g}=d\xi \otimes d\xi +r^{2}(\xi )\
d\theta \otimes d\theta +r^{2}(\xi )\sin ^{2}\theta \ d\varphi \otimes
d\varphi +\ _{\lambda }\varkappa ^{2}(\xi )\ dt\otimes \ dt,  \label{sds1}
\end{equation}%
where nontrivial metric coefficients are parametriz\-ed in the form $\check{g%
}_{1}=1,\ \check{g}_{2}=r^{2}(\xi ),\ \check{h}_{3}=r^{2}(\xi )\sin
^{2}\vartheta ,\ \check{h}_{4}=\ _{\lambda }\varkappa ^{2}(\xi )$, for local
coordinates $x^{1}=\xi ,x^{2}=\vartheta ,y^{3}=\varphi ,y^{4}=t,$ with $\xi
=\int $ $dr/\left\vert q(r)\right\vert ^{\frac{1}{2}},$ and $\ _{\lambda
}\varkappa ^{2}(r)=-\sigma ^{2}(r)q(r),$ for $q(r)=1-2m(r)/r-\Lambda
r^{2}/3. $ In variables $\left( r,\theta ,\varphi \right) ,$ the metric (\ref%
{sds1}) is equivalent to (\ref{ansatz1}).

The ansatz for such classes of solutions is chosen in the form {\small
\begin{eqnarray*}
\ ^{\lambda }\mathbf{\mathring{g}} &=&e^{\underline{\phi }(\xi ,\theta )}\
(d\xi \otimes d\xi +\ d\theta \otimes d\theta )+h_{3}(\xi ,\theta ,\varphi
)\ {\delta }\varphi \otimes {\delta }\varphi +h_{4}(\xi ,\theta ,\varphi )\ {%
\delta t}\otimes ~{\delta t}, \\
~\delta \varphi &=&d\varphi +w_{1}\left( \xi ,\theta ,\varphi \right) d\xi
+w_{2}\left( \xi ,\theta ,\varphi \right) d\theta , \\
\delta t &=&dt+n_{1}\left( \xi ,\theta ,\varphi \right) d\xi +n_{2}\left(
\xi ,\theta ,\varphi \right) d\theta ,
\end{eqnarray*}%
} for $h_{3}=-h_{0}^{2}(\varpi ^{\ast })^{2}=\eta _{3}(\xi ,\theta ,\varphi
)r^{2}(\xi )\sin ^{2}\vartheta ,\ h_{4}=b^{2}=\eta _{4}(\xi ,\theta ,\varphi
)\ _{\lambda }\varkappa ^{2}(\xi ).$ The coefficients \ of this metric
determine exact solutions if
\begin{eqnarray}
&&\underline{\phi }^{\bullet \bullet }(\xi ,\theta )+\underline{\phi }%
^{^{\prime \prime }}(\xi ,\theta )=2(\ ^{0}\Upsilon +\ ^{s}\lambda );
\label{anhsol2} \\
h_{3} &=&\pm \frac{\left( \phi ^{\ast }\right) ^{2}}{4\ (\ ^{0}\Upsilon +\
^{s}\lambda )}e^{-2\ ^{0}\phi (\xi ,\theta )},\ h_{4}=\mp \frac{1}{4\ (\
^{0}\Upsilon +\ ^{s}\lambda )}e^{2(\phi -\ ^{0}\phi (\xi ,\theta ))};  \nonumber
\\
w_{i} &=&\partial _{i}\phi /\phi ^{\ast };  \nonumber  \\
n_{i} &=&\ ^{1}n_{i}(\xi ,\theta )+\ ^{2}n_{i}(\xi ,\theta )\int \left( \phi
^{\ast }\right) ^{2}e^{-2(\phi -\ ^{0}\phi (\xi ,\vartheta ))}d\varphi ,\
\nonumber  \\
&=&\ \ \left\{
\begin{array}{rcl}
\ \ ^{1}n_{i}(\xi ,\theta )+\ ^{2}n_{i}(\xi ,\theta )\int e^{-4\phi }\frac{%
\left( h_{4}^{\ast }\right) ^{2}}{h_{4}}d\varphi ,\ \  & \mbox{ if \ } &
n_{i}^{\ast }\neq 0; \\
\ ^{1}n_{i}(\xi ,\theta ),\quad \qquad \qquad \qquad \qquad \qquad &
\mbox{
if \ } & n_{i}^{\ast }=0;%
\end{array}%
\right.  \nonumber
\end{eqnarray}%
for any nonzero coefficients $h_{a}$ and $h_{a}^{\ast }$ and arbitrary
integrating functions, $^{1}n_{i}(\xi ,\theta ),\ ^{2}n_{i}(\xi ,\theta ),$
and generating functions, $\phi (\xi ,\theta ,\varphi )$ and $\ ^{0}\phi
(\xi ,\theta ).$ Such values have to be determined from certain boundary
conditions for a fixed system of coordinates and following additional
assumptions depending on the type of $f$--modified theory of gravity we
study.

For nonholonomic ellipsoid de Sitter configurations, we parameterize
\begin{eqnarray}
~_{\lambda }^{rot}\mathbf{g} &=&-e^{\underline{\phi }(\xi ,\theta )}\left(
d\xi \otimes d\xi +\ d\theta \otimes d\theta \right) +\left( \underline{%
\iota }+\varepsilon \underline{\varrho }\right) \ \delta t\otimes \delta t
\nonumber  \\
&&-h_{0}^{2}\left[ (\sqrt{|\underline{\iota }|})^{\ast }\right]
^{2}[1+\varepsilon \frac{1}{(\sqrt{|\underline{\iota }|})^{\ast }}(%
\underline{\varrho }/\sqrt{|\underline{\iota }|})^{\ast }]\ \delta \varphi
\otimes \ \delta \varphi ,  \nonumber  \\
\delta \varphi &=&d\varphi +w_{1}d\xi +w_{2}d\vartheta ,\ \delta
t=dt+n_{1}d\xi +n_{2}d\vartheta ,  \label{soladel}
\end{eqnarray}%
where $\underline{\iota }=1-\frac{2\ ^{1}\underline{\mu }(r,\theta ,\varphi )%
}{r}$ and $\ \ \underline{\varrho }=\frac{\underline{\iota }_{0}(r)}{4%
\underline{\mu }_{0}^{2}}\sin (\omega _{0}\varphi +\varphi _{0})$ are fixed
for anisotro\-pic rotoid configurations on the "smaller
horizon\textquotedblright\ (when $\ h_{4}=0),$%
\begin{equation*}
\ r_{+}\simeq 2\ ^{1}\underline{\mu }/ ( 1+\varepsilon \frac{\underline{%
\iota }_{0}(r)}{4\underline{\mu }_{0}^{2}}\sin (\omega _{0}\varphi +\varphi
_{0})),
\end{equation*}
for a corresponding $\underline{\iota }_{0}(r).$

For the Levi--Civita configurations, we have to consider additional
nonholonomic constraints resulting in zero torsion in order to generate
solutions of the Einstein equations for the Levi--Civita connection.
Following the condition (\ref{constr3}), for $\phi ^{\ast }\neq 0,$ we
obtain that $\phi (r,\varphi ,\theta )=\ln |h_{4}^{\ast }/\sqrt{|h_{3}h_{4}|}%
|$ must be any function defined in non--explicit form from equation $%
2e^{2\phi }\phi =\ ^{0}\Upsilon +\ ^{s}\lambda .$ It is possible to solve
the set of constraints for the N--connection coefficients the integration
functions in (\ref{anhsol2}) are subjected to $w_{1}w_{2}\left( \ln |\frac{%
w_{1}}{w_{2}}|\right) ^{\ast }=w_{2}^{\bullet }-w_{1}^{\prime }$ for $%
w_{i}^{\ast }\neq 0;$ $w_{2}^{\bullet }-w_{1}^{\prime }=0$ for$\ w_{i}^{\ast
}=0;$ and take $\ n_{i}=\ ^{1}n_{i}(x^{k})$ for $\ ^{1}n_{1}^{\prime
}(x^{k})-\ ^{1}n_{2}^{\bullet }(x^{k})=0.$

\subsection{Modifications via effective vacuum solitons}

Off--diagonal modifications in effective vacuum spacetimes can be modeled by
3--d solitonic gravitational interactions with nontrivial vertical conformal
factor $\omega .$ In this section, we suppose that there are satisfied the
conditions of Corollary \ref{corym2} with $\underline{h}_{4}=1$ for
effective vacuum solutions. Such prime and target metrics may encode MGYMH
configurations and their nonlinear wave deformations. Additional constraints
for the Levi--Civita configurations may result in EYMH solutions.

\subsubsection{Solitonic waves for the conformal factor $\protect\omega %
(x^{1},y^{3},t)$}

We consider functions $\omega =\eta (x^{1},y^{3},t),$ when $y^{4}=t$ is a
time like coordinate, determined by a solution of KdP equation \cite{kadom},
\begin{equation}
\pm \eta ^{\ast \ast }+(\partial _{t}\eta +\eta \ \eta ^{\bullet }+\epsilon
\eta ^{\bullet \bullet \bullet })^{\bullet }=0,  \label{kdp1}
\end{equation}%
with dispersion $\epsilon .$ In the dispersionless limit $\epsilon
\rightarrow 0$ the solutions are independent on $y^{3}$ and transform into
those given by Burgers' equation $\partial _{t}\eta +\eta \ \eta ^{\bullet
}=0.$ The conditions (\ref{lccondm}) are written in the form $\mathbf{e}%
_{1}\eta =\eta ^{\bullet }+w_{1}(x^{i},y^{3})\eta ^{\ast
}+n_{1}(x^{i})\partial _{t}\eta =0$. For $\eta ^{\prime }=0,$ we can impose
the condition $w_{2}=0$ and $n_{2}=0.$

The corresponding effective vacuum solitonic $f$--modifications are given by
\begin{eqnarray*}
\ _{1}\mathbf{g} &=&e^{\psi (x^{k})}(dx^{1}\otimes dx^{1}+dx^{2}\otimes
dx^{2})+\left[ \eta (x^{1},y^{3},t)\right] ^{2}h_{a}(x^{1},y^{3})\ \mathbf{e}%
^{a}\otimes \mathbf{e}^{a}, \\
\mathbf{e}^{3} &=&dy^{3}+w_{1}(x^{k},y^{3})dx^{1},\ \mathbf{e}%
^{4}=dy^{4}+n_{1}(x^{k})dx^{1}.
\end{eqnarray*}%
This class of metrics depend on all spacetime coordinates and may not
posses, in general, Killing symmetries. Nevertheless, there are symmetries
determined by solitonic solutions of (\ref{kdp1}). Alternatively, we can
consider\ that $\eta $ is a solution of any three dimensional solitonic and/
or other nonlinear wave equations; in a similar manner, we can generate
solutions for $\omega =\eta (x^{2},y^{3},t).$

\subsubsection{$f$--modifications with solitonic factor $\protect\omega %
(x^{i},t)$}

There are off--diagonal solutions when the effective vacuum metrics are with
a solitonic dynamics not depending on anisotropic coordinate $y^{3}.$ To
generate such nonholonomic configurations we take $\omega =\widehat{\eta }%
(x^{k},t)$ is a solution of KdP equation
\begin{equation}
\pm \widehat{\eta }^{\bullet \bullet }+(\partial _{t}\widehat{\eta }+%
\widehat{\eta }\ \widehat{\eta }^{\prime }+\epsilon \widehat{\eta }^{\prime
\prime \prime })^{\prime }=0  \label{kdp3a}
\end{equation}%
and consider that in the dispersionless limit $\epsilon \rightarrow 0$ the
solutions are independent on $x^{1}$ and determined by Burgers' equation $%
\partial _{t}\widehat{\eta }+\widehat{\eta }\ \widehat{\eta }^{\prime }=0.$

A class of effective vacuum solitonic EYMH configurations encoding $f$%
--modifications from MGYMH interactions is given by
\begin{eqnarray*}
\ _{2}\mathbf{g} &=&e^{\psi (x^{k})}(dx^{1}\otimes dx^{1}+dx^{2}\otimes
dx^{2})+\left[ \widehat{\eta }(x^{k},t)\right] ^{2}h_{a}(x^{k},y^{3})\
\mathbf{e}^{a}\otimes \mathbf{e}^{a}, \\
\mathbf{e}^{3} &=&dy^{3}+w_{1}(x^{k},y^{3})dx^{1},\ \mathbf{e}%
^{4}=dy^{4}+n_{1}(x^{k})dx^{1},
\end{eqnarray*}%
when (\ref{lccondm}) are equivalent to $\mathbf{e}_{1}\widehat{\eta }=%
\widehat{\eta }^{\bullet }+n_{1}(x^{i})\partial _{t}\widehat{\eta }=0,\
\mathbf{e}_{2}\widehat{\eta }=\widehat{\eta }^{\prime }+n_{2}(x^{i})\partial
_{t}\widehat{\eta }=0$.

Finally, we note that modified gravity theories can be characterized by
exact solutions with an infinite number of vacuum gravitational 2-d and 3-d
configurations stated by corresponding solitonic hierarchies and
bi--Hamilton structures, for instance, related to different KdP equations (%
\ref{kdp3a}). There are possible mixtures with solutions for 2-d and 3-d
sine--Gordon equations etc, see details in Ref. \cite{vacarsolitonhier}. The
constants, parametric dependence and generating functions are determined by
corresponding models of modified gravity and possible extra dimension
generalizations.

\section{Concluding Remarks}

As a consequence of the discovery of the accelerating expansion of the
Universe and attempts to formulate self--consistent schemes and propose an
experimentally verifiable phenomenology for quantum gravity a number of
modified gravity (MG) theories have been proposed along recent years. It is
considered that a change of the paradigm of standard particle theory is
inevitable in order to understand and solve the dark energy and dark matter
problems. In this sense, the $f(R,T,...)$--theories with functional
dependence on various types of scalar curvatures, torsions, energy--momentum
tensors etc have become popular candidates which may be capable to solve
various puzzles in particle physics and modern cosmology.

It is considered that viable modified gravity theories should be
characterized by a well behavior at local scales when cosmological effects
like inflation and late--time acceleration are reproduced. For any candidate
model to a modified/generalized gravity theory, to construct exact solutions
with physical importance, describing nonlinear gravitational and matter
field interactions, is a technically difficult task which requests new
sophisticate geometric, analytic and numerical methods. Such exact solutions
present an important theoretical tool for understanding properties of
gravity theories at the classical level and suggest a number of ideas how a
quantum formalism has to be developed in order to include possible
modifications and corrections to cosmological and related microscopic
scenarios.

In this work, we have shown that the $f$--modified gravitational field
equations and generalizations with Yang--Mills and Higgs equations (in
brief, MGYMH) can be solved in very general forms using the so--called
anholonomic frame deformation method, AFDM. The approach was elaborated in a
series of works on geometric methods of constructing exact solutions in
Einstein gravity and its (noncommutative) generalized Finsler, brane, string
modifications, see reviews of results in Refs. \cite%
{ijgmmp,vexsol1,vexsol2,veymh}. One of the most important features of the
AFDM is that it propose a set of geometric constructions for decoupling
certain physically important systems of nonlinear partial differential
equations, PD, with respect to certain classes of nonholonomic frames. More
than that, the method shows how can integrate such PDE in general form, with
generic off--diagonal metrics depending on all spacetime coordinates via
various integration and generating functions, symmetry parameters etc.

One of the most important conclusion of our work is that using "auxiliary"
connections necessary for decoupling PDEs and generating off--diagonal
configurations for metrics we can mimic various classes of $f$%
--modifications. In many cases, the geometric constructions and solutions
can be constrained to be interpreted in the framework of the general
relativity (GR) theory. For instance, we shown how black hole solutions in
GR may be deformed (with small parameter, or in certain general forms) into
off--diagonal metrics if possible $f$--modifications and YMH interactions
are taken into considerations. New classes of metrics and connections may be
with certain Killing and/or solitonic symmetries or deformed into
non--Killing configurations. For well--defined conditions, a subclass of
such metrics can be generated to have nontrivial limits to effective vacuum
solutions, or with nonhomogeneous/ anisotropic polarizations of cosmological
constants and gravitational--matter interactions. This support a
conservative opinion that a number of modifications which seem to be
necessary in modern cosmology and for elaborating quantum gravity models can
be alternatively explained by off--diagonal, parametric and/or nonholonomic
interactions in GR.

Nevertheless, the AFDM was originally elaborated, and generalized, for
various modified theories of gravity. It allows us to prescribe, for
instance, a convenient value of the scalar curvature for an auxiliary
connection (such a curvature is not fixed, in general, for the Levi--Civita
connection) or certain type of generalized matter sources and locally
anisotropic nonlinear polarizations of interaction constants. Choosing a
convenient nonholonomic $2+2+2+....$ -- splitting, we can generate
off--diagonal solutions in four and extra dimensions with cosmologically
observable anisotropic behavior and related, for instance, to effective
renormalized theories, see reviews of results in \cite%
{odints2,vgrg,covquant,vaxiom}). The main result of this paper is that we
provided explicit proofs and explicit examples that the MGYMH equations and
possible effective EYMH systems can be solved and studied using "pure"
geometric and analytic methods. To quantize such nonlinear classical
modified gravitational and matter field systems and study possible
implications in QCD physics \cite{vel1,vel2} is a plan for our future work.

\vskip5pt

\textbf{Acknowledgments:\ } The SV work is partially supported by the
Program IDEI, PN-II-ID-PCE-2011-3-0256 and performed for a visit supported
by the physics department at Kocaely University (Ismit, Turkey). \appendix

\setcounter{equation}{0} \renewcommand{\theequation}
{A.\arabic{equation}} \setcounter{subsection}{0}
\renewcommand{\thesubsection}
{A.\arabic{subsection}}

\section{2+2 Splitting of Lorentz Manifolds}

\label{sa}We provide the main results and formulas on the canonical
d--connection and corresponding d--torsion and d--curvature.

\begin{theorem}
\label{thadist} In coefficient form, the distortion relations (\ref{distrel}%
) are computed
\begin{equation}
\Gamma _{\ \alpha \beta }^{\gamma }=\widehat{\mathbf{\Gamma }}_{\ \alpha
\beta }^{\gamma }+\widehat{\mathbf{Z}}_{\ \alpha \beta }^{\gamma },
\label{distrel1}
\end{equation}%
where the distortion tensor $\widehat{\mathbf{Z}}_{\ \alpha \beta }^{\gamma
} $ is {\small
\begin{eqnarray}
\ Z_{jk}^{i} &=&Z_{bc}^{a}=0,\ Z_{jk}^{a}=-\widehat{C}_{jb}^{i}g_{ik}g^{ab}-%
\frac{1}{2}\Omega _{jk}^{a},~Z_{bk}^{i}=\frac{1}{2}\Omega
_{jk}^{c}g_{cb}g^{ji}-\Xi _{jk}^{ih}~\widehat{C}_{hb}^{j},  \nonumber  \\
Z_{bk}^{a} &=&\ ^{+}\Xi _{cd}^{ab}~\widehat{T}_{kb}^{c},\ Z_{kb}^{i}=\frac{1%
}{2}\Omega _{jk}^{a}g_{cb}g^{ji}+\Xi _{jk}^{ih}~\widehat{C}_{hb}^{j},
\label{deft} \\
Z_{jb}^{a} &=&\ ^{-}\Xi _{cb}^{ad}~\widehat{T}_{jd}^{c},\ Z_{ab}^{i}=-\frac{%
g^{ij}}{2}\left[ \widehat{T}_{ja}^{c}g_{cb}+\widehat{T}_{jb}^{c}g_{ca}\right]
,\   \nonumber
\end{eqnarray}%
} for $\ \Xi _{jk}^{ih}=\frac{1}{2}(\delta _{j}^{i}\delta
_{k}^{h}-g_{jk}g^{ih})$ and $~^{\pm }\Xi _{cd}^{ab}=\frac{1}{2}(\delta
_{c}^{a}\delta _{d}^{b}+g_{cd}g^{ab}).$ The nontrivial coefficients $\Omega
_{jk}^{a}$ and $\ \widehat{\mathbf{T}}_{\ \alpha \beta }^{\gamma }$ are
given, respectively, by formulas (\ref{anhcoef}) and, see below, (\ref{dtors}%
).
\end{theorem}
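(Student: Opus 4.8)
The plan is to build $\widehat{\mathbf{Z}}$ from the fact that $\nabla$ and $\widehat{\mathbf{D}}$ are two linear connections sharing the same d--metric $\mathbf{g}$ and differing only in torsion. Since $\widehat{\mathbf{D}}\mathbf{g}=0$ and $\nabla\mathbf{g}=0$, the difference tensor $\widehat{\mathbf{Z}}:=\nabla-\widehat{\mathbf{D}}$ obeys $\mathbf{g}(\widehat{\mathbf{Z}}_{\mathbf{X}}\mathbf{Y},\mathbf{Z})+\mathbf{g}(\mathbf{Y},\widehat{\mathbf{Z}}_{\mathbf{X}}\mathbf{Z})=0$, and since $\nabla$ is torsion free while $\widehat{\mathbf{D}}$ has torsion $\widehat{\mathcal{T}}$, it obeys $\widehat{\mathbf{Z}}_{\mathbf{X}}\mathbf{Y}-\widehat{\mathbf{Z}}_{\mathbf{Y}}\mathbf{X}=-\widehat{\mathcal{T}}(\mathbf{X},\mathbf{Y})$. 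These two relations fix $\widehat{\mathbf{Z}}$ uniquely through the standard contorsion construction, $2\mathbf{g}(\widehat{\mathbf{Z}}_{\mathbf{X}}\mathbf{Y},\mathbf{Z})=-\mathbf{g}(\widehat{\mathcal{T}}(\mathbf{X},\mathbf{Y}),\mathbf{Z})+\mathbf{g}(\widehat{\mathcal{T}}(\mathbf{Y},\mathbf{Z}),\mathbf{X})-\mathbf{g}(\widehat{\mathcal{T}}(\mathbf{Z},\mathbf{X}),\mathbf{Y})$, so $\widehat{\mathbf{Z}}$ is expressed entirely through $\mathbf{g}$ and $\widehat{\mathcal{T}}$ and hence, via the relation (\ref{distrel}) together with (\ref{cdc}), entirely through $\mathbf{g}$ and $\mathbf{N}$. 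An equivalent route is to write the Koszul formula for $\nabla$ directly on the N--adapted frame $(\mathbf{e}_\alpha)$, where the bracket term now reads $[\mathbf{e}_\alpha,\mathbf{e}_\beta]=W^\gamma_{\ \alpha\beta}\mathbf{e}_\gamma$ with the anholonomy coefficients (\ref{anhcoef}), and then to subtract the coefficients (\ref{cdc}) of $\widehat{\mathbf{D}}$.

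Next I would carry out the $h$--$v$ bookkeeping. Inserting $\mathbf{e}_\alpha=(\mathbf{e}_i,e_a)$ into the contorsion identity and splitting each index, the components $\widehat{\mathbf{Z}}^\gamma_{\ \alpha\beta}$ sort into the blocks listed in the statement. The vanishings $Z^i_{\ jk}=0$ and $Z^a_{\ bc}=0$ come out immediately: on the pure $h$ sector $\widehat{\mathbf{D}}$ has $\widehat{T}^i_{\ jk}=0$ and the only surviving frame bracket $[\mathbf{e}_i,\mathbf{e}_j]=\Omega^a_{ij}e_a$ is purely vertical, so the Koszul combination in the $h$--indices reduces to the same symmetric Christoffel expression for both connections; dually on the pure $v$ sector $\widehat{T}^a_{\ bc}=0$ and $[e_a,e_b]=0$. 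For the remaining mixed blocks one feeds in the explicit N--adapted torsion components of $\widehat{\mathbf{D}}$, i.e. the formulas (\ref{dtors}), in which $\widehat{T}^a_{\ ji}=\Omega^a_{ij}$ and $\widehat{T}^a_{\ bi}$ involves $\partial_bN^a_i$ and $\widehat{L}^a_{bi}$, and performs the antisymmetrizations in the lower and raised indices that are forced by metric compatibility.

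Assembling the mixed blocks is then term matching. The piece $Z^a_{\ jk}$ collects the $h$--torsion--like contribution $-\widehat{C}^i_{jb}g_{ik}g^{ab}$ together with the nonholonomy term $-\frac{1}{2}\Omega^a_{jk}$; the pair $Z^i_{\ bk}$, $Z^i_{\ kb}$ splits the $\Omega$-- and $\widehat{C}$--contributions with the opposite relative signs coming from the two orderings of the mixed bracket, and the symmetrization $\Xi^{ih}_{jk}\widehat{C}^j_{hb}$ with $\Xi^{ih}_{jk}=\frac{1}{2}(\delta^i_j\delta^h_k-g_{jk}g^{ih})$ is exactly what packages that projector; the pair $Z^a_{\ bk}$, $Z^a_{\ jb}$ carries the mixed torsion $\widehat{T}^c$ contracted against the projectors ${}^{+}\Xi^{ab}_{cd}$ and ${}^{-}\Xi^{ad}_{cb}$; and $Z^i_{\ ab}$ is the fully $a\leftrightarrow b$ symmetric remainder $-\frac{1}{2}g^{ij}(\widehat{T}^c_{ja}g_{cb}+\widehat{T}^c_{jb}g_{ca})$. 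Reading these off together with (\ref{cdc}) yields (\ref{distrel1}) with the distortion tensor (\ref{deft}), and since every ingredient ($\widehat{\mathcal{T}}$, $\Omega$, $\widehat{C}$, $\widehat{L}$) is built from $(\mathbf{g},\mathbf{N})$, so is $\widehat{\mathbf{Z}}$.

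The main obstacle I anticipate is organizational rather than conceptual. Because $\nabla$ is not N--adapted, its coefficients in the N--frame genuinely mix $h$-- and $v$--directions through $W^\gamma_{\ \alpha\beta}$, so one has to be scrupulous about which bracket ordering feeds which block and, in particular, about not double--counting the $\Omega^a_{jk}$ terms, which enter both the frame commutators (\ref{nonholr}) and $\widehat{\mathcal{T}}$ itself via (\ref{dtors}). Verifying that the antisymmetrized Koszul combinations collapse precisely to the compact $\Xi$-- and ${}^{\pm}\Xi$--projector form, with exactly the signs stated, is the one step where a careful line--by--line computation --- of the kind recorded in Refs.~\cite{ijgmmp,vexsol1,vexsol2} --- is genuinely needed; the rest is bookkeeping.
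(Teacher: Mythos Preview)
Your proposal is correct and, in fact, more illuminating than the paper's own proof. The paper merely states that the result ``follows from a straightforward verification in N--adapted frames (\ref{nader}) and (\ref{nadif}) that the sums of coefficients (\ref{cdc}) and (\ref{deft}) result in the coefficients of the Levi--Civita connection $\Gamma^{\gamma}_{\ \alpha\beta}$ for a general metric parameterized as a d--metric $\mathbf{g}=[g_{ij},g_{ab}]$ (\ref{dm}).'' In other words, the paper's route is purely computational: compute $\Gamma^{\gamma}_{\ \alpha\beta}$ in the N--frame (via Koszul with the anholonomy coefficients), subtract $\widehat{\mathbf{\Gamma}}^{\gamma}_{\ \alpha\beta}$ from (\ref{cdc}), and check that the remainder is (\ref{deft}).

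Your approach is genuinely different in that it \emph{derives} rather than \emph{verifies} the distortion tensor: you recognize $\widehat{\mathbf{Z}}=\nabla-\widehat{\mathbf{D}}$ as the contorsion of $\widehat{\mathcal{T}}$ (since both connections are metric compatible and only $\widehat{\mathbf{D}}$ carries torsion), and then obtain (\ref{deft}) by feeding the N--adapted torsion components (\ref{dtors}) into the contorsion identity and splitting by $h$--$v$ type. This buys conceptual clarity --- it explains \emph{why} $Z^{i}_{\ jk}=Z^{a}_{\ bc}=0$ (the pure--sector torsions vanish by construction of $\widehat{\mathbf{D}}$) and why the mixed blocks are governed by $\Omega^{a}_{jk}$, $\widehat{C}^{i}_{jb}$, and $\widehat{T}^{c}_{bk}$ --- whereas the paper's direct verification is shorter to state but opaque. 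You correctly identify the one genuine hazard: the $\Omega^{a}_{jk}$ terms appear both as anholonomy coefficients in the Koszul brackets and as torsion components in (\ref{dtors}), and the $\Xi$/${}^{\pm}\Xi$ projector packaging must be checked line by line. That caveat is apt; the rest is sound.
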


\begin{proof}
It follows from a straightforward verification in N--adapted frames (\ref%
{nader}) and (\ref{nadif}) that the sums of coefficients (\ref{cdc}) and (%
\ref{deft}) result in the coefficients of the Levi--Civita connection $%
\Gamma _{\ \alpha \beta }^{\gamma }$ for a general metric parameterized as a
d--metric $\mathbf{g}=[g_{ij},g_{ab}]$ \ (\ref{dm}). $\square $
\end{proof}

\vskip5pt

\begin{theorem}
\label{thadtors}The nonholonomically induced torsion $\widehat{\mathcal{T}}$
$=\{\widehat{\mathbf{T}}_{\ \alpha \beta }^{\gamma }\}$ of $\ \widehat{%
\mathbf{D}}$ is determined in a unique form by the metric compatibility
condition, $\widehat{\mathbf{D}}\mathbf{g}=0,$ and zero horizontal and
vertical torsion coefficients, $\widehat{T}_{\ jk}^{i}=0$ and $\widehat{T}%
_{\ bc}^{a}=0,$ but with nontrivial h--v-- coefficients {\small
\begin{equation}
\widehat{T}_{\ jk}^{i}=\widehat{L}_{jk}^{i}-\widehat{L}_{kj}^{i},\widehat{T}%
_{\ ja}^{i}=\widehat{C}_{jb}^{i},\widehat{T}_{\ ji}^{a}=-\Omega _{\ ji}^{a},%
\widehat{T}_{aj}^{c}=\widehat{L}_{aj}^{c}-e_{a}(N_{j}^{c}),\widehat{T}_{\
bc}^{a}=\ \widehat{C}_{bc}^{a}-\ \widehat{C}_{cb}^{a}.  \label{dtors}
\end{equation}%
}
\end{theorem}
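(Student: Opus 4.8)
The plan is to prove Theorem~\ref{thadtors} by a direct N--adapted computation of the torsion
$\widehat{\mathcal{T}}(\mathbf{X},\mathbf{Y}):=\widehat{\mathbf{D}}_{\mathbf{X}}\mathbf{Y}-\widehat{\mathbf{D}}_{\mathbf{Y}}\mathbf{X}-[\mathbf{X},\mathbf{Y}]$
evaluated on pairs of the frame vectors $\mathbf{e}_{\alpha}=(\mathbf{e}_{i},e_{a})$ of (\ref{nader}), using the explicit coefficients (\ref{cdc}) and the anholonomy relations (\ref{nonholr})--(\ref{anhcoef}). First I would record the two structural inputs. Since $\widehat{\mathbf{D}}$ is a d--connection, $\widehat{\mathbf{D}}_{\mathbf{e}_{\beta}}\mathbf{e}_{\gamma}$ has no "mixed'' h--v components, so $\widehat{\mathbf{D}}_{\mathbf{e}_{k}}\mathbf{e}_{j}=\widehat{L}_{jk}^{i}\mathbf{e}_{i}$, $\widehat{\mathbf{D}}_{\mathbf{e}_{k}}e_{b}=\widehat{L}_{bk}^{a}e_{a}$, $\widehat{\mathbf{D}}_{e_{c}}\mathbf{e}_{j}=\widehat{C}_{jc}^{i}\mathbf{e}_{i}$, $\widehat{\mathbf{D}}_{e_{c}}e_{b}=\widehat{C}_{bc}^{a}e_{a}$, with all remaining $\widehat{\mathbf{\Gamma}}$--blocks zero. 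And in $[\mathbf{e}_{\alpha},\mathbf{e}_{\beta}]=W_{\alpha\beta}^{\gamma}\mathbf{e}_{\gamma}$ the only nonzero anholonomy coefficients are $W_{ia}^{b}=\partial_{a}N_{i}^{b}$ and $W_{ji}^{a}=\Omega_{ij}^{a}$, both purely vertical.

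Then I would run through the five admissible index types. Evaluating $\widehat{\mathcal{T}}(\mathbf{e}_{j},\mathbf{e}_{k})$ and splitting into h-- and v--parts gives the h--coefficient $\widehat{L}_{jk}^{i}-\widehat{L}_{kj}^{i}$ (the h--part of the bracket being zero) and the v--coefficient $-\Omega_{\ jk}^{a}$; evaluating $\widehat{\mathcal{T}}(\mathbf{e}_{j},e_{a})$ gives the h--coefficient $\widehat{C}_{jb}^{i}$ and, collecting $-\widehat{\mathbf{D}}_{\mathbf{e}_{j}}e_{a}=-\widehat{L}_{aj}^{c}e_{c}$ against $-[\mathbf{e}_{j},e_{a}]=e_{a}(N_{j}^{c})e_{c}$, the v--coefficient $\widehat{L}_{aj}^{c}-e_{a}(N_{j}^{c})$; and evaluating $\widehat{\mathcal{T}}(e_{b},e_{c})$ gives h--part zero (since $[e_{b},e_{c}]=0$) and v--coefficient $\widehat{C}_{bc}^{a}-\widehat{C}_{cb}^{a}$. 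This reproduces all the entries of (\ref{dtors}).

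To close the argument I would add two remarks. First, uniqueness is already supplied by the earlier Theorem--Definition: metric compatibility together with $\widehat{T}_{\ jk}^{i}=0$ and $\widehat{T}_{\ bc}^{a}=0$ forces the coefficients (\ref{cdc}), hence the torsion they generate is uniquely the one just computed. Second, the first and last entries of (\ref{dtors}) do vanish: inspection of (\ref{cdc}) shows $\widehat{L}_{jk}^{i}$ is symmetric in $(j,k)$ and $\widehat{C}_{bc}^{a}$ is symmetric in $(b,c)$ (the anholonomy of $\mathbf{e}_{i}$ and $e_{a}$ contributing nothing to these particular combinations), so $\widehat{T}_{\ jk}^{i}=\widehat{T}_{\ bc}^{a}=0$, consistent with the stated vanishing of the h-- and v--torsion; the genuinely nontrivial h--v components are the three built from $\Omega_{\ ji}^{a}$, $\widehat{C}_{jb}^{i}$ and $\widehat{L}_{aj}^{c}-e_{a}(N_{j}^{c})$.

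I expect the only real friction to be bookkeeping rather than anything conceptual: keeping the h-- and v--projections of each $\widehat{\mathcal{T}}(\mathbf{e}_{\alpha},\mathbf{e}_{\beta})$ straight, tracking the signs coming from $[\mathbf{e}_{j},e_{a}]$ and $[\mathbf{e}_{i},\mathbf{e}_{j}]$, and matching the slot ordering $\widehat{T}_{\ \alpha\beta}^{\gamma}$ versus $\widehat{T}_{\ \beta\alpha}^{\gamma}$ used in (\ref{dtors}) and in the Appendix distortion formula (\ref{deft}). The substantive content is simply that the d--connection block structure annihilates all "forbidden'' components, so what survives in $\widehat{T}_{\ ji}^{a}$, $\widehat{T}_{\ ja}^{i}$ and $\widehat{T}_{aj}^{c}$ is exactly the obstruction of the N--adapted frame $(\mathbf{e}_{i},e_{a})$ to being holonomic, encoded by $\Omega_{\ ji}^{a}$ and by the failure of $\widehat{L}_{aj}^{c}$ to equal $e_{a}(N_{j}^{c})$.
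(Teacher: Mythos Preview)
Your proposal is correct and follows exactly the same approach as the paper: the paper's proof consists of the single instruction to insert $\widehat{\mathbf{D}}$ with coefficients (\ref{cdc}) and $X=\mathbf{e}_{\alpha},\ Y=\mathbf{e}_{\beta}$ into the standard torsion formula $\mathcal{T}(X,Y)=D_{X}Y-D_{Y}X-[X,Y]$, which is precisely the direct N--adapted computation you outline. Your only slips are the sign/ordering bookkeeping you yourself flagged (e.g.\ $-[\mathbf{e}_{j},e_{a}]=-e_{a}(N_{j}^{c})e_{c}$, not $+e_{a}(N_{j}^{c})e_{c}$, and the h-- and v--parts of $\widehat{\mathcal{T}}(\mathbf{e}_{j},e_{a})$ correspond to $\widehat{T}^{i}_{\ aj}$ and $\widehat{T}^{c}_{\ aj}$ with the paper's index convention), but these do not affect the substance.
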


\begin{proof}
The coefficients (\ref{dtors}) are computed by introducing $\ D=$ $\widehat{%
\mathbf{D}},$ with coefficients (\ref{cdc}), and $X=\mathbf{e}_{\alpha },Y=%
\mathbf{e}_{\beta }$ (for N--adapted frames (\ref{nader})) into standard
formula for torsion, $\mathcal{T}(X,Y):=D_{\mathbf{X}}Y-D_{\mathbf{Y}%
}X-[X,Y] $.

$\square $
\end{proof}

\vskip5pt

In a similar form, introducing $\widehat{\mathbf{D}}$ and $X=\mathbf{e}%
_{\alpha },Y=\mathbf{e}_{\beta },Z=$ $\mathbf{e}_{\gamma }$ into $\mathcal{R}%
(X,Y):=D_{\mathbf{X}}D_{\mathbf{Y}}-D_{\mathbf{Y}}D_{\mathbf{X}}-D_{\mathbf{%
[X,Y]}},$ we prove

\begin{theorem}
\label{thasa1}The curvature $\widehat{\mathcal{R}}=\{\widehat{\mathbf{R}}_{\
\beta \gamma \delta }^{\alpha }\}$ of $\widehat{\mathbf{D}}$ is
characterized by N--adapted coefficients {\small
\begin{eqnarray}
\widehat{R}_{\ hjk}^{i} &=&e_{k}\widehat{L}_{\ hj}^{i}-e_{j}\widehat{L}_{\
hk}^{i}+\widehat{L}_{\ hj}^{m}\widehat{L}_{\ mk}^{i}-\widehat{L}_{\ hk}^{m}%
\widehat{L}_{\ mj}^{i}-\widehat{C}_{\ ha}^{i}\Omega _{\ kj}^{a},  \nonumber  \\
\widehat{R}_{\ bjk}^{a} &=&e_{k}\widehat{L}_{\ bj}^{a}-e_{j}\widehat{L}_{\
bk}^{a}+\widehat{L}_{\ bj}^{c}\widehat{L}_{\ ck}^{a}-\widehat{L}_{\ bk}^{c}%
\widehat{L}_{\ cj}^{a}-\widehat{C}_{\ bc}^{a}\Omega _{\ kj}^{c},
\label{dcurv} \\
\widehat{R}_{\ jka}^{i} &=&e_{a}\widehat{L}_{\ jk}^{i}-\widehat{D}_{k}%
\widehat{C}_{\ ja}^{i}+\widehat{C}_{\ jb}^{i}\widehat{T}_{\ ka}^{b},\widehat{%
R}_{\ bka}^{c}=e_{a}\widehat{L}_{\ bk}^{c}-D_{k}\widehat{C}_{\ ba}^{c}+%
\widehat{C}_{\ bd}^{c}\widehat{T}_{\ ka}^{c},  \nonumber  \\
\widehat{R}_{\ jbc}^{i} &=&e_{c}\widehat{C}_{\ jb}^{i}-e_{b}\widehat{C}_{\
jc}^{i}+\widehat{C}_{\ jb}^{h}\widehat{C}_{\ hc}^{i}-\widehat{C}_{\ jc}^{h}%
\widehat{C}_{\ hb}^{i},  \nonumber  \\
\widehat{R}_{\ bcd}^{a} &=&e_{d}\widehat{C}_{\ bc}^{a}-e_{c}\widehat{C}_{\
bd}^{a}+\widehat{C}_{\ bc}^{e}\widehat{C}_{\ ed}^{a}-\widehat{C}_{\ bd}^{e}%
\widehat{C}_{\ ec}^{a}.  \nonumber
\end{eqnarray}%
}
\end{theorem}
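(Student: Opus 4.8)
\textbf{Proof proposal for Theorem \ref{thasa1}.}

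The plan is to compute $\widehat{\mathcal{R}}$ directly from its defining formula $\mathcal{R}(\mathbf{X},\mathbf{Y})\mathbf{Z}=\widehat{\mathbf{D}}_{\mathbf{X}}\widehat{\mathbf{D}}_{\mathbf{Y}}\mathbf{Z}-\widehat{\mathbf{D}}_{\mathbf{Y}}\widehat{\mathbf{D}}_{\mathbf{X}}\mathbf{Z}-\widehat{\mathbf{D}}_{[\mathbf{X},\mathbf{Y}]}\mathbf{Z}$ by substituting $\mathbf{X}=\mathbf{e}_{\alpha},\mathbf{Y}=\mathbf{e}_{\beta},\mathbf{Z}=\mathbf{e}_{\gamma}$ from the N--adapted frame (\ref{nader}) and the canonical d--connection coefficients $\widehat{\mathbf{\Gamma}}_{\ \alpha\beta}^{\gamma}=(\widehat{L}_{jk}^{i},\widehat{L}_{bk}^{a},\widehat{C}_{jc}^{i},\widehat{C}_{bc}^{a})$ of Theorem--Definition (eqs. (\ref{cdc})). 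The key structural observation is that, because the frame is nonholonomic, the bracket term is not zero: $[\mathbf{e}_{\alpha},\mathbf{e}_{\beta}]=W_{\alpha\beta}^{\gamma}\mathbf{e}_{\gamma}$ with the anholonomy coefficients (\ref{anhcoef}), so $\widehat{\mathbf{D}}_{[\mathbf{e}_{\alpha},\mathbf{e}_{\beta}]}\mathbf{e}_{\gamma}=W_{\alpha\beta}^{\mu}\widehat{\mathbf{\Gamma}}_{\ \mu\gamma}^{\delta}\mathbf{e}_{\delta}$. Tracking the nontrivial $W_{ji}^{a}=\Omega_{ij}^{a}$ and $W_{ia}^{b}=\partial_{a}N_{i}^{b}$ pieces is precisely what produces the extra terms $-\widehat{C}_{\ ha}^{i}\Omega_{\ kj}^{a}$, $-\widehat{C}_{\ bc}^{a}\Omega_{\ kj}^{c}$, and the $\widehat{T}_{\ ka}^{b}$--contractions in the third and fourth lines of (\ref{dcurv}).

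Concretely, I would organize the computation by the h-- and v--projections of the index pairs $(\alpha\beta)$ and $(\gamma)$, yielding the six coefficient families $\widehat{R}_{\ hjk}^{i},\widehat{R}_{\ bjk}^{a},\widehat{R}_{\ jka}^{i},\widehat{R}_{\ bka}^{c},\widehat{R}_{\ jbc}^{i},\widehat{R}_{\ bcd}^{a}$ listed in $\mathcal{R}=\{\widehat{\mathbf{R}}_{\ \beta\gamma\delta}^{\alpha}\}$. For the purely horizontal block $\widehat{R}_{\ hjk}^{i}$, one expands $\mathbf{e}_{k}(\widehat{L}_{\ hj}^{i}\mathbf{e}_{i})-\mathbf{e}_{j}(\widehat{L}_{\ hk}^{i}\mathbf{e}_{i})$ using the Leibniz rule, reapplies $\widehat{\mathbf{D}}$ to the frame vectors (generating the quadratic $\widehat{L}\widehat{L}$ terms), and subtracts $\widehat{\mathbf{D}}_{[\mathbf{e}_{k},\mathbf{e}_{j}]}\mathbf{e}_{h}$; since $[\mathbf{e}_{k},\mathbf{e}_{j}]=\Omega_{kj}^{a}e_{a}$, this last term is $\Omega_{kj}^{a}\widehat{C}_{\ ha}^{i}\mathbf{e}_{i}$, which is the source of the $-\widehat{C}_{\ ha}^{i}\Omega_{\ kj}^{a}$ correction. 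The vertical block $\widehat{R}_{\ bcd}^{a}$ is the simplest: here the bracket $[e_{c},e_{d}]=0$ (these are coordinate vectors $\partial/\partial y^{a}$), so only the expansion and the $\widehat{C}\widehat{C}$ quadratic terms survive, matching the last line of (\ref{dcurv}) verbatim. The mixed blocks $\widehat{R}_{\ jka}^{i}$ and $\widehat{R}_{\ bka}^{c}$ require care because one differentiates $\widehat{L}$ in a $v$--direction, i.e. $e_{a}\widehat{L}_{\ jk}^{i}$, while the h--h block already has a $\widehat{C}$ entering via $[\mathbf{e}_{k},e_{a}]$, whence the $\widehat{C}_{\ jb}^{i}\widehat{T}_{\ ka}^{b}$ term with $\widehat{T}_{\ ka}^{b}$ as in (\ref{dtors}); similarly $\widehat{R}_{\ jbc}^{i}$ is a "vertical Ricci identity for the h--part of $\widehat{C}$".

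I expect the main obstacle to be bookkeeping rather than any conceptual difficulty: one must consistently distinguish which derivative operator ($\mathbf{e}_{i}=\partial_{i}-N_{i}^{a}\partial_{a}$ versus $e_{a}=\partial_{a}$) acts in each term, keep the N--elongation inside all partial derivatives, and correctly identify the nonzero anholonomy coefficients contributing to each $\widehat{\mathbf{D}}_{[\cdot,\cdot]}$ piece. Since $\widehat{\mathbf{D}}$ is metric compatible and the h-- and v--torsions vanish ($\widehat{T}_{\ jk}^{i}=\widehat{T}_{\ bc}^{a}=0$), several potential cross terms drop out, and the surviving expressions collapse to (\ref{dcurv}). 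The verification is local and N--adapted, so no global or topological input is needed; one simply checks each of the six families separately and collects. $\square$
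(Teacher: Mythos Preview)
Your proposal is correct and follows exactly the same approach as the paper, which simply states that the result is obtained by introducing $\widehat{\mathbf{D}}$ and $X=\mathbf{e}_{\alpha},\,Y=\mathbf{e}_{\beta},\,Z=\mathbf{e}_{\gamma}$ into the defining formula $\mathcal{R}(X,Y):=D_{X}D_{Y}-D_{Y}D_{X}-D_{[X,Y]}$. In fact your write--up supplies more of the bookkeeping (tracking the anholonomy coefficients $W^{a}_{ji}=\Omega^{a}_{ij}$ and $W^{b}_{ia}=\partial_{a}N^{b}_{i}$ through the bracket term) than the paper's one--line proof sketch.
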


We can re--define the differential geometry of a (pseudo) Riemannian space $%
\mathbf{V}$ in nonholonomic form in terms of geometric data $(\mathbf{g,}%
\widehat{\mathbf{D}})$ which is equivalent to the ''standard'' formulation
with $(\mathbf{g,\nabla }).$

\begin{corollary}
\label{corolricci} The Ricci tensor $\widehat{\mathbf{R}}_{\alpha \beta }:=%
\widehat{\mathbf{R}}_{\ \alpha \beta \gamma }^{\gamma }$ (\ref{riccid}) of $%
\ \widehat{\mathbf{D}}$ is characterized by N--adapted coefficients{\small
\begin{equation}
\widehat{\mathbf{R}}_{\alpha \beta }=\{\widehat{R}_{ij}:=\widehat{R}_{\
ijk}^{k},\ \widehat{R}_{ia}:=-\widehat{R}_{\ ika}^{k},\ \widehat{R}_{ai}:=%
\widehat{R}_{\ aib}^{b},\ \widehat{R}_{ab}:=\widehat{R}_{\ abc}^{c}\}.
\label{driccic}
\end{equation}%
}
\end{corollary}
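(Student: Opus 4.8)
The plan is to obtain the four blocks of (\ref{driccic}) by a direct contraction of the curvature d--tensor of Theorem \ref{thasa1}, following the definition $\widehat{\mathbf{R}}_{\alpha\beta}:=\widehat{\mathbf{R}}_{\ \alpha\beta\gamma}^{\gamma}$ and keeping systematic track of the h-- and v--splitting of indices. First I would recall that a d--connection $\widehat{\mathbf{D}}=\{\widehat{L}_{jk}^{i},\widehat{L}_{bk}^{a},\widehat{C}_{jc}^{i},\widehat{C}_{bc}^{a}\}$ preserves, under parallel transport, the Whitney sum (\ref{ncon}), so that every tensor index decomposes into an h--part ($i,j,k,\dots$) and a v--part ($a,b,c,\dots$), and the only nonzero components of $\widehat{\mathcal{R}}$ are the six listed in (\ref{dcurv}). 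Hence in $\widehat{\mathbf{R}}_{\ \alpha\beta\gamma}^{\gamma}$ the summation over the repeated index $\gamma$ splits into a sum over an h--index $\gamma=k$ and a sum over a v--index $\gamma=c$, and one only has to examine which of the six curvature blocks can contribute a nontrivial trace of each type.

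Next I would carry out the admissible contractions one by one. Tracing $\widehat{R}_{\ hjk}^{i}$ over the h--index pair ($i=k$) gives the purely horizontal component $\widehat{R}_{ij}:=\widehat{R}_{\ ijk}^{k}$, while tracing $\widehat{R}_{\ bcd}^{a}$ over the v--index pair ($a=c$) gives the purely vertical component $\widehat{R}_{ab}:=\widehat{R}_{\ abc}^{c}$. For the mixed blocks, $\widehat{R}_{\ jka}^{i}$ admits only an h--trace (on $i$ and the second lower index), which produces, up to the sign fixed by the position of the summed slot relative to the last one in $\widehat{\mathbf{R}}_{\ \alpha\beta\gamma}^{\gamma}$, the component $\widehat{R}_{ia}:=-\widehat{R}_{\ ika}^{k}$; similarly $\widehat{R}_{\ bka}^{c}$ admits a v--trace giving $\widehat{R}_{ai}:=\widehat{R}_{\ aib}^{b}$. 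Finally I would verify that no further independent objects appear: the remaining curvature blocks $\widehat{R}_{\ bjk}^{a}$ and $\widehat{R}_{\ jbc}^{i}$ have no index pair of the right type available for a trace of the form $\widehat{\mathbf{R}}_{\ \alpha\beta\gamma}^{\gamma}$, so their contribution vanishes, and any alternative contraction is either zero by the same h--v selection rule or reproduces, after relabelling, one of the four components above. This shows that $\widehat{\mathbf{R}}_{\alpha\beta}$ is a well--defined d--tensor with precisely the block structure (\ref{driccic}).

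The computation is routine once this bookkeeping is in place, and it is consistent with the general contraction rule (\ref{riccid}); no separate check of symmetry or of the link with $\nabla$ is needed here, since the distortion relation (\ref{distrel}) and Theorem \ref{thadist} already control the difference between $\widehat{\mathbf{R}}_{\alpha\beta}$ and $R_{\alpha\beta}$. I do not anticipate a genuine obstacle: the single point requiring care is the sign convention in the definition of the mixed component $\widehat{R}_{ia}$, which is dictated by whether the summed index sits in the second or the third lower slot of the curvature d--tensor and must be kept consistent with the ordering used in (\ref{dcurv}).
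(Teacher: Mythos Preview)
Your proposal is correct and follows essentially the same approach as the paper: both obtain the four blocks of (\ref{driccic}) by direct contraction of the curvature coefficients (\ref{dcurv}) according to the definition $\widehat{\mathbf{R}}_{\alpha\beta}:=\widehat{\mathbf{R}}_{\ \alpha\beta\gamma}^{\gamma}$. The paper states this in one sentence, whereas you spell out explicitly which of the six curvature blocks admit a trace of the required type, why $\widehat{R}_{\ bjk}^{a}$ and $\widehat{R}_{\ jbc}^{i}$ cannot contribute, and how the minus sign in $\widehat{R}_{ia}$ arises from the antisymmetry in the last two lower indices; this added bookkeeping is helpful but does not constitute a different method.
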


\begin{proof}
The formulas for $h$--$v$--components (\ref{driccic}) are obtained by
contracting respectively the coefficients (\ref{dcurv}). Using $\widehat{%
\mathbf{D}}$ (\ref{cdc}), we express such formulas in terms of partial
derivatives of coefficients of metric $\mathbf{g}$ (\ref{mst}) and any
equivalent parametrization in the form (\ref{dm}), or (\ref{ansatz}). $%
\square $
\end{proof}

\vskip5pt

The scalar curvature $\ ^{s}\widehat{R}$ of $\ \widehat{\mathbf{D}}$ is by
definition
\begin{equation}
\ ^{s}\widehat{R}:=\mathbf{g}^{\alpha \beta }\widehat{\mathbf{R}}_{\alpha
\beta }=g^{ij}\widehat{R}_{ij}+g^{ab}\widehat{R}_{ab}.  \label{sdcurv}
\end{equation}

Using (\ref{driccic}) and (\ref{sdcurv}), we can compute the Einstein tensor
$\widehat{\mathbf{E}}_{\alpha \beta }$ of $\widehat{\mathbf{D}},$
\begin{equation}
\widehat{\mathbf{E}}_{\alpha \beta }:= \widehat{\mathbf{R}}_{\alpha \beta }-%
\frac{1}{2}\mathbf{g}_{\alpha \beta }\ ^{s}\widehat{R}.  \label{enstdt}
\end{equation}%
In general, this tensor is different from that constructed using (\ref%
{riccie}) for the Levi--Civita connection $\nabla .$

\begin{proposition}
\label{approp}The N--adapted coefficients $\widehat{\mathbf{\Gamma }}_{\
\alpha \beta }^{\gamma }$ of $\ \widehat{\mathbf{D}}$ are identic to the
coefficients $\Gamma _{\ \alpha \beta }^{\gamma }$ of $\nabla $, both sets
computed with respect to N--adapted frames (\ref{nader}) and (\ref{nadif}),
if and only if there are satisfied the conditions $\widehat{L}%
_{aj}^{c}=e_{a}(N_{j}^{c}),\widehat{C}_{jb}^{i}=0$ and $\Omega _{\
ji}^{a}=0. $
\end{proposition}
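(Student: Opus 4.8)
The plan is to avoid grinding through the distortion components (\ref{deft}) and to argue instead from the torsion formulas (\ref{dtors}) together with the uniqueness of the metric--compatible, torsion--free connection.

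First I would record two facts that need no hypothesis. \emph{Fact 1:} by the Theorem--Definition introducing $\widehat{\mathbf{D}}$, the canonical d--connection is metric compatible and has vanishing h--torsion and v--torsion; moreover, by (\ref{cdc}) the arrays $\widehat{L}_{jk}^{i}$ and $\widehat{C}_{bc}^{a}$ are symmetric in their lower pair, so the blocks $\widehat{T}_{\ jk}^{i}$ and $\widehat{T}_{\ bc}^{a}$ in (\ref{dtors}) vanish identically. Hence the \emph{entire} torsion of $\widehat{\mathbf{D}}$ is carried by exactly the three mixed components in (\ref{dtors}), whose vanishing is equivalent, respectively, to $\widehat{C}_{jb}^{i}=0$, $\Omega_{\ ji}^{a}=0$ and $\widehat{L}_{aj}^{c}=e_{a}(N_{j}^{c})$. \emph{Fact 2:} in the N--adapted frame (\ref{nader})--(\ref{nadif}), which is anholonomic with structure coefficients $W_{\alpha\beta}^{\gamma}$, the torsion coefficients of \emph{any} linear connection are obtained from its connection coefficients by antisymmetrization in the lower indices and subtraction of $W_{\alpha\beta}^{\gamma}$; since the same frame, hence the same $W_{\alpha\beta}^{\gamma}$, serves both $\widehat{\mathbf{D}}$ and $\nabla$, equality of their coefficient arrays in this frame is \emph{equivalent} to equality of their torsion tensors.

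For the ``$\Leftarrow$'' direction I would assume the three stated conditions; by Fact 1 the torsion of $\widehat{\mathbf{D}}$ is then zero, while $\widehat{\mathbf{D}}$ remains metric compatible, so by uniqueness of $\nabla$ we get $\widehat{\mathbf{D}}=\nabla$ and therefore $\widehat{\mathbf{\Gamma}}_{\ \alpha\beta}^{\gamma}=\Gamma_{\ \alpha\beta}^{\gamma}$ in every frame, in particular the N--adapted one. (Alternatively, substitute the three conditions into (\ref{deft}), observe that every component of $\widehat{\mathbf{Z}}_{\ \alpha\beta}^{\gamma}$ vanishes, and conclude via (\ref{distrel1}).) For the ``$\Rightarrow$'' direction I would assume $\widehat{\mathbf{\Gamma}}_{\ \alpha\beta}^{\gamma}=\Gamma_{\ \alpha\beta}^{\gamma}$ in the N--adapted frame; by Fact 2 this forces $\widehat{\mathcal{T}}$ to equal the torsion of $\nabla$, which is zero, so $\widehat{\mathcal{T}}=0$; reading off the three surviving blocks of (\ref{dtors}) gives precisely $\widehat{C}_{jb}^{i}=0$, $\Omega_{\ ji}^{a}=0$ and $\widehat{L}_{aj}^{c}=e_{a}(N_{j}^{c})$, and no further conditions appear.

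The one point requiring care --- modest as it is here --- is Fact 2: because the N--adapted frame is anholonomic, ``equality of the connection coefficient arrays'' must be translated into ``equality of the full torsion tensors'' (with the identical $W_{\alpha\beta}^{\gamma}$ term present on both sides), not merely into equality of formal antisymmetrizations; one must also verify that the three mixed blocks of (\ref{dtors}) are the only independent torsion data of $\widehat{\mathbf{D}}$, which is exactly what makes the count come out to the three asserted equations. Once that bookkeeping is in place, both implications are immediate.
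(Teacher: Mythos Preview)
Your proof is correct and shares the paper's core observation: the three stated conditions are precisely the vanishing of the mixed blocks of $\widehat{\mathcal{T}}$ in (\ref{dtors}), hence of the full torsion of $\widehat{\mathbf{D}}$. The only difference in execution is that for the ``$\Leftarrow$'' direction the paper reads off that the distortion tensor $\widehat{\mathbf{Z}}_{\ \alpha\beta}^{\gamma}$ in (\ref{deft}) vanishes (it is built entirely from $\widehat{C}_{jb}^{i}$, $\Omega_{jk}^{a}$ and $\widehat{T}_{kb}^{c}$) and then invokes (\ref{distrel1}), whereas you bypass (\ref{deft}) by appealing directly to the uniqueness of the metric--compatible torsion--free connection; you even list the paper's route as your alternative. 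One small comment: your Fact~2 as stated (``equality of coefficient arrays is \emph{equivalent} to equality of torsion tensors'') is stronger than what the antisymmetrization argument alone yields --- that only gives the forward implication --- but since you use only that forward implication in the ``$\Rightarrow$'' direction, the argument is unaffected; the reverse implication is of course true here because both connections are $\mathbf{g}$--compatible.
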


\begin{proof}
If the conditions of the Proposition, i.e. constraints (\ref{lcconstr}), are
satisfied, all N--adapted coefficients of the torsion $\widehat{\mathbf{T}}%
_{\ \alpha \beta }^{\gamma }$ \ (\ref{dtors}) are zero. In such a case, the
distortion tensor $\widehat{\mathbf{Z}}_{\ \alpha \beta }^{\gamma }$ is also
zero. Following formula (\ref{distrel1}), we get $\Gamma _{\ \alpha \beta
}^{\gamma }=\widehat{\mathbf{\Gamma }}_{\ \alpha \beta }^{\gamma }.$
Inversely, if the last equalities of coefficients are satisfied for a chosen
splitting (\ref{ncon}), we get trivial torsions and distortions of $\nabla .$
We emphasize that, in general, $\widehat{\mathbf{D}}\neq \nabla $ because
such connections have different transformation rules under frame/coordinate
transforms. Nevertheless, \ if $\Gamma _{\ \alpha \beta }^{\gamma }=\widehat{%
\mathbf{\Gamma }}_{\ \alpha \beta }^{\gamma }$ in a N--adapted frame of
reference, we get corresponding equalities for the Riemann and Ricci tensors
etc. This means that the N--coefficients are such way fixed via frame
transforms that the nonholonomic distribution became integrable even, in
general, the frames (\ref{nader}) and (\ref{nadif}) are nonholonomic
(because not all anholonomy coefficients are not obligatory zero, for
instance, $w_{ia}^{b}=\partial _{a}N_{i}^{b}$ may be nontrivial, see
formulas (\ref{anhcoef})). $\square $
\end{proof}

\vskip5pt

In order to elaborate models of gravity theories for $\nabla $ and/or $%
\widehat{\mathbf{D}}$, we have to consider the corresponding Ricci tensors,%
\begin{eqnarray}
Ric=\{R_{\ \beta \gamma } &:=&R_{\ \beta \gamma \alpha }^{\alpha }\},%
\mbox{ for
}\nabla =\{\Gamma _{\ \alpha \beta }^{\gamma }\},  \label{riccie} \\
\mbox{ and }\widehat{R}ic &=&\{\widehat{\mathbf{R}}_{\ \beta \gamma }:=%
\widehat{\mathbf{R}}_{\ \beta \gamma \alpha }^{\alpha }\},\mbox{ for }%
\widehat{\mathbf{D}}=\{\widehat{\mathbf{\Gamma }}_{\ \alpha \beta }^{\gamma
}\}.  \label{riccid}
\end{eqnarray}

\setcounter{equation}{0} \renewcommand{\theequation}
{B.\arabic{equation}} \setcounter{subsection}{0}
\renewcommand{\thesubsection}
{B.\arabic{subsection}}

\section{Proof of Theorem \protect\ref{th2a}}

\label{sb}For $\omega =1$ and $\underline{h}_{a}=const,$ such proofs can be
obtained by straightforward computations \cite{ijgmmp}, see also Appendices
to \cite{veymh}. The approach was extended for $\omega \neq 1$ and higher
dimensions in \cite{vexsol1,vexsol2}. In this section, we sketch a proof for
ansatz (\ref{ans1}) with nontrivial $\underline{h}_{4}$ depending on
variable $y^{4}$ when $\omega =1$ in data (\ref{paramdcoef}). At the next
step, the formulas will be completed for nontrivial values $\omega \neq 1$.

Using $\widehat{R}_{~1}^{1}=\widehat{R}_{~2}^{2}$ and $\widehat{R}_{~3}^{3}=%
\widehat{R}_{~4}^{4},$ the equations (\ref{cdeinst}) for $\widehat{\mathbf{D}%
}$ and data (\ref{data1a}) (see below) can be written for any source (\ref%
{source}) in the form {\small
\begin{equation*}
\widehat{E}_{~1}^{1}=\widehat{E}_{~2}^{2}=-\widehat{R}_{~3}^{3}=\mathbf{%
\Upsilon }(x^{k},y^{3})+\underline{\mathbf{\Upsilon }}(x^{k},y^{3},y^{4}),\
\widehat{E}_{~3}^{3}=\widehat{E}_{~4}^{4}=-\widehat{R}_{~1}^{1}=\ ^{v}%
\mathbf{\Upsilon }(x^{k}).
\end{equation*}%
} The geometric data for the conditions of Theorem \ref{th2a} are $%
g_{i}=g_{i}(x^{k})$ and {\small
\begin{equation}
g_{3}=h_{3}(x^{k},y^{3}),g_{4}=h_{4}(x^{k},y^{3})\underline{h}%
_{4}(x^{k},y^{4}),N_{i}^{3}=w_{i}(x^{k},y^{3}),N_{i}^{4}=n_{i}(x^{k},y^{3}),
\label{data1a}
\end{equation}%
} for $\underline{h}_{3}=1$ and local coordinates $u^{\alpha
}=(x^{i},y^{a})=(x^{1},x^{2},y^{3},y^{4}).$ \ For such values, we shall
compute respectively the coefficients of $\Omega _{\ \alpha \beta }^{\ a}$
in (\ref{anhcoef}), canonical d--connection $\widehat{\mathbf{\Gamma }}_{\
\alpha \beta }^{\gamma }$ (\ref{cdc}), d--torsion $\widehat{\mathbf{T}}_{\
\alpha \beta }^{\gamma }$\ (\ref{dtors}), necessary coefficients of
d--curvature $\widehat{\mathbf{R}}_{\ \alpha \beta \gamma }^{\tau }$ (\ref%
{dcurv}) with respective contractions for $\widehat{\mathbf{R}}_{\alpha
\beta }:=\widehat{\mathbf{R}}_{\ \alpha \beta \gamma }^{\gamma }$ (\ref%
{driccic}) and resulting $\ ^{s}\widehat{R}$ (\ref{sdcurv}) and $\widehat{%
\mathbf{E}}_{\alpha \beta }$(\ref{enstdt}). Finally, we shall state the
conditions (\ref{lcconstr}) when general coefficients (\ref{data1a}) are
considered for d--metrics.

\subsection{The coefficients of the canonical d--connection and its torsion}

There are such nontrivial coefficients of $\widehat{\mathbf{\Gamma }}_{\
\alpha \beta }^{\gamma }$ (\ref{cdc}), {\small
\begin{eqnarray}
\widehat{L}_{11}^{1} &=&\frac{g_{1}^{\bullet }}{2g_{1}},\ \widehat{L}%
_{12}^{1}=\frac{g_{1}^{\prime }}{2g_{1}},\widehat{L}_{22}^{1}=-\frac{%
g_{2}^{\bullet }}{2g_{1}},\ \widehat{L}_{11}^{2}=\frac{-g_{1}^{\prime }}{%
2g_{2}},\ \widehat{L}_{12}^{2}=\frac{g_{2}^{\bullet }}{2g_{2}},\ \widehat{L}%
_{22}^{2}=\frac{g_{2}^{\prime }}{2g_{2}},  \label{nontrdc} \\
\widehat{L}_{4k}^{4} &=&\frac{\mathbf{\partial }_{k}(h_{4}\underline{h}_{4})%
}{2h_{4}\underline{h}_{4}}-\frac{w_{k}h_{4}^{\ast }}{2h_{4}}-(n_{k}+%
\underline{n}_{k})\frac{\underline{h}_{4}^{\circ }}{2\underline{h}_{4}},%
\widehat{L}_{3k}^{3}=\frac{\mathbf{\partial }_{k}h_{3}}{2h_{3}}-\frac{%
w_{k}h_{3}^{\ast }}{2h_{3}},\widehat{L}_{4k}^{3}=\frac{h_{4}\underline{h}_{4}%
}{-2h_{3}}n_{k}^{\ast },  \nonumber  \\
\widehat{L}_{3k}^{4} &=&\frac{1}{2}n_{k}^{\ast },\widehat{C}_{33}^{3}=\frac{%
h_{3}^{\ast }}{2h_{3}},\widehat{C}_{44}^{3}=-\frac{h_{4}^{\ast }\underline{h}%
_{4}}{h_{3}},\ \widehat{C}_{33}^{4}=-\frac{h_{3}\underline{h}_{3}^{\circ }}{%
h_{4}\underline{h}_{4}},~\widehat{C}_{34}^{4}=\frac{h_{4}^{\ast }}{2h_{4}},%
\widehat{C}_{44}^{4}=\frac{\underline{h}_{4}^{\circ }}{2\underline{h}_{4}}.
\nonumber
\end{eqnarray}%
} We shall need also the values
\begin{equation}
\ \widehat{C}_{3}=\widehat{C}_{33}^{3}+\widehat{C}_{34}^{4}=\frac{%
h_{3}^{\ast }}{2h_{3}}+\frac{h_{4}^{\ast }}{2h_{4}},\widehat{C}_{4}=\widehat{%
C}_{43}^{3}+\widehat{C}_{44}^{4}=\frac{\underline{h}_{4}^{\circ }}{2%
\underline{h}_{4}}.  \label{aux3}
\end{equation}

Using data (\ref{data1a}) for $\underline{w}_{i}=\underline{n}_{i}=0,$ the
coefficients $\Omega _{ij}^{a}=\mathbf{e}_{j}\left( N_{i}^{a}\right) -%
\mathbf{e}_{i}(N_{j}^{a})$ (\ref{anhcoef}), are computed
\begin{equation*}
\Omega _{ij}^{a}=\mathbf{\partial }_{j}\left( N_{i}^{a}\right) -\partial
_{i}(N_{j}^{a})-w_{i}(N_{j}^{a})^{\ast }+w_{j}(N_{i}^{a})^{\ast }.
\end{equation*}%
There are such nontrivial values {\small
\begin{eqnarray}
\Omega _{12}^{3} &=&-\Omega _{21}^{3}=\mathbf{\partial }_{2}w_{1}-\partial
_{1}w_{2}-w_{1}w_{2}^{\ast }+w_{2}w_{1}^{\ast }=w_{1}^{\prime
}-w_{2}^{\bullet }-w_{1}w_{2}{}^{\ast }+w_{2}w_{1}^{\ast }{};  \nonumber  \\
\Omega _{12}^{4} &=&-\Omega _{21}^{4}=\mathbf{\partial }_{2}n_{1}-\partial
_{1}n_{2}-w_{1}n_{2}^{\ast }+w_{2}n_{1}^{\ast }=n_{1}^{\prime
}-n_{2}^{\bullet }-w_{1}n_{2}^{\ast }{}+w_{2}n_{1}^{\ast }{}.  \label{omeg}
\end{eqnarray}%
} The nontrivial coefficients of d--torsion (\ref{dtors}) are $\widehat{T}%
_{\ ji}^{a}=-\Omega _{\ ji}^{a}$ (\ref{omeg}) and $\widehat{T}_{aj}^{c}=%
\widehat{L}_{aj}^{c}-e_{a}(N_{j}^{c}).$ For other types of coefficients,
{\small
\begin{eqnarray}
\widehat{T}_{\ jk}^{i} &=&\widehat{L}_{jk}^{i}-\widehat{L}_{kj}^{i}=0,~%
\widehat{T}_{\ ja}^{i}=\widehat{C}_{jb}^{i}=0,~\widehat{T}_{\ bc}^{a}=\
\widehat{C}_{bc}^{a}-\ \widehat{C}_{cb}^{a}=0,  \nonumber  \\
\widehat{T}_{3k}^{3} &=&\widehat{L}_{3k}^{3}-e_{3}(N_{k}^{3})=\frac{\mathbf{%
\partial }_{k}h_{3}}{2h_{3}}-w_{k}\frac{h_{3}^{\ast }}{2h_{3}}-w_{k}^{\ast
}{},  \nonumber  \\
\widehat{T}_{4k}^{3} &=&\widehat{L}_{4k}^{3}-e_{4}(N_{k}^{3})=-\frac{h_{4}%
\underline{h}_{4}}{2h_{3}}n_{k}^{\ast },\ \widehat{T}_{3k}^{4}=~\widehat{L}%
_{3k}^{4}-e_{3}(N_{k}^{4})=\frac{1}{2}n_{k}^{\ast }-n_{k}^{\ast }=-\frac{1}{2%
}n_{k}^{\ast },  \nonumber  \\
\widehat{T}_{4k}^{4} &=&\widehat{L}_{4k}^{4}-e_{4}(N_{k}^{4})=\frac{\mathbf{%
\partial }_{k}(h_{4}\underline{h}_{4})}{2h_{4}\underline{h}_{4}}-w_{k}\frac{%
h_{4}^{\ast }}{2h_{4}}-n_{k}\frac{\underline{h}_{4}^{\circ }}{2\underline{h}%
_{4}},  \nonumber  \\
-\widehat{T}_{12}^{3} &=&w_{1}^{\prime }-w_{2}^{\bullet }-w_{1}w_{2}^{\ast
}{}+w_{2}w_{1}^{\ast },\ -\widehat{T}_{12}^{4}=n_{1}^{\prime
}-n_{2}^{\bullet }-w_{1}n_{2}^{\ast }{}+w_{2}n_{1}^{\ast }{}.
\label{nontrtors}
\end{eqnarray}%
} Such coefficients of torsion if and only if $\Gamma _{\ \alpha \beta
}^{\gamma }=\widehat{\mathbf{\Gamma }}_{\ \alpha \beta }^{\gamma }.$

\subsection{The zero torsion conditions}

We must solve the equations
\begin{equation*}
\widehat{T}_{4k}^{4}=\widehat{L}_{4k}^{4}-e_{4}(N_{k}^{4})=\frac{\mathbf{%
\partial }_{k}(h_{4}\underline{h}_{4})}{2h_{4}\underline{h}_{4}}-w_{k}\frac{%
h_{4}^{\ast }}{2h_{4}}-n_{k}\frac{\underline{h}_{4}^{\circ }}{2\underline{h}%
_{4}}=0,
\end{equation*}%
which follow from formulas (\ref{nontrtors}). Taking any $\underline{h}_{4}$
for which
\begin{equation}
n_{k}\underline{h}_{4}^{\circ }=\mathbf{\partial }_{k}\underline{h}_{4},
\label{aux4}
\end{equation}%
\ the condition $n_{k}\frac{h_{4}^{\ast }}{2h_{4}}\frac{\underline{h}%
_{4}^{\circ }}{2\underline{h}_{4}}-\frac{h_{4}^{\ast }}{2h_{4}}\frac{\mathbf{%
\partial }_{k}\underline{h}_{4}}{2\underline{h}_{4}}=0$ can be satisfied.
For instance, parametrizing $\underline{h}_{4}=~^{h}\underline{h}_{4}(x^{k})%
\underline{h}(y^{4}),$ the equations (\ref{aux4}) are solved by any
\begin{equation*}
\underline{h}(y^{4})=e^{\varkappa y^{4}}\mbox{ and }n_{k}=\varkappa \partial
_{k}[~^{h}\underline{h}_{4}(x^{k})],\mbox{ for }\varkappa =const.
\end{equation*}

We conclude that for any $n_{k}$ and $\underline{h}_{4}$ related by
conditions (\ref{aux4}) the zero torsion conditions (\ref{nontrtors}) are
the same as for $\underline{h}_{4}=const.$ Using a similar proof from \cite%
{vexsol1,vexsol2}, it is possible to verify by straightforward computations
that $\widehat{T}_{\beta \gamma }^{\alpha }=0$ if the equations (\ref{lccond}%
) are solved.

\subsection{N--adapted coefficients of the canonical Ricci d--tensor}

The values $\widehat{R}_{ij}=\widehat{R}_{\ ijk}^{k}$ are computed as (\ref%
{riccid}) using (\ref{dcurv}),
\begin{eqnarray*}
\widehat{R}_{\ hjk}^{i} &=&\mathbf{e}_{k}\widehat{L}_{.hj}^{i}-\mathbf{e}_{j}%
\widehat{L}_{hk}^{i}+\widehat{L}_{hj}^{m}\widehat{L}_{mk}^{i}-\widehat{L}%
_{hk}^{m}\widehat{L}_{mj}^{i}-\widehat{C}_{ha}^{i}\Omega _{jk}^{a} \\
&=&\mathbf{\partial }_{k}\widehat{L}_{.hj}^{i}-\partial _{j}\widehat{L}%
_{hk}^{i}+\widehat{L}_{hj}^{m}\widehat{L}_{mk}^{i}-\widehat{L}_{hk}^{m}%
\widehat{L}_{mj}^{i}.
\end{eqnarray*}%
We note $\widehat{C}_{\ ha}^{i}=0$ and {\small
\begin{equation*}
\mathbf{e}_{k}\widehat{L}_{hj}^{i}=\partial _{k}\widehat{L}%
_{hj}^{i}+N_{k}^{a}\partial _{a}\widehat{L}_{hj}^{i}=\partial _{k}\widehat{L}%
_{hj}^{i}+w_{k}\left( \widehat{L}_{hj}^{i}\right) ^{\ast }+n_{k}\left(
\widehat{L}_{hj}^{i}\right) ^{\circ }=\partial _{k}\widehat{L}_{hj}^{i}.
\end{equation*}%
} Taking derivatives of (\ref{nontrdc}), we obtain {\small
\begin{eqnarray*}
\partial _{1}\widehat{L}_{\ 11}^{1} &=&(\frac{g_{1}^{\bullet }}{2g_{1}}%
)^{\bullet }=\frac{g_{1}^{\bullet \bullet }}{2g_{1}}-\frac{\left(
g_{1}^{\bullet }\right) ^{2}}{2\left( g_{1}\right) ^{2}},\ \partial _{1}%
\widehat{L}_{\ 12}^{1}=(\frac{g_{1}^{\prime }}{2g_{1}})^{\bullet }=\frac{%
g_{1}^{\prime \bullet }}{2g_{1}}-\frac{g_{1}^{\bullet }g_{1}^{\prime }}{%
2\left( g_{1}\right) ^{2}},\  \\
\partial _{1}\widehat{L}_{\ 22}^{1} &=&(-\frac{g_{2}^{\bullet }}{2g_{1}}%
)^{\bullet }=-\frac{g_{2}^{\bullet \bullet }}{2g_{1}}+\frac{g_{1}^{\bullet
}g_{2}^{\bullet }}{2\left( g_{1}\right) ^{2}},\ \partial _{1}\widehat{L}_{\
11}^{2}=(-\frac{g_{1}^{\prime }}{2g_{2}})^{\bullet }=-\frac{g_{1}^{\prime
\bullet }}{2g_{2}}+\frac{g_{1}^{\bullet }g_{2}^{\prime }}{2\left(
g_{2}\right) ^{2}}, \\
\partial _{1}\widehat{L}_{\ 12}^{2} &=&(\frac{g_{2}^{\bullet }}{2g_{2}}%
)^{\bullet }=\frac{g_{2}^{\bullet \bullet }}{2g_{2}}-\frac{\left(
g_{2}^{\bullet }\right) ^{2}}{2\left( g_{2}\right) ^{2}},\ \partial _{1}%
\widehat{L}_{\ 22}^{2}=(\frac{g_{2}^{\prime }}{2g_{2}})^{\bullet }=\frac{%
g_{2}^{\prime \bullet }}{2g_{2}}-\frac{g_{2}^{\bullet }g_{2}^{\prime }}{%
2\left( g_{2}\right) ^{2}},
\end{eqnarray*}%
\begin{eqnarray*}
\partial _{2}\widehat{L}_{\ 11}^{1} &=&(\frac{g_{1}^{\bullet }}{2g_{1}}%
)^{\prime }=\frac{g_{1}^{\bullet \prime }}{2g_{1}}-\frac{g_{1}^{\bullet
}g_{1}^{\prime }}{2\left( g_{1}\right) ^{2}},~\partial _{2}\widehat{L}_{\
12}^{1}=(\frac{g_{1}^{\prime }}{2g_{1}})^{\prime }=\frac{g_{1}^{\prime
\prime }}{2g_{1}}-\frac{\left( g_{1}^{\prime }\right) ^{2}}{2\left(
g_{1}\right) ^{2}}, \\
\partial _{2}\widehat{L}_{\ 22}^{1} &=&(-\frac{g_{2}^{\bullet }}{2g_{1}}%
)^{\prime }=-\frac{g_{2}^{\bullet ^{\prime }}}{2g_{1}}+\frac{g_{2}^{\bullet
}g_{1}^{^{\prime }}}{2\left( g_{1}\right) ^{2}},\ \partial _{2}\widehat{L}%
_{\ 11}^{2}=(-\frac{g_{1}^{\prime }}{2g_{2}})^{\prime }=-\frac{g_{1}^{\prime
\prime }}{2g_{2}}+\frac{g_{1}^{\bullet }g_{1}^{\prime }}{2\left(
g_{2}\right) ^{2}}, \\
\partial _{2}\widehat{L}_{\ 12}^{2} &=&(\frac{g_{2}^{\bullet }}{2g_{2}}%
)^{\prime }=\frac{g_{2}^{\bullet \prime }}{2g_{2}}-\frac{g_{2}^{\bullet
}g_{2}^{\prime }}{2\left( g_{2}\right) ^{2}},\partial _{2}\widehat{L}_{\
22}^{2}=(\frac{g_{2}^{\prime }}{2g_{2}})^{\prime }=\frac{g_{2}^{\prime
\prime }}{2g_{2}}-\frac{\left( g_{2}^{\prime }\right) ^{2}}{2\left(
g_{2}\right) ^{2}}.
\end{eqnarray*}%
} These values result in two nontrivial components,
\begin{eqnarray*}
\widehat{R}_{\ 212}^{1} &=&\frac{g_{2}^{\bullet \bullet }}{2g_{1}}-\frac{%
g_{1}^{\bullet }g_{2}^{\bullet }}{4\left( g_{1}\right) ^{2}}-\frac{\left(
g_{2}^{\bullet }\right) ^{2}}{4g_{1}g_{2}}+\frac{g_{1}^{\prime \prime }}{%
2g_{1}}-\frac{g_{1}^{\prime }g_{2}^{\prime }}{4g_{1}g_{2}}-\frac{\left(
g_{1}^{\prime }\right) ^{2}}{4\left( g_{1}\right) ^{2}}, \\
\widehat{R}_{\ 112}^{2} &=&-\frac{g_{2}^{\bullet \bullet }}{2g_{2}}+\frac{%
g_{1}^{\bullet }g_{2}^{\bullet }}{4g_{1}g_{2}}+\frac{\left( g_{2}^{\bullet
}\right) ^{2}}{4(g_{2})^{2}}-\frac{g_{1}^{\prime \prime }}{2g_{2}}+\frac{%
g_{1}^{\prime }g_{2}^{\prime }}{4(g_{2})^{2}}+\frac{\left( g_{1}^{\prime
}\right) ^{2}}{4g_{1}g_{2}}.
\end{eqnarray*}%
Considering $\widehat{R}_{11}=-\widehat{R}_{\ 112}^{2}$ and $\widehat{R}%
_{22}=\widehat{R}_{\ 212}^{1},$ for $g^{i}=1/g_{i},$ we compute
\begin{equation*}
\widehat{R}_{1}^{1}=\widehat{R}_{2}^{2}=-\frac{1}{2g_{1}g_{2}}%
[g_{2}^{\bullet \bullet }-\frac{g_{1}^{\bullet }g_{2}^{\bullet }}{2g_{1}}-%
\frac{\left( g_{2}^{\bullet }\right) ^{2}}{2g_{2}}+g_{1}^{\prime \prime }-%
\frac{g_{1}^{\prime }g_{2}^{\prime }}{2g_{2}}-\frac{(g_{1}^{\prime })^{2}}{%
2g_{1}}],
\end{equation*}%
which is contained in equations (\ref{eq2}).

To derive the equations (\ref{eq3}) we consider the third formula in (\ref%
{dcurv}),
\begin{eqnarray*}
\widehat{R}_{\ bka}^{c} &=&\frac{\partial \widehat{L}_{bk}^{c}}{\partial
y^{a}}-\widehat{C}_{~ba|k}^{c}+\widehat{C}_{~bd}^{c}\widehat{T}_{~ka}^{d} \\
&=&\frac{\partial \widehat{L}_{bk}^{c}}{\partial y^{a}}-(\frac{\partial
\widehat{C}_{ba}^{c}}{\partial x^{k}}+\widehat{L}_{dk}^{c\,}\widehat{C}%
_{ba}^{d}-\widehat{L}_{bk}^{d}\widehat{C}_{da}^{c}-\widehat{L}_{ak}^{d}%
\widehat{C}_{bd}^{c})+\widehat{C}_{bd}^{c}\widehat{T}_{ka}^{d}.
\end{eqnarray*}%
Contracting indices, we obtain $\widehat{R}_{bk}=\widehat{R}_{\ bka}^{a}=%
\frac{\partial L_{bk}^{a}}{\partial y^{a}}-\widehat{C}_{ba|k}^{a}+\widehat{C}%
_{bd}^{a}\widehat{T}_{ka}^{d},$ where for $\widehat{C}_{b}:=\widehat{C}%
_{ba}^{c}$%
\begin{eqnarray*}
\widehat{C}_{b|k} &=&\mathbf{e}_{k}\widehat{C}_{b}-\widehat{L}_{\ bk}^{d\,}%
\widehat{C}_{d}=\partial _{k}\widehat{C}_{b}-N_{k}^{e}\partial _{e}\widehat{C%
}_{b}-\widehat{L}_{\ bk}^{d\,}\widehat{C}_{d} \\
&=&\partial _{k}\widehat{C}_{b}-w_{k}\widehat{C}_{b}^{\ast }-n_{k}\widehat{C}%
_{b}^{\circ }-\widehat{L}_{\ bk}^{d\,}\widehat{C}_{d}.
\end{eqnarray*}%
We consider a conventional splitting $\widehat{R}_{bk}=\ _{[1]}R_{bk}+\
_{[2]}R_{bk}+\ _{[3]}R_{bk},$ where%
\begin{eqnarray*}
\ _{[1]}R_{bk} &=&\left( \widehat{L}_{bk}^{3}\right) ^{\ast }+\left(
\widehat{L}_{bk}^{4}\right) ^{\circ },\ _{[2]}R_{bk}=-\partial _{k}\widehat{C%
}_{b}+w_{k}\widehat{C}_{b}^{\ast }+n_{k}\widehat{C}_{b}^{\circ }+\widehat{L}%
_{\ bk}^{d\,}\widehat{C}_{d}, \\
\ _{[3]}R_{bk} &=&\widehat{C}_{bd}^{a}\widehat{T}_{ka}^{d}=\widehat{C}%
_{b3}^{3}\widehat{T}_{k3}^{3}+\widehat{C}_{b4}^{3}\widehat{T}_{k3}^{4}+%
\widehat{C}_{b3}^{4}\widehat{T}_{k4}^{3}+\widehat{C}_{b4}^{4}\widehat{T}%
_{k4}^{4}.
\end{eqnarray*}%
Using formulas (\ref{nontrdc}), (\ref{nontrtors}) and (\ref{aux3}), we
compute {\small
\begin{eqnarray*}
\ _{[1]}R_{3k} &=&\left( \widehat{L}_{3k}^{3}\right) ^{\ast }+\left(
\widehat{L}_{3k}^{4}\right) ^{\circ }=\left( \frac{\mathbf{\partial }%
_{k}h_{3}}{2h_{3}}-w_{k}\frac{h_{3}^{\ast }}{2h_{3}}\right) ^{\ast} \\
&=&-w_{k}^{\ast }\frac{h_{3}^{\ast }}{2h_{3}}-w_{k}\left( \frac{h_{3}^{\ast }%
}{2h_{3}}\right) ^{\ast }+\frac{1}{2}\left( \frac{\mathbf{\partial }_{k}h_{3}%
}{h_{3}}\right) ^{\ast }, \\
\ _{[2]}R_{3k} &=&-\partial _{k}\widehat{C}_{3}+w_{k}\widehat{C}_{3}^{\ast
}+n_{k}\widehat{C}_{3}^{\circ }+\widehat{L}_{\ 3k}^{3\,}\widehat{C}_{3}+%
\widehat{L}_{\ 3k}^{4\,}\widehat{C}_{4}= \\
&=&w_{k}[\frac{h_{3}^{\ast \ast }}{2h_{3}}-\frac{3}{4}\frac{(h_{3}^{\ast
})^{2}}{(h_{3})^{2}}+\frac{h_{4}^{\ast \ast }}{2h_{4}}-\frac{1}{2}\frac{%
(h_{4}^{\ast })^{2}}{(h_{4})^{2}}-\frac{1}{4}\frac{h_{3}^{\ast }}{h_{3}}%
\frac{h_{4}^{\ast }}{h_{4}}]+n_{k}^{\ast }\frac{\underline{h}_{4}^{\circ }}{4%
\underline{h}_{4}} \\
&&+\left( \frac{\mathbf{\partial }_{k}h_{3}}{2h_{3}}+\frac{\mathbf{\partial }%
_{k}\underline{h}_{3}}{2\underline{h}_{3}}\right) (\frac{h_{3}^{\ast }}{%
2h_{3}}+\frac{h_{4}^{\ast }}{2h_{4}})-\frac{1}{2}\partial _{k}(\frac{%
h_{3}^{\ast }}{h_{3}}+\frac{h_{4}^{\ast }}{h_{4}}), \\
\ _{[3]}R_{3k} &=&\widehat{C}_{33}^{3}\widehat{T}_{k3}^{3}+\widehat{C}%
_{34}^{3}\widehat{T}_{k3}^{4}+\widehat{C}_{33}^{4}\widehat{T}_{k4}^{3}+%
\widehat{C}_{34}^{4}\widehat{T}_{k4}^{4} = w_{k}\left( \frac{(h_{3}^{\ast
})^{2}}{4(h_{3})^{2}}+\frac{(h_{4}^{\ast })^{2}}{4(h_{4})^{2}}\right) \\
&&+ w_{k}^{\ast }\frac{h_{3}^{\ast }}{2h_{3}} + n_{k}\frac{h_{4}^{\ast }}{%
2h_{4}}\frac{\underline{h}_{4}^{\circ }}{2\underline{h}_{4}}-\frac{%
h_{3}^{\ast }}{2h_{3}}\frac{\mathbf{\partial }_{k}h_{3}}{2h_{3}}-\frac{%
h_{4}^{\ast }}{2h_{4}}\left( \frac{\mathbf{\partial }_{k}h_{4}}{2h_{4}}+%
\frac{\mathbf{\partial }_{k}\underline{h}_{4}}{2\underline{h}_{4}}\right) .
\end{eqnarray*}%
} Putting together, we get%
\begin{eqnarray}
\ \widehat{R}_{3k} &=&w_{k}\left[ \frac{h_{4}^{\ast \ast }}{2h_{4}}-\frac{1}{%
4}\frac{(h_{4}^{\ast })^{2}}{(h_{4})^{2}}-\frac{1}{4}\frac{h_{3}^{\ast }}{%
h_{3}}\frac{h_{4}^{\ast }}{h_{4}}\right] +n_{k}^{\ast }\frac{\underline{h}%
_{4}^{\circ }}{4\underline{h}_{4}}+n_{k}\frac{h_{4}^{\ast }}{2h_{4}}\frac{%
\underline{h}_{4}^{\circ }}{2\underline{h}_{4}}  \nonumber  \\
&&+\frac{h_{4}^{\ast }}{2h_{4}}\frac{\mathbf{\partial }_{k}h_{3}}{2h_{3}}-%
\frac{1}{2}\frac{\partial _{k}h_{4}^{\ast }}{h_{4}}+\frac{1}{4}\frac{%
h_{4}^{\ast }\partial _{k}h_{4}}{(h_{4})^{2}}-\frac{h_{4}^{\ast }}{2h_{4}}%
\frac{\mathbf{\partial }_{k}\underline{h}_{4}}{2\underline{h}_{4}},  \nonumber
\end{eqnarray}%
which is equivalent to (\ref{eq3}) if the conditions $n_{k}\underline{h}%
_{4}^{\circ }=\mathbf{\partial }_{k}\underline{h}_{4},$ see below formula (%
\ref{aux4}), are satisfied.

The values $\ \widehat{R}_{4k}=\ _{[1]}R_{4k}+\ _{[2]}R_{4k}+\ _{[3]}R_{4k},
$ are defined by
\begin{eqnarray*}
&&\ _{[1]}R_{4k}=\left( \widehat{L}_{4k}^{3}\right) ^{\ast }+\left( \widehat{%
L}_{4k}^{4}\right) ^{\circ },\ \ _{[2]}R_{4k}=-\partial _{k}\widehat{C}%
_{4}+w_{k}\widehat{C}_{4}^{\ast }+n_{k}\widehat{C}_{4}^{\circ }+\widehat{L}%
_{\ 4k}^{3\,}\widehat{C}_{3} \\
&&+\widehat{L}_{\ 4k}^{4\,}\widehat{C}_{4},\ \ _{[3]}R_{4k}=\widehat{C}%
_{4d}^{a}\widehat{T}_{ka}^{d}=\widehat{C}_{43}^{3}\widehat{T}_{k3}^{3}+%
\widehat{C}_{44}^{3}\widehat{T}_{k3}^{4}+\widehat{C}_{43}^{4}\widehat{T}%
_{k4}^{3}+\widehat{C}_{44}^{4}\widehat{T}_{k4}^{4}.
\end{eqnarray*}%
Using $\widehat{L}_{4k}^{3}$ and $\widehat{L}_{4k}^{4}$ from (\ref{nontrdc}%
), we obtain{\small
\begin{eqnarray*}
&&\ _{[1]}R_{4k}=\left( \widehat{L}_{4k}^{3}\right) ^{\ast }+\left( \widehat{%
L}_{4k}^{4}\right) ^{\circ }=-(\frac{h_{4}\underline{h}_{4}}{2h_{3}}%
n_{k}^{\ast }{})^{\ast }+(\frac{\mathbf{\partial }_{k}(h_{4}\underline{h}%
_{4})}{2h_{4}\underline{h}_{4}}-w_{k}\frac{h_{4}^{\ast }}{2h_{4}}-n_{k})%
\frac{\underline{h}_{4}^{\circ }}{2\underline{h}_{4}})^{\circ } \\
&=&-n_{k}^{\ast \ast }{}\frac{h_{4}}{2h_{3}}\underline{h}_{4}-n_{k}^{\ast }(%
\frac{h_{4}^{\ast }}{2h_{3}}-\frac{h_{4}^{\ast }h_{3}^{\ast }}{%
2(h_{3})^{\ast }})\underline{h}_{4}-n_{k}(\frac{\underline{h}_{4}^{\circ
\circ }}{2\underline{h}_{4}}-\frac{(\underline{h}_{4}^{\circ })^{2}}{2(%
\underline{h}_{4})^{2}})+\frac{\mathbf{\partial }_{k}\underline{h}%
_{4}^{\circ }}{2\underline{h}_{4}}-\frac{\underline{h}_{4}^{\circ }\mathbf{%
\partial }_{k}\underline{h}_{4}}{2(\underline{h}_{4})^{2}}.
\end{eqnarray*}%
} The second term follows from (\ref{aux3}), for $\widehat{C}_{3}, \widehat{C%
}_{4},$ and (\ref{nontrdc}), for $\ \widehat{L}_{4k}^{3}$ and $\widehat{L}%
_{4k}^{4},$ {\small
\begin{eqnarray*}
\ _{[2]}R_{4k} &=&-\partial _{k}\widehat{C}_{4}+w_{k}\widehat{C}_{4}^{\ast
}+n_{k}\widehat{C}_{4}^{\circ }+\widehat{L}_{\ 4k}^{3\,}\widehat{C}_{3}+%
\widehat{L}_{\ 4k}^{4\,}\widehat{C}_{4} \\
&=&-w_{k}\left( \frac{h_{4}^{\ast }}{2h_{4}}\frac{\underline{h}_{4}^{\circ }%
}{2\underline{h}_{4}}\right) -n_{k}^{\ast }{}\frac{h_{4}\underline{h}_{4}}{%
2h_{3}\underline{h}_{3}}\left( \frac{h_{3}^{\ast }}{2h_{3}}+\frac{%
h_{4}^{\ast }}{2h_{4}}\right) \\
&&+n_{k}\left[ \left( \frac{\underline{h}_{4}^{\circ }}{2\underline{h}_{4}}%
\right) ^{\circ }-\frac{\underline{h}_{4}^{\circ }}{2\underline{h}_{4}}\frac{%
\underline{h}_{4}^{\circ }}{2\underline{h}_{4}}\right] +\frac{\mathbf{%
\partial }_{k}h_{4}}{2h_{4}}\frac{\underline{h}_{4}^{\circ }}{2\underline{h}%
_{4}}+\frac{\mathbf{\partial }_{k}\underline{h}_{4}}{2\underline{h}_{4}}%
\frac{\underline{h}_{4}^{\circ }}{2\underline{h}_{4}}-\frac{\partial _{k}%
\underline{h}_{4}^{\circ }}{2\underline{h}_{4}}+\frac{\underline{h}%
_{4}^{\circ }\partial _{k}\underline{h}_{4}}{2(\underline{h}_{4})^{2}}.
\end{eqnarray*}%
} The formulas (\ref{nontrdc}), with $\widehat{C}_{43}^{3},\widehat{C}%
_{44}^{3},\widehat{C}_{43}^{4},\widehat{C}_{44}^{4},$ and the formulas (\ref%
{nontrtors}), with $\widehat{T}_{k3}^{3},\widehat{T}_{k3}^{4},\widehat{T}%
_{k4}^{3},\widehat{T}_{k4}^{4},$ have to be used for the third term,%
\begin{eqnarray*}
_{\lbrack 3]}R_{4k} &=&\widehat{C}_{43}^{3}\widehat{T}_{k3}^{3}+\widehat{C}%
_{44}^{3}\widehat{T}_{k3}^{4}+\widehat{C}_{43}^{4}\widehat{T}_{k4}^{3}+%
\widehat{C}_{44}^{4}\widehat{T}_{k4}^{4} \\
&=&w_{k}\frac{h_{4}^{\ast }}{2h_{4}}\frac{\underline{h}_{4}^{\circ }}{2%
\underline{h}_{4}}+n_{k}(\frac{\underline{h}_{4}^{\circ }}{2\underline{h}_{4}%
})^{2}-\frac{\mathbf{\partial }_{k}h_{4}}{2h_{4}}\frac{\underline{h}%
_{4}^{\circ }}{2\underline{h}_{4}}-\frac{\underline{h}_{4}^{\circ }}{2%
\underline{h}_{4}}\frac{\mathbf{\partial }_{k}\underline{h}_{4}}{2\underline{%
h}_{4}}.
\end{eqnarray*}
Summarizing above three terms,{\small
\begin{eqnarray*}
&&\widehat{R}_{4k}=-n_{k}^{\ast \ast }{}\frac{h_{4}}{2h_{3}}\underline{h}%
_{4}+n_{k}^{\ast }{}\left( -\frac{h_{4}^{\ast }}{2h_{3}}+\frac{h_{4}^{\ast
}h_{3}^{\ast }}{2(h_{3})^{\ast }}-\frac{h_{4}^{\ast }h_{3}^{\ast }}{%
4(h_{3})^{\ast }}-\frac{h_{4}^{\ast }}{4h_{3}}\right) \underline{h}_{4} \\
&&+n_{k}\left( -\frac{\underline{h}_{4}^{\circ \circ }}{2\underline{h}_{4}}+%
\frac{(\underline{h}_{4}^{\circ })^{2}}{2(\underline{h}_{4})^{2}}+\left(
\frac{\underline{h}_{4}^{\circ }}{2\underline{h}_{4}}\right) ^{\circ }-\frac{%
\underline{h}_{4}^{\circ }}{2\underline{h}_{4}}\frac{\underline{h}%
_{4}^{\circ }}{2\underline{h}_{4}}+(\frac{\underline{h}_{4}^{\circ }}{2%
\underline{h}_{4}})^{2}\right) +\frac{\mathbf{\partial }_{k}\underline{h}%
_{4}^{\circ }}{2\underline{h}_{4}}-\frac{\underline{h}_{4}^{\circ }\mathbf{%
\partial }_{k}\underline{h}_{4}}{2(\underline{h}_{4})^{2}} \\
&&+\frac{\mathbf{\partial }_{k}h_{4}}{2h_{4}}\frac{\underline{h}_{4}^{\circ }%
}{2\underline{h}_{4}}+\frac{\mathbf{\partial }_{k}\underline{h}_{4}}{2%
\underline{h}_{4}}\frac{\underline{h}_{4}^{\circ }}{2\underline{h}_{4}}-%
\frac{\partial _{k}\underline{h}_{4}^{\circ }}{2\underline{h}_{4}}+\frac{%
\underline{h}_{4}^{\circ }\partial _{k}\underline{h}_{4}}{2(\underline{h}%
_{4})^{2}}-\frac{\mathbf{\partial }_{k}h_{4}}{2h_{4}}\frac{\underline{h}%
_{4}^{\circ }}{2\underline{h}_{4}}-\frac{\underline{h}_{4}^{\circ }}{2%
\underline{h}_{4}}\frac{\mathbf{\partial }_{k}\underline{h}_{4}}{2\underline{%
h}_{4}},
\end{eqnarray*}%
} and prove equations (\ref{eq4}).

For the coefficients
\begin{equation*}
\widehat{R}_{\ jka}^{i}=\frac{\partial \widehat{L}_{jk}^{i}}{\partial y^{k}}%
-(\frac{\partial \widehat{C}_{ja}^{i}}{\partial x^{k}}+\widehat{L}_{lk}^{i}%
\widehat{C}_{ja}^{l}-\widehat{L}_{jk}^{l}\widehat{C}_{la}^{i}-\widehat{L}%
_{ak}^{c}\widehat{C}_{jc}^{i})+\widehat{C}_{jb}^{i}\widehat{T}_{ka}^{b}
\end{equation*}
from (\ref{dcurv}), we obtain zero values because $\widehat{C}_{jb}^{i}=0$
and $\widehat{L}_{jk}^{i}$ do not depend on $y^{k}.$ We obtain $\widehat{R}%
_{ja}=\widehat{R}_{\ jia}^{i}=0$.

Taking $\widehat{R}_{\ bcd}^{a}$ from (\ref{dcurv}) and contracting the
indices in order to compute the Ricci coefficients,
\begin{equation*}
\widehat{R}_{bc}=\frac{\partial \widehat{C}_{bc}^{d}}{\partial y^{d}}-\frac{%
\partial \widehat{C}_{bd}^{d}}{\partial y^{c}}+\widehat{C}_{bc}^{e}\widehat{C%
}_{e}-\widehat{C}_{bd}^{e}\widehat{C}_{ec}^{d}.
\end{equation*}
We have
\begin{equation*}
\widehat{R}_{bc}=(\widehat{C}_{bc}^{3})^{\ast }+(\widehat{C}%
_{bc}^{4})^{\circ }-\partial _{c}\widehat{C}_{b}+\widehat{C}_{bc}^{3}%
\widehat{C}_{3}+\widehat{C}_{bc}^{4}\widehat{C}_{4}-\widehat{C}_{b3}^{3}%
\widehat{C}_{3c}^{3}-\widehat{C}_{b4}^{3}\widehat{C}_{3c}^{4}-\widehat{C}%
_{b3}^{4}\widehat{C}_{4c}^{3}-\widehat{C}_{b4}^{4}\widehat{C}_{4c}^{4}.
\end{equation*}
There are nontrivial values,{\small
\begin{eqnarray*}
\widehat{R}_{33} &=&\left( \widehat{C}_{33}^{3}\right) ^{\ast }+\left(
\widehat{C}_{33}^{4}\right) ^{\circ }-\widehat{C}_{3}^{\ast }+\widehat{C}%
_{33}^{3}\widehat{C}_{3}+\widehat{C}_{33}^{4}\widehat{C}_{4}-\widehat{C}%
_{33}^{3}\widehat{C}_{33}^{3}-2\widehat{C}_{34}^{3}\widehat{C}_{33}^{4}-%
\widehat{C}_{34}^{4}\widehat{C}_{43}^{4} \\
&=&-\frac{1}{2}\frac{h_{4}^{\ast \ast }}{h_{4}}+\frac{1}{4}\frac{%
(h_{4}^{\ast })^{2}}{(h_{4})^{2}}+\frac{1}{4}\frac{h_{3}^{\ast }}{h_{3}}%
\frac{h_{4}^{\ast }}{h_{4}}, \\
\widehat{R}_{44} &=&\left( \widehat{C}_{44}^{3}\right) ^{\ast }+\left(
\widehat{C}_{44}^{4}\right) ^{\circ }-\partial _{4}\widehat{C}_{4}+\widehat{C%
}_{44}^{3}\widehat{C}_{3}+\widehat{C}_{44}^{4}\widehat{C}_{4}-\widehat{C}%
_{43}^{3}\widehat{C}_{34}^{3}-2\widehat{C}_{44}^{3}\widehat{C}_{34}^{4}-%
\widehat{C}_{44}^{4}\widehat{C}_{44}^{4} \\
&=&-\frac{1}{2}\frac{h_{4}^{\ast \ast }}{h_{3}}\underline{h}_{4}+\frac{1}{4}%
\frac{h_{3}^{\ast }h_{4}^{\ast }}{(h_{3})^{2}}\underline{h}_{4}+\frac{1}{4}%
\frac{h_{4}^{\ast }}{h_{3}}\frac{h_{4}^{\ast }}{h_{4}}\underline{h}_{4}.
\end{eqnarray*}%
}These formulas are equivalent to nontrivial v--coefficients of the Ricci
d--tensor,%
\begin{eqnarray*}
\widehat{R}_{~3}^{3} &=&\frac{1}{h_{3}\underline{h}_{3}}\widehat{R}_{33}=%
\frac{1}{2h_{3}h_{4}}[-h_{4}^{\ast \ast }+\frac{(h_{4}^{\ast })^{2}}{2h_{4}}+%
\frac{h_{3}^{\ast }h_{4}^{\ast }}{2h_{3}}]\frac{1}{\underline{h}_{3}}, \\
\widehat{R}_{~4}^{4} &=&\frac{1}{h_{4}\underline{h}_{4}}\widehat{R}_{44}=%
\frac{1}{2h_{3}h_{4}}[-h_{4}^{\ast \ast }+\frac{(h_{4}^{\ast })^{2}}{2h_{4}}+%
\frac{h_{3}^{\ast }h_{4}^{\ast }}{2h_{3}}]\frac{1}{\underline{h}_{3}},
\end{eqnarray*}%
i.e. to the equations (\ref{eq2}).

\subsection{Geometric data for diagonal MGYMH configurations}

\label{aseymheq}The diagonal ansatz for generating solutions of the system (%
\ref{ym1})--(\ref{heq3}) is fixed in the form%
\begin{eqnarray}
\ ^{\circ }\mathbf{g} &=&\ ^{\circ }g_{i}(x^{1})dx^{i}\otimes dx^{i}+\
^{\circ }h_{a}(x^{1},x^{2})dy^{a}\otimes dy^{a}=  \label{ansatz1} \\
&=&q^{-1}(r)dr\otimes dr+r^{2}d\theta \otimes d\theta +r^{2}\sin ^{2}\theta
d\varphi \otimes d\varphi -\sigma ^{2}(r)q(r)dt\otimes dt,  \nonumber
\end{eqnarray}%
where the coordinates and metric coefficients are parameterized,
respectively,
\begin{eqnarray*}
u^{\alpha } &=&(x^{1}=r,x^{2}=\theta ,y^{3}=\varphi ,y^{4}=t), \\
\ ^{\circ }g_{1} &=&q^{-1}(r),\ ^{\circ }g_{2}=r^{2},\ ^{\circ
}h_{3}=r^{2}\sin ^{2}\theta ,\ ^{\circ }h_{4}=-\sigma ^{2}(r)q(r)
\end{eqnarray*}%
for $q(r)=1-$ $2m(r)/r-\overline{\Lambda }r^{2}/3,$ where $\overline{\Lambda
}$ is a cosmological constant. The function $m(r)$ is interpreted as the
total mass--energy within the radius $r$ which for $m(r)=0$ defines an empty
de Sitter, $dS,$ space written in a static coordinate system with a
cosmological horizon at $r=r_{c}=\sqrt{\frac{3}{\overline{\Lambda }}}.$ The solution
of Yang--Mills equations (\ref{ym1}) associated to the quadratic metric
element (\ref{ansatz1}) is defined by a single magnetic potential $\omega
(r),$
\begin{equation}
\ ^{\circ }A=\ ^{\circ }A_{2}dx^{2}+\ ^{\circ }A_{3}dy^{3}=\frac{1}{2e}\left[
\omega (r)\tau _{1}d\theta +(\cos \theta \ \tau _{3}+\omega (r)\tau _{2}\sin
\theta )\ d\varphi \right] ,  \label{ans1a}
\end{equation}%
where $\tau _{1},\tau _{2},\tau _{3}$ are Pauli matrices. The corresponding
solution of (\ref{heq3}) is given by
\begin{equation}
\Phi =\ ^{\circ }\Phi =\varpi (r)\tau _{3}.  \label{ans1b}
\end{equation}%
Explicit values for the functions $\sigma (r),q(r),\omega (r),\varpi (r)$
have been found, for instance, in Ref. \cite{br3} following certain
considerations that the data (\ref{ansatz1}), (\ref{ans1a}) and (\ref{ans1b}%
), i.e. $\left[ \ ^{\circ }\mathbf{g}(r)\mathbf{,}\ ^{\circ }A(r),\ \
^{\circ }\Phi (r)\right] ,$ define physical solutions with diagonal metrics
depending only on radial coordinate. \ A well known diagonal
Schwarz\-schild--de Sitter solution of (\ref{ym1})--(\ref{heq3}) is that
given by data%
\begin{equation*}
\omega (r)=\pm 1,\sigma (r)=1,\phi (r)=0, \varkappa(r)=1-2M/r-\overline{%
\Lambda }r^{2}/3
\end{equation*}%
which defines a black hole configuration inside a cosmological horizon
because $q(r)=0$ has two positive solutions and $M<1/3\sqrt{\overline{%
\Lambda }}.$

\end{document}